\documentclass[a4paper,onecolumn,10pt]{article}
\pdfoutput=1
 
\usepackage[utf8]{inputenc}
\usepackage[english]{babel}
\usepackage[T1]{fontenc}
\usepackage{amsmath}
\usepackage{amssymb}
\usepackage{amsthm}
\usepackage[margin=1in]{geometry}
\usepackage{authblk}
\usepackage{orcidlink}
\usepackage{hyperref}
\usepackage{mathrsfs}
\usepackage{tikz-cd}
\usepackage{listings}
\usepackage{mhchem}
\usepackage[T1]{fontenc}

\usepackage{siunitx}
\usepackage{upgreek}
\usepackage{amsmath, amssymb, amsthm, mathtools, amssymb}
\usepackage{geometry}
\usepackage{enumitem}
\usepackage{tikz-cd}
\usepackage[dvipsnames]{xcolor}
\usepackage{booktabs}
\usepackage{array}
\usepackage{tcolorbox}
\tcbuselibrary{listings}
\usepackage{microtype}
\usepackage{titlesec}
\usepackage{parskip}
\usepackage{fancyhdr}
\usepackage{caption}
\usepackage{bm}
\usepackage{chemformula}
\usepackage{makecell}

\DeclareMathOperator{\FP}{\textit{F}_P}
\newcommand{\LGraph}{\mathbf{LGraph}}
\newcommand{\LGraphP}{\mathbf{LGraph}_P}

\newcommand{\Gstar}{G^*}

\newcommand{\Ce}{{C_e}}
\newcommand{\Hel}{{\mathcal{H}_{\mathrm{el}}}}
\newcommand{\FV}{F_{\!V}}
\newcommand{\OrbMorse}{\mathbf{Orb}^{\mathrm{Morse}}}

\newcommand{\Xseam}{\mathcal{X}_{01}}

\newcommand{\SE}{\mathrm{SE}}       
\newcommand{\Autmu}{\mathrm{Aut}_{\mu}} 
\newcommand{\HilbBundR}{\mathbf{HilbBund}_{\mathbb{R}}} 
\newcommand{\Lzero}{L_0}         

\hypersetup{
  colorlinks=true, linkcolor=blue!60!black,
  citecolor=green!50!black, urlcolor=blue!80!black
}

\theoremstyle{plain}
\newtheorem{theorem}{Theorem}[section]
\newtheorem{proposition}[theorem]{Proposition}

\newtheorem{corollary}[theorem]{Corollary}

\theoremstyle{definition}
\newtheorem{definition}[theorem]{Definition}

\theoremstyle{remark}
\newtheorem{remark}[theorem]{Remark}

\usepackage{longtable}   

\tcbuselibrary{theorems, breakable, skins}

\newtcolorbox{forcingbox}[1][Forcing: why the next level is necessary]{
  colback=orange!7!white, colframe=orange!65!black,
  boxrule=0.9pt, arc=4pt, left=8pt, right=8pt, top=6pt, bottom=6pt,
  title={#1}, fonttitle=\bfseries\small, breakable
}

\newtcolorbox{chembox}[1][For the chemist]{
  colback=green!5!white, colframe=green!50!black,
  boxrule=0.9pt, arc=4pt, left=8pt, right=8pt, top=6pt, bottom=6pt,
  title={#1}, fonttitle=\bfseries\small, breakable
}

\newtcolorbox{mathbox}[1][For the mathematician]{
  colback=blue!5!white, colframe=blue!50!black,
  boxrule=0.9pt, arc=4pt, left=8pt, right=8pt, top=6pt, bottom=6pt,
  title={#1}, fonttitle=\bfseries\small, breakable
}

\newtcolorbox{insightbox}[1][Key insight]{
  colback=violet!5!white, colframe=violet!50!black,
  boxrule=0.9pt, arc=4pt, left=8pt, right=8pt, top=6pt, bottom=6pt,
  title={#1}, fonttitle=\bfseries\small, breakable
}

\newtcolorbox[auto counter, number within=section]{openprob}[1][]{
       colback=red!4!white, colframe=red!60!black,
       fonttitle=\bfseries,
       title={Open Problem~\thetcbcounter\ifstrempty{#1}{}{:~#1}},
       #1}

\newcommand{\Lk}{\mathcal{L}}
\newcommand{\NN}{\mathbb{N}}
\newcommand{\ZZ}{\mathbb{Z}}
\newcommand{\RR}{\mathbb{R}}
\newcommand{\Sp}{\mathcal{S}}          
\newcommand{\Rx}{\mathcal{R}}          
\newcommand{\Mon}{\mathbb{N}[\Sp]}     
\newcommand{\id}{\mathrm{id}}

\newcommand{\Aut}{\mathrm{Aut}}
\newcommand{\coker}{\mathrm{coker}}
\newcommand{\im}{\mathrm{im}}
\newcommand{\rank}{\mathrm{rank}}
\newcommand{\FH}{F_{\!H}}             
\newcommand{\FG}{F_{\!G}}             
\newcommand{\FS}{F_{\!S}}             

\newcommand{\Ob}{\mathrm{Ob}}

\lstdefinelanguage{Haskell2}{
  language        = Haskell,
  sensitive       = true,
  morecomment     = [l]{--},
  morecomment     = [s]{\{-}{-\}},
  morestring      = [b]",
  morekeywords    = {
    data, newtype, type, class, instance, where, let, in, case, of,
    do, forall, module, import, qualified, as, hiding, deriving,
    stock, generic, IO, Map, Set, Int, Double, Bool, String,
    Maybe, Either, Eq, Ord, Show, Generic,
    ParaMorphism, SimConfig, Population, PhysicsParams,
    Reaction3, SimState, Species, Nucleotide, DepMap, Network3,
    Bag, Complex, Reaction0, GillespieEvent,
    true, false, Nothing, Just, Left, Right
  },
  keywordstyle    = \color{blue!80!black}\bfseries,
  commentstyle    = \color{gray!60}\itshape,
  stringstyle     = \color{teal},
  basicstyle      = \ttfamily\small,
  breaklines      = true,
  keepspaces      = true,
  columns         = flexible,
  showstringspaces= false,
  literate        =
    {->}{{$\to$}}2
    {=>}{{$\Rightarrow$}}2
    {<-}{{$\leftarrow$}}2
    {>=}{{$\geq$}}2
    {<=}{{$\leq$}}2
    {/=}{{$\neq$}}2
    {>>}{{$\gg$}}2
    {>>=}{{$\gg\!\!=$}}3
    {\\}{{$\lambda$}}1
    {forall}{{$\forall$}}1
    {alpha}{{$\alpha$}}1
    {beta}{{$\beta$}}1
    {->}{{$\to$}}2
}

\definecolor{codegray}{HTML}{263238}   
\definecolor{codebg}{HTML}{FAFAFA}     

\usepackage{tcolorbox}
\tcbuselibrary{listings,breakable,skins}

\newtcblisting{haskellbox}[1][]{%
  enhanced,
  breakable,
  listing only,
  listing engine=listings,
  listing options={
    language=Haskell2,
    basicstyle=\ttfamily\small,
    xleftmargin=2pt,
    aboveskip=0pt,
    belowskip=0pt,
  },
  colback=codebg,
  colframe=codegray!60,
  coltitle=white,
  colbacktitle=codegray!85,
  fonttitle=\bfseries\small\sffamily,
  title={Haskell},
  attach boxed title to top left={yshift=-2mm,xshift=4mm},
  boxed title style={sharp corners,colback=codegray!85},
  sharp corners=south,
  left=4pt,right=4pt,top=8pt,bottom=4pt,
  #1
}

\newtcblisting[]{haskellsnip}[1][]{%
  breakable,
  listing only,
  listing engine  = listings,
  listing options = {
    language    = Haskell2,
    basicstyle  = \ttfamily\footnotesize,
    xleftmargin = 2pt,
    aboveskip   = 0pt, belowskip = 0pt,
  },
  colback  = codebg,
  colframe = codegray!40,
  sharp corners,
  left = 3pt, right = 3pt, top = 3pt, bottom = 3pt,
  #1
}

\providecommand{\Petri}{\mathbf{Petri}}
\providecommand{\PermCat}{\mathbf{PermCat}}
\providecommand{\BR}{\mathrm{B}\RR}            
\providecommand{\gen}{\mathfrak{g}_S}          
\providecommand{\Bg}{\mathrm{B}\mathfrak{g}_S} 
\providecommand{\AutP}{\Aut_{\Petri}}
\providecommand{\Symm}{\mathrm{Sym}}
 
\title{Category Theoretic Framework for Chemistry I: \\ a Tower of Chemistry}
 
\author[1]{Kyunghoon Han\thanks{\texttt{kyunghoon.h@gmail.com}}\,\orcidlink{0000-0001-7670-300X}}
\affil[1]{Department of Physics and Materials Science, University of Luxembourg, Luxembourg}
 
\date{}
 
\begin{document}
 
\maketitle
 
\footnotetext[1]{The author thanks Dr. David Luong of Carleton University for his thorough reading of the theory and his valuable comments, Mr. Giuseppe Mansi of the University of Luxembourg for his helpful comments on Section 2.4, and Dr. Gregory Fonseca of the University of Luxembourg for his valuable comments on the abstract.}

\begin{abstract}
Many laws of chemistry are exact within a limited scope and acquire a separate caveat outside it,
and the caveats are usually treated as unrelated.
This work argues that they share one cause.
Each caveat marks a point where a question is asked of a description too coarse to answer it:
the question belongs to a richer level of structure than the description carries.
To make these levels explicit,
the paper builds a tower of categories over the free symmetric monoidal category of a Petri net,
the simplest categorical presentation of a reaction network.
The levels, from the bottom up, are stoichiometry, thermochemistry, equilibrium, reaction kinetics,
reaction mechanism, molecular geometry, and electronic structure.
Each adds one kind of chemical content over the level below,
and a forgetful functor runs back down.
One question runs through the tower:
\emph{what can a level express that the level below cannot}?
Answering it places each measurable quantity at the level where it lives,
and identifies the content the levels below could record but not account for.
The method is then turned on with two pieces of known chemistry.
It recasts a classical criterion for when a reaction network has a unique stable equilibrium,
the deficiency-zero theorem,
as the rigidity of a single forgetful fibre.
It also follows one familiar reaction up the tower:
the ring opening that the Woodward–Hoffmann rules govern.
At each level,
from stoichiometry to electronic structure,
the reaction becomes a distinct categorical object.
\end{abstract}

\section{Introduction}
\label{sec:intro}

Chemistry is a science of rules with exceptions.
Hess's law~\cite{Hess1840} adds reaction enthalpies along any path,
except that the sum depends on a choice of reference unless the enthalpy is a state function in the sense of Gibbs~\cite{Gibbs1875};
detailed balance~\cite{Lewis1925} constrains equilibrium rate constants,
except that the constraints split into independent families on networks with cycles,
as Wegscheider observed~\cite{Wegscheider1901} and as the modern theory of detailed-balanced and complex-balanced systems makes precise~\cite{CraciunJinYu2020};
the Woodward--Hoffmann rules~\cite{WoodwardHoffmann1965, WoodwardHoffmann1969} predict the stereochemistry of a pericyclic reaction,
except that they invert between thermal and photochemical regimes,
an inversion now read as topological~\cite{DaggettYangLiuMuechler2024}.
Each rule is taught with a list of caveats to be learned separately.
The thesis organising this paper is that the caveats are not accidents of chemistry but artefacts of description:
each arises when a question that belongs to a richer level of mathematical structure is posed in the vocabulary of a poorer one.
A heat asked to be path-independent is a question about coboundaries posed about a mere additive label.
An orbital-symmetry selection rule asked to be regime-independent is a question about the topology of an electronic bundle posed about a scalar energy,
a geometric phase whose chemical effect Longuet-Higgins predicted~\cite{LonguetHiggins1975} and recent scattering experiments have observed as an interference shift~\cite{Yuan2020}.
Make the levels explicit,
with category theory as the natural language in which to do so~\cite{Akitsu2023},
and each exception becomes the visible signature of a structure the working vocabulary could not express.

This paper makes the levels explicit for reaction networks,
as a graded tower of categories.
The point of entry is one that will be familiar to readers of this journal:
a reaction network is already a categorical object.
Following Meseguer and Montanari~\cite{MeseguerMontanari},
a Petri net presents a free symmetric monoidal category whose objects are multisets of species and whose morphisms are the reactions and their formal composites and tensors;
the chemistry of a network is then a sequence of functors out of this free category.
Baez and Pollard~\cite{BaezPollard} equip the kinetic layer of this picture with the master equation and make open networks compose by pushout,
in the tradition of decorated and structured cospans~\cite{Fong2016Thesis, baez2019structured}
and of open Petri nets~\cite{BaezMaster2020}.
The present work takes the same syntactic base and asks a different question:
not how networks are composed horizontally,
but how the \emph{depth} of chemical description stacks vertically.
Above the bare network, sit thermochemistry, equilibrium, kinetics, mechanism, molecular geometry,
and electronic structure,
and the claim is that these are not six incomparable formalisms but six levels of one object,
each obtained from the one below by adjoining exactly one kind of data and each equipped with a forgetful functor back down.

\paragraph{The construction in one paragraph.}
Over a Petri net $P=(\Sp,\Rx,s,t)$ the tower is a diagram of categories $\Lk_0(P),\Lk_1(P),\dots,\Lk_6(P)$ with forgetful functors descending it.
The base $\Lk_0(P)$ is the free symmetric monoidal category on the net.
The lower levels decorate it with numerical functors, reaction enthalpy at $\Lk_1$,
free energy and a reversal involution at $\Lk_2$, mass-action rate data at $\Lk_3$,
each a functor out of $\Lk_0(P)$ into a one-object target,
so that the level is a pair (network, decoration) and the forgetful functor drops the decoration.
The mechanistic level $\Lk_4(P)$ is not a further decoration but a change of underlying category:
it replaces the transitions of the net by double-pushout graph rewriting on molecular structures,
in the adhesive-category sense~\cite{LackSobocinski},
each rewrite labelled by the reaction it realises.
A short stereochemical half-step refines it,
and above it the geometric level $\Lk_5(P)$ assigns each molecule a potential energy surface on a configuration orbifold,
while the electronic level $\Lk_6(P)$ records the topology of the electronic states over that surface.
A seventh,
fully quantum level would close the tower in principle but is left open here,
for reasons given below.

\paragraph{One question, several answers.}
What makes the tower a single structure rather than a list is that one question is asked at every step:
what can a level express that the level below cannot?
Each level is a category whose structure-preserving self-equivalences form an automorphism group,
and each forgetful functor $U_k\colon\Lk_k(P)\to\Lk_{k-1}(P)$ induces by restriction a homomorphism $\varphi_k\colon\Aut(\Lk_k(P))\to\Aut(\Lk_{k-1}(P))$.
The new content of level $k$ is what this homomorphism fails to capture,
and across the tower that failure takes three distinct forms.
For the decorated levels $\Lk_1$ to $\Lk_3$ it is a \emph{broken symmetry}:
$\varphi_k$ is not surjective, its cokernel $\coker(\varphi_k)$ is a pointed set,
and a non-trivial element is a base relabelling that the new decoration breaks,
a pair of reactions the level below is forced to identify and the level above separates.
For the mechanistic level $\Lk_4$ the underlying category changes rather than acquiring a decoration,
and the new content is certified not by a surviving symmetry but by an \emph{inequality of derivations}:
one effective reaction is realised by graph-rewriting derivations of different length,
a distinction no lower observable records.
For the geometric and electronic levels $\Lk_5$ and $\Lk_6$ it is an \emph{object refinement}:
a single lower object splits into distinct upper ones,
two isotopologues over one bond graph, two electronic lifts over one potential surface,
and the new datum is what the lower level could record as a free number but could not derive.
The three answers are different,
but the question is one,
and what it locates changes as one climbs:
a coboundary obstruction at energetics, a reversal involution at equilibrium,
a derivation length at reaction mechanism, a mass-weighted metric at geometry,
a $\ZZ/2$ Berry class at electronic structure.
This is the sense in which the levels are forced and not merely posited:
each is the minimal extension that breaks a symmetry, splits a morphism,
or refines an object the level below could not tell apart.

\paragraph{What this paper proves, and what it imports.}
This is a self-contained categorical paper,
extracted from and consistent with a longer manuscript~\cite{Han2026} but standing on its own.
It carries out the construction and the forcing diagnostic \emph{in full} for the base through the mechanistic level,
$\Lk_0$ to $\Lk_4$:
the base automorphism group is computed, its semidirect-product structure proved,
the broken-symmetry cokernel for each decorated level identified as a coset,
and the derivation-length separation at the mechanistic level established directly.
For the geometric and electronic levels the target categories are more elaborate,
a category of Morse triples and a category of electronic bundles carrying a $\ZZ/2$ sign class,
and here the paper changes register.
It states their constructions,
identifies the datum each level forces and the object refinement that certifies it,
and imports from~\cite{Han2026} the proofs that those targets are well-defined and that the corresponding cokernels are non-trivial.
The boundary between what is proved here and what is imported is marked explicitly wherever it is crossed.
The seventh level is genuinely open:
this paper does not construct it, and states it only as a horizon.
The paper closes with a placement procedure that turns the tower into a working method.
Given a physical quantity,
the procedure returns the level at which the quantity lives and what the tower then guarantees about it.
It is run twice: once to recast Feinberg's Deficiency Zero Theorem~\cite{HornJackson, Feinberg} as the rigidity of a single forgetful fibre,
exposing two textbook misuses as level-mismatch errors,
and once on the cyclobutene--butadiene electrocyclic reaction,
the standard Woodward--Hoffmann system~\cite{WoodwardHoffmann1965},
carried up all six objects it becomes.

\paragraph{Prerequisites and conventions.}
The reader is assumed comfortable with symmetric monoidal categories, functors,
and free constructions at the level of~\cite{MacLane1998, Leinster2014};
chemical terms are glossed on first appearance,
so no chemistry background is required.
All categories of decorations are strict,
and ``functor'' between monoidal categories means strict symmetric monoidal functor unless stated otherwise.
A \emph{Petri net}
is $P=(\Sp,\Rx,s,t)$ with $\Sp$ a finite set of species, $\Rx$ a finite set of
reactions, and $s,t\colon\Rx\to\Mon$ source and target maps into the free
commutative monoid $\Mon=\NN[\Sp]$ on the species; this is the sole input to the
tower, and every level is functorial in it. Automorphisms are taken in the relevant
category of structured objects, so that $\Aut(\Lk_k(P))$ means the
presentation-preserving self-equivalences of the level, and cokernels of group
homomorphisms with non-normal image are read as pointed sets of cosets. The next
section fixes the base category and develops the lower tower in detail;
the reader already fluent in the Petri-net as \emph{free symmetric monoidal category} dictionary may begin there and refer back to this introduction only for the diagnostic.
\section{Theory: the tower of chemistry}
\label{sec:theory}

A single chemical reaction can be described at many depths.
The same transformation is an entry in a stoichiometric table,
a heat of reaction, a rate constant, a rearrangement of bonds, a motion across a potential energy surface,
or a passage between electronic states, depending on the level of detail one chooses to track.
Chemistry supplies a separate formalism for each.
This section builds these descriptions as the levels of one structure,
a tower of categories, all generated from a single combinatorial input.

Chemistry can be described by analogy to mathematical functions:
the input is one or more chemical species, the function is a chemical reaction,
and the output is again one or more chemical species.
Underpinning all of chemistry, beginning at its most basic level,
is the constraint that the number of atoms of each element in the input must equal the number in the output.
This kind of bookkeeping is called \emph{stoichiometry}.
In the combustion of hydrogen,
\[
\ch{2 H2 + O2 -> 2 H2O},
\]
two molecules of hydrogen and one of oxygen are consumed and two of water is produced,
and every hydrogen and oxygen atom is accounted for on both sides.
The input here is not a set of species, but a \emph{multiset}:
two copies of $\ch{H2}$ are consumed, not one.
The mathematical object that exactly records this datum in the simplest sense
(species, reactions, and the multiset each reaction consumes and produces)
is a \emph{Petri net}, which is defined as follows.

\begin{definition}[Petri net]
\label{def:petri}
A \emph{Petri net} is a tuple $P=(\Sp,\Rx,s,t)$ consisting of a finite set $\Sp$ of \emph{species},
a finite set $\Rx$ of \emph{reactions}, and a pair of \emph{source} and \emph{target} maps
\[
s,t\colon\Rx\longrightarrow\Mon ,
\]
where $\Mon=\NN[\Sp]$ is the free commutative monoid on $\Sp$, written additively.
For a reaction $r\in\Rx$,
the multiset $s(r)$ is its \emph{input} and $t(r)$ its \emph{output}.
A \emph{morphism of Petri nets} $P\to P'$ is a pair of maps $\Sp\to\Sp'$ and $\Rx\to\Rx'$ commuting with the source and target maps;
the resulting category is denoted $\Petri$.
\end{definition}

An element of $\Mon$ is a finite multiset of species, a formal nonnegative-integer
combination of the species such as $2\,\ch{H2}+\ch{O2}$; the free commutative monoid
is precisely what supplies both the coefficient and the unordered sum. Two features
of the definition are worth isolating before construction proceeds, each
illustrated by a standard example.

The first is why the codomain of $s,t$ must be $\Mon$ instead of $\Sp$.
In the hydrogen combustion above,
the input $2\,\ch{H2}+\ch{O2}$ is a multiset with a repeated species and is named by no single element of $\Sp$:
it is the multiset of reactants, not any one of these molecules, that a reaction consumes.
A set-valued source map could not record the coefficient $2$,
so the free commutative monoid is forced by chemistry.

The second is that $\Rx$ is a set of \emph{labelled} reactions,
not merely a relation between multisets,
so that distinct reactions may share an input and an output.
The Michaelis--Menten mechanism of enzyme catalysis,
\[
\ch{E + S <=> ES -> E + P},
\]
illustrates this together with reversibility:
an enzyme $\ch{E}$ and a substrate $\ch{S}$ bind reversibly to a complex $\ch{ES}$,
which then releases the enzyme and a product $\ch{P}$.
Here $\Sp=\{\ch{E},\ch{S},\ch{ES},\ch{P}\}$ and $\Rx$ have three reactions:
binding, with source $\ch{E}+\ch{S}$ and target $\ch{ES}$;
its reverse, with source and target exchanged; and catalysis,
with source $\ch{ES}$ and target $\ch{E}+\ch{P}$.
Forward and reverse binding reactions are distinct elements of $\Rx$ sharing an input--output pair up to direction,
and recording reversibility means keeping them as separate labels rather than as a single undirected edge.

Throughout,
$\Sp$ lists molecular species and $r\in\Rx$ records that $s(r)$ is consumed and $t(r)$ produced; mathematically,
nothing beyond the finite datum of Definition~\ref{def:petri} is used,
and each chemical term is glossed on first appearance.

Over such a net, the tower is built:
\begin{equation}
\label{eq:fulltower}
\begin{tikzcd}[column sep=1.0em, row sep=1.2em]
\Lk_7(P) \arrow[r, dotted, "\text{BO}"] &
\Lk_6(P) \arrow[r, "U_6"] &
\Lk_5(P) \arrow[r, "U_5"] &
\Lk_4(P) \arrow[r, "U_4^{\Lk_3}"] \arrow[rrrr, bend right=22, "U_4"'] &
\Lk_3(P) \arrow[r, "U_3"] &
\Lk_2(P) \arrow[r, "U_2"] &
\Lk_1(P) \arrow[r, "U_1"] &
\Lk_0(P)
\end{tikzcd}
\end{equation}
The base $\Lk_0(P)$ is the bare network,
the free symmetric monoidal category on $P$ (Section~\ref{sec:L0}).
Each higher level adds one layer of chemical content,
and the labelled arrow back down forgets it.
The lower chain decorates the network with numbers: reaction energetics at $\Lk_1$,
equilibrium at $\Lk_2$, kinetics at $\Lk_3$ (Section~\ref{sec:dict}).

The inclusion of reaction mechanisms, at the level $\Lk_4$,
cannot be understood by decoration as shown in the lower levels up to $\Lk_3$.
Instead, each reaction is replaced by a rewrite of molecular structures,
a double-pushout graph rewrite that records which bonds break and form,
labelled by the reaction it realises (Section~\ref{sec:L4}).
This is the branch point of the diagram.
Two forgetful functors leave $\Lk_4$: one, $U_4$, which lands directly in the base $\Lk_0(P)$;
the other, $U_4^{\Lk_3}$, which drops into the decorated chain.
The two routes to $\Lk_0(P)$ agree, because $U_4$ factors through every decorated level \cite[Prop.~6.31]{Han2026}.

Above $\Lk_4$ the tower turns to the structure of individual molecules.
A short stereochemical half-step $\Lk_{4.5}$ refines the mechanistic level (end of Section~\ref{sec:L4}).
Upon it,
the geometric level $\Lk_5(P)$ assigns to each species a potential energy surface (Section~\ref{sec:L5}),
and the electronic-structure level $\Lk_6(P)$ records the topology of the electronic states above that surface (Section~\ref{sec:L6}).
 
The dotted arrow of Equation \eqref{eq:fulltower} differs from the others:
it is not part of the tower's construction.
It stands for the Born--Oppenheimer separation of electronic and nuclear motion.
Above $\Lk_6$ sits a full quantum level $\Lk_7(P)$,
the description of all-particles in which electrons and nuclei are treated together.
The Born--Oppenheimer approximation is the passage down from $\Lk_7(P)$ to the electronic structure level of $\Lk_6$,
the description the lower tower actually works with.
That top level and the limit that descends from it are left open here (Section~\ref{sec:roadmap}).
 
For levels through $\Lk_0$ to $\Lk_4$, this paper performs the forcing diagnostic itself:
the base automorphism group is computed, and the force of the class for each level is identified as a cokernel or,
at $\Lk_4$, as an inequality of derivations.
Some supporting constructions in $\Lk_4$,
such as the molecular-graph label algebra, are stated rather than reproduced,
with the details in the monograph \cite{Han2026}.
The geometric and electronic levels above have more elaborate target categories.
For these, the paper states each construction and identifies the datum that the level below cannot record,
while importing the detailed proofs from the monograph \cite{Han2026}.
\subsection{The base: the free symmetric monoidal category on a network}
\label{sec:L0}

That a Petri net $P$ freely generates a symmetric monoidal category $\Lk_0(P)$ is due to Meseguer and Montanari \cite{MeseguerMontanari};
the construction is recalled in the skeletal form convenient here,
isolating the universal property, since every later level is built on it.

The choice of structure is forced by the chemistry that it must record.
At the level of pure stoichiometry, before energy, rate, or geometry,
a reaction network carries exactly three structural features.
Reactions \emph{compose in sequence}:
if one reaction produces what another consumes, the two form a composite.
Independent reactions \emph{run in parallel}:
two reactions in disjoint mixtures combine into a single joint process.
A mixture \emph{has no intrinsic order}: 
$A+B$ and $B+A$ are the same physical state, not merely isomorphic ones. 
These three features ---
sequential composition, parallel composition, and order-independence of mixtures
--- 
are precisely the data of a symmetric monoidal category:
composition of morphisms, the tensor product, and the symmetry.
The order-independence is literal equality of states rather than a chosen isomorphism, 
which is why the appropriate form is \emph{strict}:
the object monoid is strictly commutative, 
and $A+B$ and $B+A$ are one object.
Taking the network as generators and imposing no relations beyond those these features demand is the free construction;
assigning numerical or geometric content to a network, at every higher level, 
is then exactly the giving of a functor out of $\Lk_0(P)$.

\begin{definition}[Stoichiometric category]
\label{def:L0}
$\Lk_0(P)$ is the free permutative category on $P$: the strict symmetric monoidal
category, with strictly commutative skeletal object monoid, presented by
\begin{enumerate}[label=(\roman*),leftmargin=2.2em]
\item \textbf{objects} the commutative monoid $\Mon$, with $u\otimes v:=u+v$ and
unit $0$ (so $u\otimes v=v\otimes u$ is literally one object, not an isomorphism
class);
\item \textbf{generating morphisms} the reactions $r\colon s(r)\to t(r)$ for
$r\in\Rx$, together with the symmetries $\sigma_{u,v}$ of the permutative
structure --- which, since $u+v=v+u$ is a single object, are
\emph{endomorphisms} $u+v\to u+v$;
\item \textbf{morphisms} all formal composites and tensors of generators modulo
the permutative congruence (the symmetric strict monoidal axioms), with
\emph{no} relation imposed between distinct reaction labels, even when they share
source and target.
\end{enumerate}
This is functorial in $P$:
a morphism of Petri nets 
(a pair of maps $\Sp\to\Sp'$, $\Rx\to\Rx'$ commuting with $s,t$)
induces a strict symmetric monoidal functor,
so $\Lk_0$ is a functor $\Petri\to\PermCat$,
where $\Petri$ is the category of Petri nets and $\PermCat$ the category of small permutative categories with strict symmetric monoidal functors.
\end{definition}

The clause that distinct labels are never identified (iii)
is the one that makes $\Lk_0$ a faithful record of the network rather than of its net effect: 
two reactions with the same input and output multiset remain different morphisms.
The universal property says these labels, and the species, are the only free data.

\begin{theorem}[Universal property of $\Lk_0(P)$ \cite{MeseguerMontanari}]
\label{thm:univ}
For every strict symmetric monoidal category $C$ with strictly commutative object
monoid, restriction to generators is a bijection
\[
\{\text{strict symmetric monoidal }F\colon\Lk_0(P)\to C\}
\;\xrightarrow{\ \sim\ }\;
\{(g,h)\},
\qquad F\mapsto(F|_{\Sp},\,F|_{\Rx}),
\]
natural in $C$ and in $P$, where $g\colon\Sp\to\Ob(C)$ assigns an object to each
species and $h$ assigns to each $r\in\Rx$ a morphism $\bar g(s(r))\to\bar g(t(r))$,
with $\bar g\colon\Mon\to\Ob(C)$ the monoid homomorphism extending $g$.

Concretely: every such $F$ restricts to the pair $(g,h)=(F|_{\Sp},F|_{\Rx})$, and
every pair $(g,h)$ of the stated form arises from a unique $F$. Naturality in $C$
means a strict symmetric monoidal $\phi\colon C\to C'$ carries the pair of $F$ to the
pair of $\phi\circ F$, by post-composing $g$ and $h$ with $\phi$; naturality in $P$
is the analogous statement for a Petri-net morphism. Thus $\Lk_0(P)$ is free on the
network: the generator data $(g,h)$ are the only choices, and strict functoriality
and monoidality force everything else.
\end{theorem}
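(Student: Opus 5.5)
The plan is to prove the bijection by constructing an explicit inverse, reading $\Lk_0(P)$ as the quotient of a term syntax by the permutative congruence of Definition~\ref{def:L0}(iii). First I check that restriction is well defined. Any strict symmetric monoidal $F$ satisfies $F(u+v)=F(u)\otimes F(v)$ and $F(0)=I$ on objects, so $F|_{\Mon}$ is a monoid homomorphism determined by $g=F|_{\Sp}$. Since $\Mon$ is free, it equals $\bar g$. Hence each $F(r)$ is a morphism $\bar g(s(r))\to\bar g(t(r))$, and the pair $(F|_{\Sp},F|_{\Rx})$ is of the stated form.

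Next I build the inverse. Given $(g,h)$, define $F$ on objects by $\bar g$. This lands in a strictly commutative object monoid, so no choices arise. On morphisms, I first define $\tilde F$ on the free term algebra of morphism expressions, whose generators are the reactions $r$ and the symmetries $\sigma_{u,v}$, together with identities, composites and tensors. The recursive clauses are
$\tilde F(r)=h(r)$, $\tilde F(\sigma_{u,v})=\sigma^{C}_{\bar g u,\bar g v}$, $\tilde F(\id_u)=\id_{\bar g u}$, $\tilde F(a\circ b)=\tilde F(a)\circ\tilde F(b)$, and $\tilde F(a\otimes b)=\tilde F(a)\otimes\tilde F(b)$.
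Typing is preserved because $\bar g$ is a homomorphism. The symmetry $\sigma_{u,v}$ is an endomorphism of $u+v$, and its image $\sigma^C_{\bar g u,\bar g v}$ is an endomorphism of $\bar g u\otimes\bar g v=\bar g v\otimes \bar g u$, which exists precisely because $C$ has a strictly commutative object monoid.

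The main obstacle is showing that $\tilde F$ descends to the quotient. The congruence on terms is generated by a fixed list of equations:
\begin{enumerate}[label=(\alph*),leftmargin=2.2em]
\item associativity and unit laws for $\circ$ and $\otimes$;
\item the interchange law;
\item naturality of $\sigma$ with respect to all morphisms, including the generators $r$;
\item the hexagon/strictness identity $\sigma_{u,v+w}=(\id_v\otimes\sigma_{u,w})\circ(\sigma_{u,v}\otimes\id_w)$;
\item the involution $\sigma_{v,u}\circ\sigma_{u,v}=\id$;
\item $\sigma_{u,0}=\id$.
\end{enumerate}
Each generating equation holds after applying $\tilde F$, because the corresponding axiom holds in $C$ and $\tilde F$ commutes with the term constructors. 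Naturality of $\sigma^C$ must be checked on $h(r)$, and it holds because $\sigma^C$ is natural on all of $C$. The kernel of $\tilde F$ is a congruence, so it contains the generated one. This yields a strict symmetric monoidal $F$ extending $(g,h)$. Here clause (iii) matters: no relations between distinct reaction labels are imposed, so no compatibility condition on $h$ is needed.

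For uniqueness, any $F'$ with the same restriction agrees with $F$ on all generators. It is also forced to send $\sigma_{u,v}$ to $\sigma^C$, since strict symmetric monoidal functors preserve symmetries, and it respects $\circ$ and $\otimes$. Since every morphism is the class of some term, $F'=F$ by induction on terms. The two constructions are therefore mutually inverse. Naturality in $C$ is immediate: for $\phi$ strict symmetric monoidal, $\phi\circ F$ restricts to $(\phi\circ g,\phi\circ h)$ on generators. Naturality in $P$ follows the same way. A Petri-net morphism $(\alpha,\beta)$ induces $\Lk_0(\alpha,\beta)$ by the existence part applied to the generator data $(\alpha,\ \beta$ viewed as morphisms of $\Lk_0(P'))$, and precomposition then corresponds to precomposing $(g,h)$ with $(\alpha,\beta)$, again by uniqueness.
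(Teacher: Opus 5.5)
Your proposal is correct and follows essentially the same route as the paper's proof. Objects are forced to be $\bar g$, and the values on reactions and symmetries are fixed. The assignment extends to formal composites and tensors and descends to the permutative congruence because every defining relation holds in $C$. Uniqueness follows from agreement on generators, and naturality is post- and precomposition.

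Your version is only more explicit, and none of the additions changes the argument: you use a typed term algebra with the kernel-is-a-congruence argument, and you enumerate the generating relations (to which $\id_u\otimes\id_v=\id_{u+v}$ should be added). You also observe that $\sigma^{C}_{\bar g u,\bar g v}$ is an endomorphism only because $\Ob(C)$ is strictly commutative, and you obtain $\Lk_0(\alpha,\beta)$ from the existence half rather than assuming functoriality of $\Lk_0$.
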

 
\begin{proof}
To say $\Lk_0(P)$ is free on $P$ is to say that this restriction map is a
bijection for every $C$ and that the bijection is natural. 
The three ingredients are verified in turn: injectivity, surjectivity, and naturality.

\emph{Injectivity.}
Suppose $F,F'$ restrict to the same $(g,h)$. 
On objects, 
$F$ and $\bar g$ are monoid homomorphisms $\Mon\to\Ob(C)$ agreeing on the free generators $\Sp$, 
hence equal, and likewise $F'=\bar g$; so $F$ and $F'$ agree on objects.
On morphisms they agree on the reaction generators (both equal $h$) and
on the symmetries 
(both forced, $F(\sigma_{u,v})=\sigma^{C}_{Fu,Fv}=F'(\sigma_{u,v})$ by strict symmetry).
Every morphism of $\Lk_0(P)$ is, by the presentation,
a class of formal composites and tensors of these generators;
since $F$ and $F'$ are functors preserving $\circ$ and $\otimes$ and agree on generators,
they agree on all such composites, hence $F=F'$.
 
\emph{Surjectivity.}
Given a pair $(g,h)$ of the stated form, 
define a candidate $F$ as follows.
On objects,
set $F:=\bar g$, the monoid homomorphism extending $g$;
this is forced if $F$ is to be strict monoidal, 
since $\bar g(u+v)=\bar g(u)+\bar g(v)$ and $\bar g(0)=0$.
On the reaction generators set $F(r):=h(r)$, 
which by hypothesis has the required source and target $\bar g(s(r))\to\bar g(t(r))$;
on the symmetry generators set $F(\sigma_{u,v}):=\sigma^{C}_{\bar g(u),\bar g(v)}$;
and on a formal composite or tensor of generators,
let $F$ be the corresponding composite or tensor of their images in $C$.
 
It remains to check that this assignment descends to the morphisms of $\Lk_0(P)$,
which are congruence classes of such formal expressions.
Two expressions are identified in $\Lk_0(P)$ precisely when one is carried to the other by the defining relations:
the bifunctoriality and coherence axioms of a symmetric strict monoidal category,
together with the strict-commutativity identification $u+v=v+u$ on objects.
Each holds in $C$: the coherence axioms because $C$ is strict symmetric monoidal,
and the identification $u+v=v+u$ because the object monoid $\Ob(C)$ is strictly commutative,
which is the hypothesis on $C$. 
Hence congruent expressions receive equal values and $F$ is well defined on $\Lk_0(P)$.
It is moreover a strict symmetric monoidal functor:
it preserves identities and composition because it was defined to commute with $\circ$;
it preserves $\otimes$ and the unit because $\bar g$ is a monoid homomorphism on objects and $F$ commutes with $\otimes$ on morphisms;
and it preserves the symmetry by the definition on $\sigma_{u,v}$.
By construction $F|_{\Sp}=g$ and $F|_{\Rx}=h$, so $(g,h)$ is its restriction.
 
\emph{Naturality.}
For a strict symmetric monoidal $\phi\colon C\to C'$ and $F\colon\Lk_0(P)\to C$,
the pair of $\phi\circ F$ is $(\Ob(\phi)\circ F|_{\Sp},\,\phi\circ F|_{\Rx})$,
which is the pair of $F$ with $g$ and $h$ post-composed by $\phi$;
so restriction commutes with the action of $\phi$, giving naturality in $C$.
For naturality in $P$, a Petri net morphism $p\colon P\to P'$ acts on generators through its species and reaction maps and induces $\Lk_0(p)\colon\Lk_0(P)\to\Lk_0(P')$ by functoriality of $\Lk_0$;
precomposition with $\Lk_0(p)$ on functors corresponds to precomposition with $p$ on generator data,
so restriction commutes with the action of $p$ as well.
The bijection is therefore natural in both arguments,
which is the asserted universal property.
\end{proof}

\begin{remark}[On the hypothesis on $C$]
\label{rem:strictcomm}
The strict-commutativity hypothesis on $C$ is forced by the construction of $\Lk_0(P)$,
whose object monoid is strictly commutative: $u+v$ and $v+u$ are one object,
not two isomorphic ones.
A strict monoidal functor must send that single object to a single object of $C$,
so $\Ob(C)$ must be commutative as well.
The hypothesis is therefore not a restriction in practice.
The targets used to build the higher levels (Section~\ref{sec:dict}) are 
one-object categories such as $\BR$, where strict commutativity holds trivially;
and the target used to compute the automorphisms of $\Lk_0(P)$ (Section~\ref{sec:forcing}) is $\Lk_0(P)$ itself,
whose object monoid is strictly commutative by construction.
In every application the hypothesis is met.
\end{remark}

Theorem~\ref{thm:univ} is used in both directions.
To build a functor \emph{out of} $\Lk_0(P)$,
it suffices to name an object for each species and a morphism for each reaction;
this is how the higher levels are constructed (Section~\ref{sec:dict}).
To constrain the functors \emph{from $\Lk_0(P)$ to itself},
the same bijection pins down every automorphism by its action on generators;
this is how the symmetries of the base are computed (Section~\ref{sec:forcing}).
Both uses lie ahead.
The base also carries numerical invariants that need none of this categorical machinery,
read directly off the network by linear algebra, and these come first.

\paragraph{Stoichiometry, conservation laws, deficiency.}
Order $\Sp$ and $\Rx$ and let $N\in\ZZ^{|\Sp|\times|\Rx|}$ be the
\emph{stoichiometric matrix}, with $r$-th column the net change $t(r)-s(r)$. The
\emph{complexes} are the distinct multisets appearing as some $s(r)$ or $t(r)$;
write $n$ for their number and $\ell$ for the number of connected components
(\emph{linkage classes}) of the directed graph on complexes with an edge
$s(r)\to t(r)$ for each reaction. With $\rho:=\rank N$ the dimension of the
\emph{stoichiometric subspace} $\im N\subseteq\RR^{|\Sp|}$, the
\emph{deficiency} is the integer
\begin{equation}
\delta\;=\;n-\ell-\rho\;\ge\;0 .
\label{eq:deficiency}
\end{equation}
Each term is read off the Petri presentation $P$, so $\delta$ is invariant under
presentation-preserving isomorphisms of $P$; it is not claimed to be invariant
under arbitrary symmetric monoidal equivalence of the underlying category. Both
$\delta$ and the conservation laws below are invariants of the base $\Lk_0(P)$
itself, not data of any higher level: the decorations of Section~\ref{sec:dict}
are functors that extend $\Lk_0(P)$ upward, whereas $\delta$ and the conservation
laws are already determined by $\Lk_0(P)$ alone. A conservation law is an additive
quantity on species unchanged by every reaction: a monoid homomorphism
$w\colon\Mon\to(\RR,+)$ with $w(s(r))=w(t(r))$ for all $r\in\Rx$, equivalently a
covector $w\in\RR^{|\Sp|}$ with $N^{\!\top}w=0$. Such a $w$ is a homomorphism on
the object monoid of $\Lk_0(P)$ that coequalises the source and target maps
$s,t\colon\Rx\rightrightarrows\Mon$; it is therefore \emph{object} data. This is a
second distinction, internal to the base and orthogonal to the first: a
conservation law lives on the objects of $\Lk_0(P)$, while the level-extending
functors of Section~\ref{sec:dict} assign their data to its \emph{morphisms}. What
\eqref{eq:deficiency} does \emph{not} determine is any rate or equilibrium: those
are the data of the higher levels, and the interaction between $\delta$ and the
$\Lk_3$ datum is the content of the deficiency theorems, treated in the
application section.
\subsection{Energetics, equilibrium, kinetics as functors}
\label{sec:dict}

The three lower levels share a form:
adjoin to the fixed base a strict symmetric monoidal functor into a one-object target,
with the forgetful operation $U_k$ discarding that functor and returning the bare base.
By Theorem~\ref{thm:univ} each such functor is freely specified by one value per reaction, 
so a level of this kind is a labelling of reactions by elements of an algebraic structure, 
with composites sent to products.
What distinguishes the levels is which structure,
and what its functor axioms then assert about chemistry.

\subsubsection{$\Lk_1$: energetics, additivity, and Hess path-independence}
\label{sec:L1}

\paragraph{$\Lk_0$ cannot see enthalpy.}
Hess's law, the oldest quantitative law of thermochemistry, 
is usually stated as constant heat summation:
the enthalpy change of a reaction is the sum of the enthalpy changes of any sequence of steps that accomplishes it,
\[
  \Delta H=\Delta H_1+\Delta H_2+\cdots,
\]
and is the same whatever sequence is chosen.
As ordinarily taught, the law fuses two claims:
that enthalpies \emph{add along a chosen route},
and that the total is \emph{independent of the route},
the latter being the statement that enthalpy is a state function.

The failure of bare stoichiometry appears already in carbon combustion:
\[
  \ch{C(graphite) + O2(g) -> CO2(g)}
  \qquad\text{and}\qquad
  \ch{C(diamond) + O2(g) -> CO2(g)} .
\]
At $\Lk_0$ both have the same stoichiometric shadow,
\(\ch{C + O2 -> CO2}\),
yet they have different heats of combustion.
The $\Lk_1$ construction is forced precisely by this gap:
it realises additivity along a chosen route as functoriality of the enthalpy functor, $F_H$,
while isolating route-independence as a separate state-function condition. 
The present tower therefore records which of the two claims any given use of Hess's law actually requires.

\paragraph{The new datum: an additive functor $\FH$.}
Let $\BR$ be the one-object strict SMC with object $\ast$,
endomorphism monoid $(\RR,+)$ under composition (so $b\circ a=a+b$, addition being
commutative the order is immaterial, with identity $0$), tensor also $+$, and
symmetry $\sigma_{\ast,\ast}=0$. A strict symmetric monoidal functor
$\Lk_0(P)\to\BR$ is, by Theorem~\ref{thm:univ}, exactly a function $\Rx\to\RR$
sending composites and tensors to sums.

\begin{definition}[Energetic level]
\label{def:L1}
\[
\Lk_1(P):=(\Lk_0(P),\FH),
\]
where $\FH\colon\Lk_0(P)\to\BR$ is a strict symmetric monoidal functor. The
forgetful functor $U_1\colon\Lk_1(P)\to\Lk_0(P)$ is the operation that forgets
$\FH$.
\end{definition}

By Theorem~\ref{thm:univ}, such an $\FH$ is the same datum as a function
$\eta\colon\Rx\to\RR$, namely $\eta=\FH|_{\Rx}$, with composites and tensors sent to
sums:
\[
\FH(r_2\circ r_1)=\FH(r_1)+\FH(r_2),\qquad
\FH(r_1\otimes r_2)=\FH(r_1)+\FH(r_2),\qquad
\FH(\id_u)=0 .
\]

\begin{proposition}
\label{prop:FHfree}
For every $\eta\colon\Rx\to\RR$ there is a unique
$\FH\colon\Lk_0(P)\to\BR$ with $\FH|_{\Rx}=\eta$.
\end{proposition}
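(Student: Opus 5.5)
The plan is to obtain Proposition~\ref{prop:FHfree} as a direct specialisation of Theorem~\ref{thm:univ} with target $C=\BR$. First, I will check that $\BR$ meets the theorem's hypotheses. It is strict symmetric monoidal: composition and tensor are both $+$ on $\RR$, and the unit is $0$. Bifunctoriality $(a\otimes b)\circ(a'\otimes b')=(a\circ a')\otimes(b\circ b')$ reduces to associativity and commutativity of addition. Taking the symmetry $\sigma_{\ast,\ast}=0$, the hexagon and involutivity axioms hold trivially, and naturality of $\sigma$ reduces to $a+b=b+a$. The object monoid is the trivial monoid $\{\ast\}$, which is strictly commutative, as Remark~\ref{rem:strictcomm} already notes.

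Next, I will translate the generator data of Theorem~\ref{thm:univ} for this target. Since $\Ob(\BR)=\{\ast\}$, there is exactly one function $g\colon\Sp\to\Ob(\BR)$, namely the constant map to $\ast$. Its extension $\bar g$ sends every multiset to $\ast$. The condition on $h$, that $h(r)$ be a morphism $\bar g(s(r))\to\bar g(t(r))$, then becomes simply $h(r)\in\mathrm{End}(\ast)=\RR$, with no constraint at all. The set of admissible pairs $(g,h)$ is therefore in canonical bijection with the set of functions $\Rx\to\RR$, via $(g,h)\mapsto h$.

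Composing this bijection with the restriction bijection of Theorem~\ref{thm:univ} gives the result. The map $F\mapsto F|_{\Rx}$ from strict symmetric monoidal functors $\Lk_0(P)\to\BR$ to functions $\Rx\to\RR$ is bijective. Surjectivity gives existence of $\FH$ with $\FH|_{\Rx}=\eta$, and injectivity gives uniqueness. The displayed formulas for composites, tensors and identities then follow from strict monoidal functoriality, since composition and tensor in $\BR$ are both addition and identities are $0$.

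I expect no real obstacle. The only point needing care is the verification that $\BR$ is a genuine strict symmetric monoidal category with the degenerate symmetry $0$. In particular, the image of each symmetry generator $\sigma_{u,v}$ of $\Lk_0(P)$ is forced to be $0$, and that value must be consistent with the permutative relations. This is exactly the well-definedness check already carried out in the surjectivity part of the proof of Theorem~\ref{thm:univ}, so nothing further is needed.
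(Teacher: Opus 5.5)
Your proposal is correct and is essentially the paper's proof: the paper likewise applies Theorem~\ref{thm:univ} with $C=\BR$, observing that the object assignment is forced because $\BR$ has a single object, so the reaction values $\eta$ carry no source--target constraint. Your extra check that $\BR$ is a strict symmetric monoidal category with strictly commutative object monoid is the point the paper records separately in Remark~\ref{rem:strictcomm}.
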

\begin{proof}
Theorem~\ref{thm:univ} with $C=\BR$: the object assignment is forced
(one object) and the morphism assignment is the free datum $\eta$, with no
source--target constraint to satisfy since $\BR$ has a single object.
\end{proof}

\begin{remark}[Chemical reading]
\label{rem:L1chem}
The value $\FH(r)\in\RR$ is the \emph{enthalpy of reaction} of $r$, signed in the
standard convention (negative for an exothermic reaction, positive for an
endothermic one). The composition law $\FH(r_2\circ r_1)=\FH(r_1)+\FH(r_2)$ is the
additive half of Hess's law: along any chosen composite of reactions, the assigned
enthalpies add. This holds for every $\FH$, because $\FH$ is a functor.
\end{remark}

\paragraph{The coboundary condition and path-independence.}
Functoriality (Remark~\ref{rem:L1chem}) is weaker than the condition that the
chemistry of $\Lk_1$ ultimately turns on. Say that $\FH$ \emph{is a coboundary} if
there is a monoid homomorphism $h\colon\Mon\to(\RR,+)$, the \emph{object potential},
with
\begin{equation}
\FH(r)=h(t(r))-h(s(r)),\qquad r\in\Rx .
\label{eq:statefn}
\end{equation}
This is a genuinely stronger condition on $\FH$ than functoriality: not every strict
symmetric monoidal $\FH\colon\Lk_0(P)\to\BR$ arises from an object potential, and the
ones that do are exactly the coboundaries in the sense of \eqref{eq:statefn}. When
$\FH$ is a coboundary, every morphism $u\to v$ receives the value $h(v)-h(u)$
independently of the route from $u$ to $v$; and \eqref{eq:statefn} forces
$\FH(r_1)=\FH(r_2)$ whenever $s(r_1)=s(r_2)$ and $t(r_1)=t(r_2)$, though it does not
identify $r_1$ and $r_2$ as morphisms. The generic functor $\FH$ of
Definition~\ref{def:L1} is the categorical content of $\Lk_1$; the coboundary
condition \eqref{eq:statefn} is a separate, strictly stronger property, kept apart
from the definition throughout, following \cite[Rem.~2.41]{Han2026}.

\begin{remark}[Chemical reading of the coboundary condition]
\label{rem:L1coboundary}
Chemically, an object potential $h$ is an assignment of an enthalpy of formation to
each species, and \eqref{eq:statefn} is the statement that the enthalpy of reaction
is the difference of formation enthalpies between products and reactants. The
route-independence that \eqref{eq:statefn} then delivers --- the enthalpy of a
transformation $u\to v$ depends only on $u$ and $v$, not on the path --- is the
second, stronger half of Hess's law, the statement that enthalpy is a state
function. The tower thus separates the two claims ordinarily fused under Hess's law:
plain additivity is functoriality of $\FH$, while path-independence is the coboundary
condition \eqref{eq:statefn}, and a given use of the law may require only the former.
\end{remark}

\paragraph{What $\Lk_1$ adds to $\Lk_0$.}
This is also what $\Lk_1$ adds to the base.
At $\Lk_0(P)$ two reactions with the same source and target are interchangeable:
nothing in the bare bookkeeping of species consumed and produced distinguishes them,
so relabelling one as the other is a symmetry of the base.
The enthalpy functor breaks that symmetry,
since the two reactions may carry different values of $\FH$.
The datum that $\Lk_1$ records is exactly this lost interchangeability:
enthalpy is a distinction $\Lk_0(P)$ cannot make,
and the symmetries of $\Lk_0(P)$ that fail to lift to $\Lk_1(P)$ are computed as a cokernel in Section~\ref{sec:forcing}. 
Hess additivity is, in this reading, the first new datum above stoichiometry.
\subsubsection{$\Lk_2$: equilibrium, reversal symmetry, and Wegscheider cycles}
\label{sec:L2}

\paragraph{$\Lk_1$ cannot see entropy and free energy.}
Enthalpy does not determine equilibrium.
Equilibrium is set by the Gibbs free energy $\Delta G=\Delta H-T\Delta S$,
which carries an entropic term alongside the enthalpic one,
and the two contributions are independent:
a reaction may be driven by a favourable enthalpy, 
by a favourable entropy, or by one against the other.
Knowing $\Delta H$ alone fixes neither the position of equilibrium nor how it moves with temperature,
since the entropic contribution $T\Delta S$ is free to vary.
Isothermal titration calorimetry of protein--ligand binding makes the point concretely:
it resolves $\Delta H$ and $\Delta S$ separately,
and across a ligand series binding is found to be enthalpy-driven in some cases
and entropy-driven in others at comparable affinity,
so two reactions can share a value of $\Delta H$ and yet differ in free energy and in temperature response.
The level $\Lk_1$, which records only $\FH$, cannot tell such reactions apart;
reaching equilibrium requires a second datum, the entropy,
and the temperature-weighted combination of the two.

\paragraph{The new datum: an entropy functor $\FS$ and the Gibbs functor.}
By Theorem~\ref{thm:univ}, that second datum is again freely specified: a strict
symmetric monoidal functor $\FS\colon\Lk_0(P)\to\BR$ is exactly one real value per
reaction, with composites and tensors sent to sums. Combining it with $\FH$ under a
temperature weight gives the one-parameter family defined below. Equilibrium is the
organising question of this level, and it draws on three pieces of structure in
turn: the entropy functor and the free energy built from it, a reversal symmetry on
reversible networks, and the consistency of free energy around cycles. The section
takes them in that order.

\begin{definition}[Equilibrium level]
\label{def:L2}
\[
\Lk_2(P):=(\Lk_0(P),\FH,\FS),
\]
where $\FH,\FS\colon\Lk_0(P)\to\BR$ are strict symmetric monoidal functors. The
forgetful functor $U_2\colon\Lk_2(P)\to\Lk_1(P)$ is the operation that forgets
$\FS$.
\end{definition}

\begin{definition}[Gibbs functor]
\label{def:gibbs}
For each $T>0$ the \emph{Gibbs functor} of an equilibrium level $(\Lk_0(P),\FH,\FS)$
is the strict symmetric monoidal functor
\[
\FG^{T}:=\FH-T\,\FS\;\colon\;\Lk_0(P)\to\BR ,
\]
the pointwise difference taken in the endomorphism monoid $(\RR,+)$ of $\BR$.
\end{definition}

By Theorem~\ref{thm:univ}, $\FS$ is the same datum as a function $\Rx\to\RR$, one
real value per reaction, with composites and tensors sent to sums. A single
$\FG^{T_0}$ retains only the difference $\FH(r)-T_0\FS(r)$; the family
$\{\FG^T\}_{T>0}$, affine in $T$ per reaction, is interchangeable with the pair
$(\FH,\FS)$.

\begin{remark}[Chemical reading]
\label{rem:L2chem}
The value $\FS(r)\in\RR$ is the \emph{entropy change} of $r$, a measure of how the
reaction changes the dispersal of energy among states, and $\FG^T(r)$ is its Gibbs
free energy of reaction at temperature $T$. The equilibrium constant of $r$ is
recovered as $K_r(T)=\exp(-\FG^T(r)/RT)$, and its temperature dependence (the van~'t
Hoff relation, in the usual approximation where $\FH$ and $\FS$ are treated as
temperature-independent over the range considered) is the $T$-derivative of
$\FG^T$; both are thus functions of the decoration, internal to $\Lk_2$.
\end{remark}

\paragraph{Reversal symmetry on reversible networks.}
A new piece of structure appears at this level that has no counterpart at $\Lk_1$.
When $P$ is \emph{reversible} --- 
closed under a fixed-point-free involution $r\mapsto r^{\dagger}$ with $s(r^{\dagger})=t(r)$, $t(r^{\dagger})=s(r)$
--- the assignment $r\mapsto r^{\dagger}$ extends,
by Theorem~\ref{thm:univ},
to a strict symmetric monoidal \emph{dagger} on $\Lk_0(P)$
(a contravariant involutive identity-on-objects functor).
The decorations are compatible with it precisely when $\FH(r^{\dagger})=-\FH(r)$ and $\FS(r^{\dagger})=-\FS(r)$,
hence $\FG^T(r^{\dagger})=-\FG^T(r)$.

\begin{proposition}[Reversal symmetry]
\label{prop:reversal}
For a reversible $P$ with $\dagger$-compatible Gibbs functor,
$\FG^T(r^{\dagger})=-\FG^T(r)$ for every reaction,
equivalently $K_{r^{\dagger}}(T)=K_r(T)^{-1}$ for the equilibrium constants.
\end{proposition}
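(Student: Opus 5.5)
The plan is to unwind the definitions and use linearity of the Gibbs functor in its two constituents. By Definition~\ref{def:gibbs}, $\FG^T=\FH-T\,\FS$ is computed pointwise in the endomorphism monoid $(\RR,+)$ of $\BR$. So for each reaction $r\in\Rx$ we have $\FG^T(r)=\FH(r)-T\,\FS(r)$, and likewise for $r^{\dagger}$. The hypothesis of $\dagger$-compatibility is, as stated just before the proposition, the pair of identities $\FH(r^{\dagger})=-\FH(r)$ and $\FS(r^{\dagger})=-\FS(r)$ on generators. Substituting these gives
\[
\FG^T(r^{\dagger})=-\FH(r)+T\,\FS(r)=-\FG^T(r)
\]
for every $T>0$ and every $r\in\Rx$. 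This is the first claim.

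For the equivalent formulation, I will use the chemical reading of Remark~\ref{rem:L2chem}, which sets $K_r(T)=\exp(-\FG^T(r)/RT)$. Since $\exp$ is a bijection from $(\RR,+)$ onto $(\RR_{>0},\cdot)$ carrying negation to inversion, $\FG^T(r^{\dagger})=-\FG^T(r)$ holds if and only if $K_{r^{\dagger}}(T)=K_r(T)^{-1}$. The two statements are therefore equivalent, not merely one implying the other.

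I would also add a short remark that the identity extends from generators to all morphisms, reading $\dagger$ on $\Lk_0(P)$ as the contravariant extension described before the proposition. Both $\FG^T\circ\dagger$ and $-\FG^T$ send composites and tensors to sums, because $\BR$ has commutative composition, so order reversal is harmless. Both also send symmetries and identities to $0$. They agree on the reaction generators, so by the uniqueness half of Theorem~\ref{thm:univ}, applied as in Proposition~\ref{prop:FHfree}, they agree everywhere.

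The only point needing care is this extension step. Theorem~\ref{thm:univ} is phrased for covariant functors, and $\dagger$ is contravariant. I would handle it either by regarding $\FG^T\circ\dagger$ as a covariant functor $\Lk_0(P)\to\BR^{\mathrm{op}}\cong\BR$, which is isomorphic because composition in $\BR$ is commutative, or by a direct induction on formal expressions. The generator-level identity in the statement itself is immediate.
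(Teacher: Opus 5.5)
Your proof is correct and follows the paper's route: the paper also just substitutes $\dagger$-compatibility ($\FH(r^{\dagger})=-\FH(r)$, $\FS(r^{\dagger})=-\FS(r)$) into $\FG^T=\FH-T\,\FS$ and converts the result using $\FG^T(r)=-RT\log K_r(T)$. Your extension from generators to all morphisms, via uniqueness in Theorem~\ref{thm:univ} and $\BR^{\mathrm{op}}=\BR$, is sound but not needed, since the proposition is stated only for reactions.
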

\begin{proof}
Immediate from $\dagger$-compatibility and $\FG^T(r)=-RT\log K_r(T)$.
\end{proof}

\paragraph{Cycle consistency of free energy.}
Reversal symmetry constrains each forward--reverse pair but says nothing about longer cycles.
The cycle conditions --- 
consistency of the free-energy increments
(equivalently the equilibrium constants)
around an arbitrary loop ---
are a strictly stronger requirement,
and they hold precisely when the Gibbs functor is a coboundary,
exactly as for the enthalpy in \eqref{eq:statefn}.
The companion \emph{kinetic} detailed-balance condition relating rate constants,
$k_r/k_{r^{\dagger}}=K_r(T)$,
is imposed separately at $\Lk_3$.

\begin{proposition}[Wegscheider cycle conditions]
\label{prop:wegscheider}
Suppose $\FG^T=g(t(-))-g(s(-))$ for an object potential $g\colon\Mon\to(\RR,+)$
(the state-function condition for free energy).
Then for every directed cycle
$u_0\xrightarrow{r_1}u_1\xrightarrow{r_2}\cdots\xrightarrow{r_m}u_0$,
$$
\sum_{i=1}^{m}\FG^T(r_i)=0,\qquad\text{equivalently}\qquad
\prod_{i=1}^{m}K_{r_i}(T)=1 .
$$
\end{proposition}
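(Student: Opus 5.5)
The plan is a telescoping argument. By hypothesis, each generator $r_i$ with $s(r_i)=u_{i-1}$ and $t(r_i)=u_i$ (indices read modulo $m$, so $u_m=u_0$) satisfies $\FG^T(r_i)=g(u_i)-g(u_{i-1})$. Summing over $i=1,\dots,m$ gives a telescoping sum
\[
\sum_{i=1}^{m}\FG^T(r_i)=\sum_{i=1}^{m}\bigl(g(u_i)-g(u_{i-1})\bigr)=g(u_m)-g(u_0)=0 .
\]

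It is cleaner to run this through the functor itself. By the composition law of the Gibbs functor (Definition~\ref{def:gibbs}, inherited from $\FH$ and $\FS$ via Theorem~\ref{thm:univ}), the left-hand side equals $\FG^T(r_m\circ\cdots\circ r_1)$. This composite is an endomorphism $u_0\to u_0$ in $\Lk_0(P)$.

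The one point needing care is that the coboundary condition is stated only on generators, while the sum is read as the value on the composite. So I will first note that $\FG^T$ and the functor $f\mapsto g(\mathrm{cod} f)-g(\mathrm{dom} f)$ are both strict symmetric monoidal functors $\Lk_0(P)\to\BR$. The latter is a functor because differences telescope under composition and, $g$ being a monoid homomorphism, add under tensor; it sends symmetries to $0$ since they are endomorphisms. The two functors agree on reaction generators, hence are equal by the uniqueness half of Theorem~\ref{thm:univ} (equivalently Proposition~\ref{prop:FHfree}). The value on any endomorphism is therefore $g(u_0)-g(u_0)=0$. I expect this identification of the generator-level hypothesis with a statement about all morphisms to be the only step requiring thought, and it is routine.

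For the equivalent multiplicative form, I use Remark~\ref{rem:L2chem}: $K_{r}(T)=\exp(-\FG^T(r)/RT)$. Then
\[
\prod_i K_{r_i}(T)=\exp\Bigl(-\tfrac{1}{RT}\sum_i\FG^T(r_i)\Bigr)=\exp(0)=1 .
\]
Conversely, since $\exp$ is injective, the product equals $1$ exactly when the sum vanishes, which gives the stated equivalence.
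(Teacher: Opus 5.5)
Your proof is correct and is essentially the paper's: the paper likewise reads off $\FG^T(r_i)=g(u_i)-g(u_{i-1})$ from the hypothesis, telescopes the sum to $g(u_m)-g(u_0)=0$, and exponentiates via $K_r(T)=\exp(-\FG^T(r)/RT)$. Your detour, identifying $\FG^T$ with the coboundary functor $f\mapsto g(\mathrm{cod} f)-g(\mathrm{dom} f)$ through the uniqueness half of Theorem~\ref{thm:univ}, is valid but not needed here, because the hypothesis and the sum both involve only generator values; what it buys you is route-independence for arbitrary morphisms, including tensors and symmetries, not just for a cycle of generators.
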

\begin{proof}
Along the cycle each reaction has $s(r_i)=u_{i-1}$ and $t(r_i)=u_i$,
so the hypothesis gives $\FG^T(r_i)=g(u_i)-g(u_{i-1})$ for each $i$.
Summing over the cycle,
\[
\sum_{i=1}^{m}\FG^T(r_i)=\sum_{i=1}^{m}\bigl(g(u_i)-g(u_{i-1})\bigr)
=g(u_m)-g(u_0)=0,
\]
the intermediate terms cancelling in pairs and $u_m=u_0$ closing the cycle.
Exponentiating $\FG^T(r)=-RT\log K_r(T)$ turns this additive identity into the multiplicative one $\prod_i K_{r_i}(T)=1$.
\end{proof}

\paragraph{Detailed balance as two separate conditions.}
The division of labour is the point.
The $\dagger$-structure (Proposition~\ref{prop:reversal})
is genuinely new at $\Lk_2$ and handles reversal;
the cycle conditions (Proposition~\ref{prop:wegscheider}) are
not a consequence of $\dagger$-compatibility alone 
but require the same coboundary condition that upgraded additivity to path-independence at $\Lk_1$.
Detailed balance is thus two conditions,
not one, and the tower keeps them visibly separate.

\paragraph{What $\Lk_2$ adds to $\Lk_1$.}
What $\Lk_2$ adds to $\Lk_1$ is the resolution of the gap that opened this section.
Take two reactions $r_1,r_2\colon u\to v$ with $\FH(r_1)=\FH(r_2)$ but $\FS(r_1)\neq\FS(r_2)$.
The label swap $r_1\leftrightarrow r_2$ preserves the $\Lk_1$ decoration,
since it leaves $\FH$ unchanged on the two reactions,
so it is a symmetry of $(\Lk_0(P),\FH)$;
but it does not preserve $\FS$, so it is not a symmetry of $\Lk_2(P)$.
The entropy functor breaks it.
This lost symmetry is the datum $\Lk_2$ records:
the underlying $\Lk_1$ decoration does not determine the entropy,
the swap represents a non-trivial class in the cokernel computed in Section~\ref{sec:forcing},
and the two reactions that $\Lk_1$ conflates are separated at every temperature by $\FG^T(r_1)\neq\FG^T(r_2)$.
\subsubsection{$\Lk_3$: kinetics as an additive channel decoration}
\label{sec:L3}

\paragraph{$\Lk_2$ cannot see the reaction rate.}
Chemical equilibrium fixes where a reaction ends up, but not how fast it gets there.
Two reactions with identical thermodynamics, the same $\Delta G$ and hence the same
equilibrium constant, can proceed at rates differing by many orders of magnitude:
the hydration of $\ch{CO2}$ runs about $10^{7}$ times faster in the presence of carbonic anhydrase than without it, 
with the catalyst leaving the equilibrium position untouched.
Rate is therefore a datum that $\Lk_2$ cannot record; it enters at $\Lk_3$.
Each rate constant absorbs physical conditions such as temperature, pressure, and solvent, 
which lie outside the stoichiometry $(\Sp,\Rx,s,t)$ altogether;
and even when a catalyst is itself a species of the network,
the magnitude of the rate enhancement it confers is a property of the chemistry,
not of the combinatorics.
The rate constant must therefore be supplied as an empirical input rather than derived from the network.
Kinetics records the rate of each reaction and the resulting stochastic evolution of molecule counts.

\paragraph{The rate datum: mass-action propensities.}
The rate of a reaction is set by the \emph{law of mass action}:
a reaction fires at a rate equal to its rate constant times the number of distinct ways its
reactant molecules can be selected from those present. 
In a state with a copy-number vector $x\in\Mon$,
this makes the rate of reaction $r$ the rate constant $k_r$ times the product, over species,
of the number of ways to choose $s(r)_S$ molecules of species $S$ from the $x_S$ available.
Each firing consumes $s(r)$ and produces $t(r)$, changing the state by $n_r=t(r)-s(r)$. 
This rate, the \emph{propensity} $\lambda_r(x)$, is what kinetics adds to the network.

\paragraph{The new datum: channel generators and the target $\Bg$.}
To make this a decoration of the same form as the others,
the values must be composed additively, but propensities do not:
a sequence of two reactions does not have a rate equal to the sum of the two rates.
What does add is the infinitesimal generator of the Markov dynamics that each reaction induces.
The kinetic functor therefore should take values in generators rather than rates,
assigning to each reaction its channel generator and to a formal path the sum of the generators along it.
Summing along a path is bookkeeping internal to the functor;
it is not the physical combination of reactions,
which is the network-wide aggregate $\Omega_P$ defined below.
The two are kept distinct: a categorical path is a record of which reactions occur,
not a model of their concurrent action.

Fix the population space $\RR^{\Mon}$ of functions $f\colon\Mon\to\RR$.
To each reaction $r$ associate the \emph{channel generator}
$$
\bigl(\Omega_r f\bigr)(x)=\lambda_r(x)\,\bigl[f(x+n_r)-f(x)\bigr],
\quad n_r=t(r)-s(r),\quad
\lambda_r(x)=k_r\!\prod_{S\in\Sp}\binom{x_S}{s(r)_S},\ \ k_r>0,
$$
an operator on $\RR^{\Mon}$,
with $\lambda_r$ the mass-action propensity just described.
The binomial coefficient vanishes when $x_S<s(r)_S$ for any species,
so $\lambda_r(x)=0$ automatically unless $x\ge s(r)$ componentwise:
a reaction cannot fire when its reactants are not present.
Wherever $\lambda_r(x)\neq0$ the reached state $x+n_r=x-s(r)+t(r)$ is again in $\Mon$,
so $\Omega_r$ never evaluates $f$ at a nonphysical state.
Everything is taken algebraically on finitely supported functions,
so that no analytic domain or boundedness statement is intended.
Let $\gen$ be the commutative monoid generated by $\{\Omega_r\}$ under operator addition,
with unit the zero operator,
and let $\Bg$ be the one-object strict SMC with endomorphism monoid $\gen$,
both composition and tensor given by this addition,
and symmetry the identity $\sigma=0$.
The construction intentionally quotients by equality of generator contributions:
if two reactions induce the same operator $\Omega_r$ they are identified in $\gen$,
so $\FP$ is not faithful.
This is by design: $\FP$ records kinetic behaviour as an
observable, while the reaction labels are kept by $\Lk_0(P)$,
which $\FP$ decorates.
It deliberately does \emph{not} use a Kronecker-sum tensor;
that would require a typed category whose objects are distinct population spaces $\RR^{\NN[S']}$,
the open-network infrastructure outside the present scope.
This $\Bg$ is the fixed-network additive target used for the present compact theory section,
not a typed monoidal semantics for open networks.

\paragraph{The kinetic level and its master equation.}
With $\Bg$ as target,
the kinetic decoration is again freely specified in the sense of Theorem~\ref{thm:univ}:
a strict symmetric monoidal functor $\Lk_0(P)\to\Bg$ is exactly one channel generator per reaction.
The level $\Lk_3$ takes that datum to be the mass-action generator $\Omega_r$ of each reaction.

\begin{definition}[Kinetic level]
\label{def:L3}
$\Lk_3(P):=(\Lk_0(P),\FH,\FS,\FP)$, where $\FP\colon\Lk_0(P)\to\Bg$ is the strict
symmetric monoidal functor determined (Theorem~\ref{thm:univ}) by $\FP(r)=\Omega_r$
on generators. The forgetful functor $U_3\colon\Lk_3(P)\to\Lk_2(P)$ is the operation
that forgets $\FP$.
\end{definition}

The functor records additive bookkeeping of channel contributions:
a formal path $u\xrightarrow{r_1}v\xrightarrow{r_2}w$ is sent to $\Omega_{r_1}+\Omega_{r_2}$,
and a tensor likewise to a sum.
This is all $\FP$ asserts, and it must be read with care,
because categorical composition in $\Lk_0(P)$ is a \emph{formal path} of reactions, not concurrency:
\begin{itemize}[leftmargin=1.4em]
\item $\FP$ is an \emph{additive decoration}:
it assigns to every morphism the sum of the channel generators occurring in it.
It does \emph{not} assign to a composite the effective generator of a coarse-grained single step,
which would require a quasi-steady-state elimination and is not a functor operation.
\item The object modelling the network's stochastic dynamics is the \emph{aggregate} over all channels,
$$
\Omega_P=\sum_{r\in\Rx}\Omega_r ,
$$
a network-level sum over the channels available in parallel,
the image of the whole network under the semantic map $P\mapsto\Omega_P$,
not the image of any path composite under $\FP$.
\end{itemize}

The aggregate $\Omega_P$ is the generator of the \emph{chemical master equation},
the equation that governs the stochastic dynamics of the network.
A state of the system is a copy-number vector $x\in\Mon$ recording
how many molecules of each species are present,
and the master equation evolves a probability distribution $p_t$ over these states,
\begin{equation}
\frac{d}{dt}\,p_t=\Omega_P\,p_t,\qquad
p_t=e^{t\Omega_P}p_0 ,
\label{eq:cme}
\end{equation}
with $e^{t\Omega_P}$ the Markov semigroup it generates.
Each channel generator $\Omega_r$ contributes the gain-loss balance of a single reaction:
at rate $\lambda_r(x)$ the reaction fires, moving probability from state $x$ to state $x+n_r$.
Summing the channels gives the full generator,
so that $\Omega_P$ is the exact stochastic law of the network at finite molecule numbers,
and the kinetic datum $\FP$ is precisely what \eqref{eq:cme} requires beyond stoichiometry.

\paragraph{The deterministic limit and chemical reaction network theory.}
The master equation \eqref{eq:cme} is the stochastic, finite-count description of a reaction network.
Chemical reaction network theory more often works with its deterministic counterpart,
the reaction rate equation on concentrations,
and the two are related by a precise scaling limit rather than an informal approximation.
Index the system by a volume $V>0$ and pass from integer copy-number states $x\in\Mon$ to real concentrations $c=x/V$.
As $V$ grows the rate constants are rescaled by the standard convention
that keeps each reaction's rate finite in the limit,
so that for each $V$ the network has a master-equation generator $\Omega_P^{V}$,
the volume-$V$ form of $\Omega_P$.

What these generators converge to was settled by Kurtz \cite{Kurtz1970,Kurtz1972,AndersonCraciunKurtz}, 
whose large-volume limit theorem is 
the precise sense in which the stochastic description reduces to the deterministic one.
As $V\to\infty$, the volume-scaled mass-action generators $\Omega_P^{V}$ converge,
on smooth observables of the concentration $c$,
to the first-order operator $f\mapsto\nabla f\cdot N\,v(c)$,
the generator of the deterministic flow with vector field $c\mapsto N\,v(c)$.
Here $N$ is the stoichiometric matrix of Section~\ref{sec:L0} and
$$
v(c)=\bigl(k_r\,c^{\,s(r)}\bigr)_{r\in\Rx}
$$
is the vector of mass-action rates, 
the rate constants $k_r$ being the large-volume rescaling of the stochastic constants carried by $\FP$
and the exponents the source stoichiometries $s(r)$.
Kurtz's theorem gives more than this operator convergence:
with $c(0)=x_V(0)/V$ held fixed, the stochastic process governed by $\Omega_P^{V}$ converges in probability, 
uniformly on compact time intervals,
to the solution of the \emph{reaction rate equation}
\begin{equation}
\dot c = N\,v(c),
\label{eq:rre}
\end{equation}
so the random trajectories at finite volume concentrate on the deterministic flow as the volume grows.
The tower form of this statement is given in \cite[Prop.~5.18]{Han2026}.
 
This limit is relevant here because it is a forgetful passage out of $\Lk_3$, and
it is what keeps the deterministic theory inside the tower rather than beside it.
The right-hand side $N\,v(c)$ of \eqref{eq:rre} is assembled from data the tower already carries:
$N$ is the $\Lk_0$ stoichiometric matrix and $v$ is built from the rate constants of the $\Lk_3$ functor $\FP$.
The rate equation is therefore not an independent postulate but the large-volume image of $\FP$,
so that $\Lk_3$ contains the deterministic description of $\Lk_0$ as a limiting case,
and no structure built so far is disturbed:
\eqref{eq:rre} is a derived shadow of the stochastic generator $\Omega_P$,
obtained as a forgetful functor on the volume-indexed families \cite[\S5.4.1]{Han2026}.

Equation~\eqref{eq:rre} is the central object of chemical reaction network theory,
and it is exactly the rate equation that Baez and Pollard construct functorially
in their compositional framework \cite{BaezPollard}: 
a reaction network, presented as a Petri net, is sent to its rate equation by a symmetric monoidal functor,
and open networks glued along shared species have rate equations that compose accordingly.
That framework studies the deterministic object \eqref{eq:rre} directly and in the open setting,
where networks are composed along shared species;
the present level supplies the stochastic generator above it and,
by the limit just described,
recovers \eqref{eq:rre} as its deterministic shadow.
The compositional gluing of open networks is a separate development from the fixed-network theory recorded here.

Once the rate datum is present,
two classical results of the theory become available.
The Feinberg--Horn--Jackson deficiency-zero theorem gives global stability of complex-balanced equilibria in weakly reversible deficiency-zero networks,
and its stochastic counterpart, the Anderson--Craciun--Kurtz theorem,
gives product-form Poisson stationary distributions.
Both require the $\Lk_3$ rate data and not the $\Lk_0$ deficiency $\delta$ alone,
and both are treated in the application section.

\paragraph{What $\Lk_3$ adds to $\Lk_2$.}
What $\Lk_3$ adds to $\Lk_2$ is the rate that opened this section.
Take the two hydration pathways for $\ch{CO2}$, $r_A$ uncatalysed and $r_B$ enzyme-catalysed,
with the same source and target and hence the same $\Delta G$.
The swap $r_A\leftrightarrow r_B$ preserves every $\Lk_2$ datum,
since the two share enthalpy, entropy, and equilibrium constant, 
so it is a symmetry of $\Lk_2(P)$;
but $\FP(r_A)\neq\FP(r_B)$,
their channel generators carrying rate constants that differ by orders of magnitude,
so the swap is not a symmetry of $\Lk_3(P)$.
The kinetic functor breaks it.
This lost symmetry is the datum $\Lk_3$ records:
the thermodynamic decoration does not determine the rate,
and the swap represents a non-trivial class in the cokernel computed in Section~\ref{sec:forcing}.
The two reactions that $\Lk_2$ conflates are separated at $\Lk_3$ by their generators.

\begin{remark}[Generators, not kernels]
\label{rem:generators}
One might expect the kinetic target to be a category of Markov kernels.
It cannot be, functorially: 
passing from a generator $L$ to its finite-time kernel $e^{tL}$ does not respect composition,
since $e^{t(L_1+L_2)}\ne e^{tL_1}e^{tL_2}$ unless $[L_1,L_2]=0$.
The additive structure that makes $\FP$ a functor lives at the level of generators;
the exponential to kernels is available pointwise but is not itself a symmetric monoidal functor.
This non-commutativity is a warning that additive generator bookkeeping has forgotten sequential structure;
the next level supplies one chemically meaningful source of such structure (mechanism), 
though it does not by itself resolve the analytic non-commutativity of Markov generators \cite[\S5,\S6]{Han2026}.
\end{remark}

\paragraph{Summary of the the lower levels of the tower.}
Table~\ref{tab:dict} collects the correspondence.
The point is not that any one entry is surprising,
since each is a familiar fact,
but that they are uniformly the same kind of object:
functorial data on one free category.
Reading the table this way separates classical laws that are usually stated together.
The additive laws (Hess's law, reaction-heat bookkeeping) are plain functoriality,
holding for every such functor.
Path-independence and the Wegscheider cycle conditions are stronger:
they hold only when the functor is a coboundary,
that is, when its labelling comes from a state function on species rather than from arbitrary reaction values.
Reversal symmetry is different again, the separate contribution of the
$\dagger$-structure on reversible nets. The tower assigns these three to distinct
conditions rather than bundling them, so each result records exactly which datum it
depends on. Table~\ref{tab:dict} also lists the three levels above kinetics:
mechanism ($\Lk_4$), geometry ($\Lk_5$), and electronic structure ($\Lk_6$). These
change the construction in kind rather than adding a further functor of the present
form, and they are developed in Sections~\ref{sec:L4}--\ref{sec:L6}; the table
records their place in the dictionary in advance.
 
\begin{table}[t]
\centering\small
\renewcommand{\arraystretch}{1.3}
\begin{tabular}{@{}l >{\raggedright\arraybackslash}p{4.3cm}
                       >{\raggedright\arraybackslash}p{6.2cm}@{}}
\toprule
Level & Categorical datum & Classical content \\
\midrule
$\Lk_0$ & free SMC $\Lk_0(P)$; matrix $N$ & stoichiometry, $\delta$;
          conservation laws as object potentials \\
$\Lk_1$ & functor $\FH\colon\Lk_0(P)\to\BR$ & additive reaction heats;
          Hess path-independence when $\FH$ is a coboundary \\
$\Lk_2$ & functors $\FS,\FG^T$; $\dagger$ on reversible nets & equilibrium
          constants; reversal symmetry; Wegscheider cycles when $\FG^T$ is a
          coboundary \\
$\Lk_3$ & functor $\FP\colon\Lk_0(P)\to\Bg$; aggregate $\Omega_P$ & mass-action
          CME; kinetic detailed balance as $\Lk_2$-compatibility \\
$\Lk_4$ & free SMC on labelled DPO rules; $U_4\colon\Lk_4(P)\to\Lk_0(P)$ &
          mechanism; one effective channel, non-isomorphic derivations
          (concerted vs.\ stepwise) \\
$\Lk_5$ & functor $\FV$ into $\OrbMorse$; $\Lk_5(P)=(\Lk_{4.5}(P),\FV)$ &
          geometry; a potential energy surface on the configuration orbifold
          $\Ce(G)$ for each species \\
$\Lk_6$ & electronic bundle over $\Ce(G)$, target $\HilbBundR$ & electronic
          structure; Berry sign holonomy $\eta_B=w_1(\Lzero^{\mathbb R})$ around a
          conical intersection \\
\bottomrule
\end{tabular}
\caption{The tower as a functorial dictionary. Levels $\Lk_0$--$\Lk_3$ each add one
functor out of the fixed base $\Lk_0(P)$, with $U_k$ removing the last; $\Lk_4$
instead enlarges the morphisms, and $\Lk_5$--$\Lk_6$ pass to the geometry and
electronic structure of individual species. The decorated rows $\Lk_1$--$\Lk_3$
are constructed in this section; $\Lk_4$ is treated in Section~\ref{sec:L4} and
$\Lk_5$--$\Lk_6$ in Sections~\ref{sec:L5}--\ref{sec:L6}. The non-collapse of the
decorated sub-tower is Theorem~\ref{thm:autoL0} and Corollary~\ref{cor:stab}; the
separation of $\Lk_4$ from the lower observables is
Proposition~\ref{prop:Dmech}.}
\label{tab:dict}
\end{table}
\subsection{Why the levels do not collapse}
\label{sec:forcing}
When does adjoining a decoration genuinely refine a level,
rather than leave it unchanged?
A decoration refines a level exactly when it separates two reactions
that could not be distinguished in the level below.
A broken symmetry is what certifies a real separation.
A symmetry of $\Lk_0(P)$ that swaps two reactions certifies that no observable below can distinguish them.
A decoration that distinguishes them is then exactly one that breaks the symmetry.
The question therefore becomes one about the symmetries a decoration leaves intact.
These form its stabiliser:
the subgroup of base automorphisms $\sigma$ with $D\circ\sigma=D$,
the symmetries that survive the decoration $D$.
Compute the automorphism group of $\Lk_0(P)$,
then, for each decoration, find this stabiliser.
The outcome depends on the network.
If the network carries a \emph{forcing pair},
two parallel reactions with distinct heats, or distinct entropies, or distinct rates,
the levels are strictly separated.
If it is degenerate and carries no such pair, a level may add no new distinction.
The test below makes the difference explicit.

Consider a self-equivalence of a level:
a strict symmetric monoidal functor from the level to itself with an inverse.
The forcing question needs such a map to act as a relabelling of the chemistry,
carrying species to species and reactions to reactions.
An arbitrary self-equivalence need not do this;
it may rearrange composites and tensors in ways that answer to no relabelling of the network,
and the diagnostic below would say nothing about it.
The relevant maps are therefore those that fix the chemical generators setwise,
the species $\Sp$ among objects and the reactions $\Rx$ among morphisms.
These are the \emph{presentation-preserving automorphisms} of the level;
they form a group $\AutP(-)$.
It matters what such an automorphism certifies,
because the certificate runs in only one direction.
A functor $F$ out of a category need not satisfy $F\circ\sigma=F$ for an automorphism $\sigma$ of its source,
since one can always label two parallel reactions differently.
What is true is a statement about the decorations already chosen.
Suppose $\sigma$ swaps two reactions $r_1,r_2$ and lies in the stabiliser of a decoration $D$.
Then $D$ does not distinguish $r_1$ from $r_2$.
A swapping automorphism is therefore a sufficient diagnostic for indistinguishability,
not a characterisation of it.
This is all the forcing argument needs.
The question at each level is which base automorphisms survive the decorations adjoined so far,
where to survive means to remain compatible with them.
Indistinguishability at level $k$ means indistinguishability by the level-$k$ observables,
not by every conceivable functor.

To make this quantitative,
call the \emph{level-$k$ profile} of a reaction the tuple of data the level records.
At $\Lk_3$ this is the source, target, enthalpy, entropy, and kinetic generator contribution $(s,t,\FH,\FS,\FP)$; 
at $\Lk_1,\Lk_2$ it is the corresponding shorter tuple.
Two reactions have the \emph{same level-$k$ profile} when these agree;
they remain distinct morphisms of the syntax throughout.
The base symmetries that survive level $k$ are exactly those preserving the level-$k$ profile,
and computing them is computing how much each decoration sees.

For the base,
the universal property gives the full symmetry group outright.
Two families of automorphisms are visible.
First, a permutation $\pi$ of species that preserves the \emph{typed reaction multiset} ---
meaning the induced map on $\Mon$ satisfies 
$|\Rx_{u,v}|=|\Rx_{\pi u,\pi v}|$ for every pair of complexes $(u,v)$,
so that fibres are carried to fibres of equal size ---
extends, by Theorem~\ref{thm:univ} 
(after choosing, for each pair, a bijection $\Rx_{u,v}\to\Rx_{\pi u,\pi v}$),
to an automorphism of $\Lk_0(P)$; these form the subgroup
\[
\Symm(\Sp)_P:=\bigl\{\pi\in\Symm(\Sp): |\Rx_{u,v}|=|\Rx_{\pi u,\pi v}|
\text{ for all } u,v\bigr\}\ \le\ \Symm(\Sp).
\]
(Merely preserving the \emph{support} $\{(s(r),t(r))\}$ is not enough: a species
permutation could then send a two-element fibre to a one-element fibre, with no
bijection of labels above it.) Second, for each pair of complexes $(u,v)$, any
permutation of the \emph{fibre}
\[
\Rx_{u,v}:=(s,t)^{-1}(u,v)=\{r\in\Rx: s(r)=u,\ t(r)=v\},
\]
a relabelling of reactions that share source and target, 
extends to an automorphism fixing all species.

\begin{theorem}[Automorphisms of the base]
\label{thm:autoL0}
There is a short exact sequence of groups
\[
1\;\to\;\textstyle\prod_{(u,v)}\Symm(\Rx_{u,v})\;\to\;
\AutP\bigl(\Lk_0(P)\bigr)\;\to\;\Symm(\Sp)_P\;\to\;1 ,
\]
where the normal factor $K:=\prod_{(u,v)}\Symm(\Rx_{u,v})$ consists of the automorphisms fixing every species,
and $\Symm(\Sp)_P$ is the quotient.
After choosing, for each species permutation,
bijections between fibres of equal cardinality, the sequence splits,
giving a non-canonical semidirect-product description
\[
\AutP\bigl(\Lk_0(P)\bigr)\;\cong\;
\Bigl(\textstyle\prod_{(u,v)}\Symm(\Rx_{u,v})\Bigr)\rtimes\Symm(\Sp)_P ,
\]
in which $\Symm(\Sp)_P$ acts on the product of fibre symmetric groups by permuting
the fibres according to $\Rx_{u,v}\mapsto\Rx_{\pi u,\pi v}$.
\end{theorem}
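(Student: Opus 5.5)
We will study presentation-preserving automorphisms through their action on generators, using Theorem~\ref{thm:univ} with $C=\Lk_0(P)$. A presentation-preserving automorphism $\Phi$ permutes $\Sp$ setwise among objects and $\Rx$ setwise among morphisms, so it yields a pair $(\pi,\rho)$ with $\pi\in\Symm(\Sp)$ and $\rho\in\Symm(\Rx)$. Because $\Phi$ is strict monoidal, it acts on objects by the monoid automorphism $\bar\pi$ of $\Mon$. Because it is a functor, it preserves sources and targets, so $s(\rho r)=\bar\pi s(r)$ and $t(\rho r)=\bar\pi t(r)$. Hence $\rho$ maps each fibre $\Rx_{u,v}$ bijectively onto $\Rx_{\pi u,\pi v}$: injectivity comes from $\rho$ being a bijection, and surjectivity comes from applying the same argument to $\Phi^{-1}$. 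In particular the fibre cardinalities agree, so $\pi\in\Symm(\Sp)_P$. Conversely, by the injectivity half of Theorem~\ref{thm:univ}, $\Phi$ is determined by $(\pi,\rho)$.

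This gives the first main step. The assignment $\Phi\mapsto(\pi,\rho)$ is an injective group homomorphism from $\AutP(\Lk_0(P))$ onto the group $G$ of all pairs $(\pi,\rho)$ with $\pi\in\Symm(\Sp)_P$ and $\rho$ carrying each fibre $\Rx_{u,v}$ onto $\Rx_{\pi u,\pi v}$. It is a homomorphism because composition of functors composes their restrictions to generators. For surjectivity, take any such pair. The surjectivity half of Theorem~\ref{thm:univ}, applied with $g=\pi$ (composed with the inclusion) and $h(r)=\rho(r)$, produces a strict symmetric monoidal endofunctor $F$; the source and target condition holds by the fibre condition. Applying the same construction to $(\pi^{-1},\rho^{-1})$ gives a functor $F'$, and $F'F$ restricts to the identity on generators, so uniqueness gives $F'F=\id$, and symmetrically $FF'=\id$. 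Hence $F$ is an automorphism, and it is presentation-preserving by construction.

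It remains to analyse $G$, and the exact sequence follows from this analysis. The projection $G\to\Symm(\Sp)_P$, $(\pi,\rho)\mapsto\pi$, is a homomorphism. Its kernel consists of pairs with $\pi=\id$, so $\rho$ preserves every fibre; this kernel is exactly $K=\prod_{(u,v)}\Symm(\Rx_{u,v})$, which is normal as a kernel. The projection is surjective because, for $\pi\in\Symm(\Sp)_P$, the equal fibre cardinalities allow a bijection $\Rx_{u,v}\to\Rx_{\pi u,\pi v}$ to be chosen for each pair of complexes. Since $\bar\pi$ is a bijection on $\Mon\times\Mon$, these bijections assemble into a permutation of $\Rx$.

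The main obstacle is the splitting, because the chosen bijections must be made compatible with composition in $\Symm(\Sp)_P$. Arbitrary choices only give a set-theoretic section, and the claimed homomorphic section need not come from them automatically. The approach is to fix, for each $\bar\pi$-orbit of complex pairs, a reference indexing of every fibre in the orbit by $\{1,\dots,m\}$, where $m$ is the common cardinality. We then let $\pi$ act by transporting indices. Transport of indices is strictly functorial in $\pi$, so this yields a homomorphic section $s\colon\Symm(\Sp)_P\to G$. Conjugating $K$ by $s(\pi)$ sends $\Symm(\Rx_{u,v})$ to $\Symm(\Rx_{\pi u,\pi v})$, which is the asserted permutation action on fibres. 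The standard semidirect-product recognition then gives $G\cong K\rtimes\Symm(\Sp)_P$. The decomposition is non-canonical because the reference indexings were chosen.
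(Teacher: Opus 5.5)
Your proposal is correct and takes essentially the same route as the paper: automorphisms are identified with pairs $(\pi,\beta)$ via the universal property, $K$ is the kernel of the projection to $\Symm(\Sp)_P$, and the splitting comes from a coherent rather than arbitrary choice of fibre bijections. Your reference indexings by $\{1,\dots,m\}$ are exactly the paper's fixed total orders on the fibres, and index transport is its unique order-preserving bijection $\beta_\pi$, so $\beta_{\pi_1\pi_2}=\beta_{\pi_1}\beta_{\pi_2}$ holds for the same reason; note that the indexings must be fixed once, independently of $\pi$, so your orbits should be orbits of the whole group $\Symm(\Sp)_P$ on complex pairs rather than of a single $\bar\pi$.
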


\begin{proof}
By Theorem~\ref{thm:univ} a presentation-preserving endofunctor of $\Lk_0(P)$ is
determined by its restriction to the generators $\Sp$ and $\Rx$,
and is invertible if and only if that restriction is a bijection of each generating set respecting the typing.
On species this restriction is a permutation $\pi\in\Symm(\Sp)$;
on reactions it is a bijection $\beta\colon\Rx\to\Rx$ respecting typing up to $\pi$,
i.e. $s(\beta r)=\pi\,s(r)$ and $t(\beta r)=\pi\,t(r)$,
so $\beta$ restricts to a bijection $\Rx_{u,v}\to\Rx_{\pi u,\pi v}$ for every pair.
Such a $\beta$ exists \emph{if and only if} $|\Rx_{u,v}|=|\Rx_{\pi u,\pi v}|$ for all $(u,v)$,
which is exactly the condition $\pi\in\Symm(\Sp)_P$.
The assignment $(\pi,\beta)\mapsto$ (its functor) is thus a bijection onto $\AutP(\Lk_0(P))$.
Composing functors composes the pairs by $(\pi_1,\beta_1)(\pi_2,\beta_2)=(\pi_1\pi_2,\beta_1\beta_2)$,
the second component being the composite bijection
$\Rx_{u,v}\xrightarrow{\beta_2}\Rx_{\pi_2 u,\pi_2 v}\xrightarrow{\beta_1} \Rx_{\pi_1\pi_2 u,\pi_1\pi_2 v}$.
 
The pairs with $\pi=\id$ are exactly the $\beta$ preserving every fibre,
i.e. $K=\prod_{(u,v)}\Symm(\Rx_{u,v})$.
By the composition law the projection $(\pi,\beta)\mapsto\pi$ is a homomorphism $\AutP(\Lk_0(P))\to\Symm(\Sp)_P$;
it is surjective,
since the cardinality condition supplies a $\beta$ over each admissible $\pi$,
and its kernel is exactly $K$, which is therefore a normal subgroup.
For the splitting, fix once and for all a total order on each fibre,
and for $\pi\in\Symm(\Sp)_P$ let $\beta_\pi$ be the unique order-preserving bijection
$\Rx_{u,v}\to\Rx_{\pi u,\pi v}$ on each fibre 
(well-defined by $|\Rx_{u,v}|=|\Rx_{\pi u,\pi v}|$).
The composite of two order-preserving bijections is order-preserving,
so $\beta_{\pi_1\pi_2}=\beta_{\pi_1}\beta_{\pi_2}$
and $\pi\mapsto(\pi,\beta_\pi)$ is a homomorphism splitting the projection.
Write $H$ for its image.
Then $H\cap K=1$, since $\beta_\pi=\id$ forces $\pi=\id$; and
every automorphism factors as a $K$-part times an $H$-part,
because for any $(\pi,\beta)$ the bijection $\kappa:=\beta\circ\beta_\pi^{-1}$ fixes each species,
so $\kappa\in K$ and $(\pi,\beta)=\kappa\cdot(\pi,\beta_\pi)$.
Hence $\AutP(\Lk_0(P))=K\,H$ with $K$ normal and $H\cap K=1$,
an internal semidirect product.
Conjugation of $\kappa\in K$ by $(\pi,\beta_\pi)$ carries the fibre permutation on $\Rx_{u,v}$ to one on
$\Rx_{\pi u,\pi v}$,
which is the stated action of $\Symm(\Sp)_P$ on $\prod_{(u,v)}\Symm(\Rx_{u,v})$
by permuting fibres according to $\Rx_{u,v}\mapsto\Rx_{\pi u,\pi v}$.
\end{proof}
 
\begin{remark}[The splitting is non-canonical]
\label{rem:ses}
The invariant content of Theorem~\ref{thm:autoL0} is the short exact sequence
$$
1\to K\to\AutP(\Lk_0(P))\to\Symm(\Sp)_P\to 1,\qquad
K=\textstyle\prod_{(u,v)}\Symm(\Rx_{u,v}),
$$
together with its splitting;
the semidirect-product form depends on the choice of fibre orderings used to define the splitting,
and is canonical only up to that choice.
Only the normal factor $K$ and the quotient $\Symm(\Sp)_P$ are intrinsic, and only $K$ is used below.
\end{remark}

The normal factor $K$ is the source of all base-level indistinguishability:
whenever a fibre $\Rx_{u,v}$ has two or more reactions, $K$ swaps them, and so no
observable already present at $\Lk_0$ --- source, target, stoichiometry,
incidence --- tells parallel reactions apart. (A freshly chosen functor out of
$\Lk_0(P)$ can of course distinguish them; the point is about the data carried
\emph{by the level}, as made precise by the stabiliser below.) Each decoration
cuts $K$ down to the subgroup that preserves it, and the size of what survives
measures how much the decoration sees.

\begin{corollary}[A decoration breaks exactly the symmetries it detects]
\label{cor:stab}
Let $D\colon\Rx\to A$ be a decoration (e.g.\ $\FH$, valued in $\RR$) and write
$\Lk_0(P)_D:=(\Lk_0(P),D)$ for the decorated level. The base automorphisms
lifting to $\Lk_0(P)_D$ are exactly those in the stabiliser
\[
\Aut(\Lk_0(P),D):=\{\varphi\in\AutP(\Lk_0(P)) : D\circ\varphi=D\}.
\]
Within the normal factor $K$ this is
$K_D=\{\beta\in K : D(\beta r)=D(r)\ \text{for all }r\}$, the fibrewise
permutations preserving the level sets of $D$. Hence adjoining $D$ breaks
precisely the fibre symmetries that move a pair on which $D$ differs --- no more,
no fewer.
\end{corollary}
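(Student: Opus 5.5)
The plan is to first fix what ``lifting to $\Lk_0(P)_D$'' means, and then read off both claims from the description of $\AutP(\Lk_0(P))$ in Theorem~\ref{thm:autoL0}.

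\emph{Step 1: unpack the lifting condition.} Take $\varphi\in\AutP(\Lk_0(P))$. A presentation-preserving automorphism of the decorated level $(\Lk_0(P),D)$ is an automorphism of the underlying base that respects the decoration, meaning $D\circ\varphi=D$. The forgetful map sends such an automorphism to its underlying $\varphi$. So $\varphi$ lifts exactly when it satisfies this equation, and the image of the forgetful map is the stated stabiliser.

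\emph{Step 2: reduce the equation to generators.} Read $D$ as the strict symmetric monoidal functor it determines by Theorem~\ref{thm:univ}, for example $\FH$ into $\BR$. The equation $D\circ\varphi=D$ is an equality of functors out of $\Lk_0(P)$. By the injectivity half of Theorem~\ref{thm:univ}, it holds if and only if the two functors agree on generators.
\begin{itemize}
\item On species the check is vacuous, since the targets are one-object.
\item On reactions it says $D(\beta r)=D(r)$ for all $r\in\Rx$, where $(\pi,\beta)$ is the generator data of $\varphi$ as in the proof of Theorem~\ref{thm:autoL0}.
\end{itemize}
When $D$ is given only as a function $\Rx\to A$, the condition is simply this generator-level equality. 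In either case the stabiliser is $\{(\pi,\beta): D\circ\beta=D\}$.

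\emph{Step 3: restrict to the normal factor.} Intersecting with $K$ means setting $\pi=\id$. Then $\beta$ preserves every fibre $\Rx_{u,v}$, and the condition becomes $D(\beta r)=D(r)$ for all $r$. This says exactly that $\beta$ maps each set $\Rx_{u,v}\cap D^{-1}(a)$ to itself. Hence $K_D\cong\prod_{(u,v)}\prod_{a\in A}\Symm(\Rx_{u,v}\cap D^{-1}(a))$.

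\emph{Step 4: the ``no more, no fewer'' statement.} Consider a transposition $\tau=(r_1\,r_2)$ inside a fibre.
\begin{itemize}
\item If $D(r_1)=D(r_2)$, then $\tau\in K_D$, so the symmetry survives.
\item If $D(r_1)\neq D(r_2)$, then $D(\tau r_1)=D(r_2)\neq D(r_1)$, so $\tau\notin K_D$, and $\tau$ is broken.
\end{itemize}
More generally, $\beta\in K$ fails to lie in $K_D$ if and only if it moves some $r$ to an $r'$ in the same fibre with $D(r')\neq D(r)$. This is the claimed characterisation.

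The main obstacle is Step 1. The paper defines the decorated level as a pair but does not pin down its automorphisms explicitly. I would make the definition explicit: an automorphism of the decorated level is an automorphism of the base that is compatible with $D$. Under that definition the proof is essentially tautological, and Step 2 is the only substantive reduction.
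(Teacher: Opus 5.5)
Your proposal is correct and follows the paper's proof essentially step for step. You define automorphisms of the decorated level as base automorphisms with $D\circ\varphi=D$, reduce that condition to the reaction component $\beta$, and then restrict to $K$, where $\pi=\id$ and the condition becomes preservation of the level sets of $D$ within each fibre. Two of your additions go slightly beyond what the paper writes: the appeal to the injectivity half of Theorem~\ref{thm:univ} to justify the reduction to generators, and the explicit decomposition $K_D\cong\prod_{(u,v)}\prod_{a}\Symm(\Rx_{u,v}\cap D^{-1}(a))$. Both are harmless sharpenings of what the paper states more briefly.
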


\begin{proof}
An automorphism of $\Lk_0(P)_D$ is
a base automorphism $\varphi$ that carries the decorated level to itself;
since the only added datum is $D$,
this holds exactly when $\varphi$ preserves it, i.e. $D\circ\varphi=D$,
which is the stabiliser condition.
As $D$ depends only on the reaction, $D\circ\varphi=D\circ\beta$,
so the condition involves only the fibre component $\beta$.
Restricting to $K$
(where $\pi=\id$ and $\beta$ acts within fibres),
$D\circ\beta=D$ says exactly that $\beta$ preserves the level sets of $D$.
The fibre permutations failing it are those moving a $D$-distinguished pair.
\end{proof}

Corollary~\ref{cor:stab} is the precise, and modest,
sense in which each level is \emph{forced}:
relative to the data carried so far,
the decoration $D$ breaks exactly the symmetries that conflate reactions $D$ separates,
and no others.
It gives a clean test for the \emph{fibre-forcing} part of non-collapse:
\emph{adjoining $D$ strictly refines the fibre-symmetry group iff $K_D\subsetneq K$},
i.e. iff some fibre contains two reactions on which $D$ differs.
(This is a statement about the normal factor $K$ only;
a decoration may also break a species symmetry in $\Symm(\Sp)_P$ while leaving $K$ untouched,
e.g.\ when all fibres are singletons so $K=1$,
so the test detects the parallel-reaction forcing, not every refinement.)
The forcing pairs of interest here live inside a single source--target fibre,
so $K_D\subsetneq K$ is the relevant criterion.
No claim is made that $\Lk_1$ is initial in some category of extensions;
the content is the stabiliser computation, which is true and sufficient.

\paragraph{The computation, run explicitly through $\Lk_3$.}
It is worth carrying the test out on a concrete network,
so that the abstract group of Theorem~\ref{thm:autoL0} is seen cutting down level by level.
Take the single fibre over the complex pair $(\ch{C}+\ch{O2},\,\ch{CO2})$ holding two parallel reactions,
\[
r_1,\,r_2\;\colon\;\ch{C}+\ch{O2}\longrightarrow\ch{CO2},
\]
the combustions of graphite and of diamond of Section~\ref{sec:L1}.
All other fibres are singletons,
so by Theorem~\ref{thm:autoL0} the entire base indistinguishability lives in this one fibre:
\[
K\;=\;\Symm(\Rx_{u,v})\;=\;\Symm(\{r_1,r_2\})\;\cong\;\ZZ/2,
\]
generated by the transposition $\tau$ swapping $r_1$ and $r_2$.
(The species factor $\Symm(\Sp)_P$ plays no role here,
as the forcing pair lies inside a fibre; it is $K$ that the decorations test.)
At $\Lk_0$ nothing distinguishes $r_1$ from $r_2$:
$\tau$ is a genuine symmetry of the bare network, $K=\ZZ/2$.

Now, adjoin the decorations of Section~\ref{sec:dict} in turn and apply Corollary~\ref{cor:stab},
$K_D=\{\beta\in K: D\circ\beta=D\}$, at each step.
\begin{itemize}[leftmargin=1.4em]
\item \emph{At $\Lk_1$.} The enthalpy functor takes the two combustion enthalpies,
which differ by the graphite--diamond transition enthalpy,
$\FH(r_1)\neq\FH(r_2)$. The swap $\tau$ does not preserve $\FH$,
so $\tau\notin K_{\FH}$ and
\[
K_{\FH}\;=\;\{\id\}\;\subsetneq\;K\;=\;\ZZ/2 .
\]
The inclusion is strict:
$\Lk_1$ is forced, and the broken generator $\tau$ is the lost symmetry the level records.
\item \emph{At $\Lk_2$.} Suppose instead a pair with $\FH(r_1)=\FH(r_2)$ but distinct entropies,
$\FS(r_1)\neq\FS(r_2)$ --- so that $\tau\in K_{\FH}$,
i.e. $\Lk_1$ does not separate them. 
Then $\tau$ fails to preserve the Gibbs datum, $\FG^T(r_1)\neq\FG^T(r_2)$, so
\[
K_{(\FH,\FS)}\;=\;\{\id\}\;\subsetneq\;K_{\FH}\;=\;\ZZ/2 .
\]
The entropy functor cuts the surviving group strictly further:
$\Lk_2$ is forced relative to $\Lk_1$.
\item \emph{At $\Lk_3$.} Suppose a pair sharing every thermodynamic datum,
$\FH(r_1)=\FH(r_2)$ and $\FS(r_1)=\FS(r_2)$,
but differing in rate: 
the uncatalysed and catalysed $\ch{CO2}$ hydrations of Section~\ref{sec:L3},
with $\FP(r_1)\neq\FP(r_2)$.
Now $\tau\in K_{(\FH,\FS)}$ but $\tau\notin K_{(\FH,\FS,\FP)}$, so
\[
K_{(\FH,\FS,\FP)}\;=\;\{\id\}\;\subsetneq\;K_{(\FH,\FS)}\;=\;\ZZ/2 .
\]
The kinetic functor is the first to separate the pair:
$\Lk_3$ is forced relative to $\Lk_2$.
\end{itemize}
In each case the surviving fibre group is read off Theorem~\ref{thm:autoL0} and
then cut by Corollary~\ref{cor:stab},
and the strict inclusion $K_D\subsetneq K_{D'}$ is exactly the statement that the level is not redundant.
The same computation runs at every decorated level because it is the same computation;
the surviving-symmetry groups for larger networks are tabulated in \cite[Table~3]{Han2026}.

\paragraph{The forcing class as a cokernel.}
The stabiliser computation has a dual reading that names the class the lower levels referred forward to.
For each step write
$\varphi_k\colon\AutP(\Lk_k(P))\to\AutP(\Lk_{k-1}(P))$
for the restriction that forgets the level-$k$ decoration,
giving the exact sequence of pointed sets
\[
1\;\to\;\ker\varphi_k\;\to\;\AutP(\Lk_k(P))\;
\xrightarrow{\ \varphi_k\ }\;\AutP(\Lk_{k-1}(P))\;\to\;\coker\varphi_k\;\to\;1,
\]
with $\coker\varphi_k:=\AutP(\Lk_{k-1}(P))/\operatorname{im}\varphi_k$ the set of left cosets,
based at the trivial coset \cite[\S2.5]{Han2026}.
Since $\operatorname{im}\varphi_k$ need not be normal,
$\coker\varphi_k$ is in general a pointed set rather than a group.
A non-trivial coset is represented by a symmetry of $\Lk_{k-1}(P)$ that does not lift to $\Lk_k(P)$,
that is,
by a base relabelling the decoration breaks: exactly the fibre permutation of Corollary~\ref{cor:stab}
that moves a pair on which the new datum differs.
The forcing test $K_D\subsetneq K$ is therefore the statement that $\coker\varphi_k$ is non-trivial,
and the lost symmetry recorded at each of $\Lk_1,\Lk_2,\Lk_3$ is the corresponding cokernel class.
What does \emph{not} reduce to this pattern is the step to $\Lk_4$,
where the distinction being made is not purely numerical.
\subsubsection{$\Lk_4$: mechanism as a base change}
\label{sec:L4}

\paragraph{What $\Lk_3$ cannot see: mechanism.}
Two reactions can have the same net stoichiometry,
the same equilibrium datum,
and the same mass-action profile in the chosen effective $\Lk_3$ description,
and still differ in \emph{mechanism}:
in which bonds break and form,
and whether the transformation is \emph{concerted} (a single elementary step)
or \emph{stepwise} (passing through a discrete intermediate).
Such a pair has the same $\Lk_3$-profile $(s,t,\FH,\FS,\FP)$,
even though the two remain distinct reaction labels
--- distinct morphisms of the syntax ---
at every level $\Lk_0,\dots,\Lk_3$.
The data \emph{distinguishing} them is not numerical:
no observable factoring through the lower profile $(s,t,\FH,\FS,\FP)$ separates a concerted from a stepwise route,
since by hypothesis they agree on all of it. 
Reaction mechanism is recoverable only by recording
\emph{how} the transformation factors through molecular structure ---
which bonds move, through what intermediate ---
and this is the one step of the tower so far that changes the underlying category rather than decorating it.

One overclaim must be pre-empted.
One \emph{can} of course label reactions by mechanism by hand
--- a function $M\colon\Rx\to\{\text{concerted},
\text{stepwise}\}$ extends,
by the universal property,
to a functor out of $\Lk_0(P)$.
The content of $\Lk_4$ is not that mechanism is undefinable below it,
but that mechanism is not recoverable \emph{from the lower observables}:
it is not a function of $(\FH,\FS,\FP)$, hence genuinely new data,
and $\Lk_4$ supplies its canonical structural source
--- factorisation through molecular graphs ---
rather than an ad hoc label.

\paragraph{The ambient category of molecular graphs.}
Following the graph-rewriting tradition \cite{Ehrig1973,LackSobocinski},
species-as-tokens are replaced by species-as-graphs.
A \emph{labelled molecular graph} is a finite graph whose vertices carry element type (atomic number),
formal charge, lone-pair and radical counts and whose edges carry bond order;
let $\LGraph$ be the (adhesive) category of these with label-preserving monomorphisms.
To read off the Petri-net boundary,
a graph is equipped with a \emph{boundary species assignment} on its connected components,
following \cite[Def.~6.13]{Han2026}.

\begin{definition}[Grounded molecular graphs]
\label{def:LGraphP}
$\LGraphP$ has as objects pairs $(G,\mathrm{spec})$,
where $G\in\LGraph$ and $\mathrm{spec}\colon\pi_0(G)\to\Sp$ assigns
a $P$-species to each connected component (each individual molecule) of $G$.
Its morphisms are the underlying label-preserving monomorphisms of $\LGraph$;
no compatibility with $\mathrm{spec}$ is imposed,
since $\mathrm{spec}$ is boundary data rather than rewrite data.
The rewriting itself takes place in $\LGraph$,
taken to be adhesive so that double-pushout (DPO) rewriting is well behaved \cite{LackSobocinski};
the assignment $\mathrm{spec}$ is used only to record which species a molecule represents,
i.e. to read the Petri-net boundary
(so reactions merging or splitting components are unobstructed).
The molecular label algebra is detailed in \cite[\S6.2]{Han2026}.
\end{definition}

A \emph{DPO rule} is a span $L\hookleftarrow K\hookrightarrow R$ in $\LGraph$
($L$ the reactant pattern, $R$ the product, $K$ the preserved context);
it is \emph{chemically admissible} when it satisfies the local label-balance conditions of the chosen molecular-graph formalism
--- atom identities preserved, and changes in bond order, charge, lone-pair
and radical count obeying the allowed elementary rewrite schema \cite[\S6.3]{Han2026}.
The full label algebra is not reproduced here;
for this paper it suffices that admissibility is a decidable local condition.

The status of $\Lk_4$ is fixed as \emph{syntax},
with each generator tied to the network.
A \emph{chemical DPO rule} for $P$ is a fully instantiated,
species-level rule symbol:
a span $p$ in $\LGraph$ whose reactant and product graphs $L$, $R$ realise specific complexes of $P$
(not merely an ungrounded local rewrite schema),
together with a \emph{generator label} $\gamma(p)\in\Rx$:
the reaction of $P$ whose net transformation the rule realises,
so that the boundary species of $L$ and $R$ are the complexes $s(\gamma(p))$ and $t(\gamma(p))$.
The label is part of the rule data, exactly as in \cite[Def.~6.14]{Han2026};
it is what lets a mechanistic rule be projected to a specific Petri-net generator rather than to an ambiguous source--target pair,
and full instantiation is what makes $\gamma(p)$, and hence the forgetful image below, well defined.

\begin{definition}[Mechanistic level]
\label{def:L4}
$\Lk_4(P)$ is the free strict symmetric monoidal category on the signature of chemical DPO rules for $P$:
objects are finite disjoint unions of grounded molecular graphs in $\LGraphP$,
generating morphisms are the admissible labelled rules $p$
(each typed $L\to R$ by its reactant and product graphs and carrying its label $\gamma(p)$),
and morphisms are formal composites and tensors of generating rules and structural symmetries,
modulo the strict symmetric monoidal congruence.
DPO derivability in $\LGraph$ interprets each rule symbol.
\end{definition}

\begin{proposition}[Universal property of $\Lk_4(P)$]
\label{prop:L4free}
$\Lk_4(P)$ is the free strict SMC on the signature of chemical DPO rules for $P$:
for every strict SMC $C$,
restricting a strict symmetric monoidal functor $F\colon\Lk_4(P)\to C$ to its generators
--- an object of $C$ for each grounded molecular graph and,
for each rule $p$ typed $L\to R$,
a morphism between the images of $L$ and $R$ ---
is a bijection onto all such generator assignments, and is natural in $C$.
\end{proposition}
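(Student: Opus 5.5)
The plan is to follow the proof of Theorem~\ref{thm:univ} step for step, replacing the Petri-net signature by the signature of chemical DPO rules. Definition~\ref{def:L4} presents $\Lk_4(P)$ by generators and relations: objects are freely generated under $\otimes$ by grounded molecular graphs, and morphisms are congruence classes of formal composites and tensors of rule symbols $p\colon L\to R$ and structural symmetries. Freeness is then a statement about a term model. I will first fix the generator data precisely. A generator assignment is a function $g$ from grounded molecular graphs to $\Ob(C)$, extended to a strict monoidal map $\bar g$ on disjoint unions. It also includes, for each rule $p$, a morphism $h(p)\colon\bar g(L)\to\bar g(R)$. The label $\gamma(p)$ and the assignment $\mathrm{spec}$ are part of the signature's data but place no constraint on $(g,h)$, so they play no role in the bijection.

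\emph{Injectivity.} Suppose two strict symmetric monoidal functors $F,F'$ agree on the generators. Then they agree on objects, because both are strict monoidal and objects are $\otimes$-words in graphs. They agree on the structural symmetries, since strict symmetry forces $F(\sigma)=\sigma^C$. They agree on rule symbols by hypothesis. Every morphism is a class of composites and tensors of these generators, so $F=F'$.

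\emph{Surjectivity.} Given $(g,h)$, I will define $F$ on formal expressions by sending $\circ$ to $\circ$, $\otimes$ to $\otimes$, symmetries to $\sigma^C$, and $p$ to $h(p)$. I will then check that the map descends to the quotient. The congruence consists exactly of the strict symmetric monoidal axioms, which hold in $C$. By construction, $F$ preserves composition, tensor, units and symmetry.

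\emph{Naturality in $C$.} Post-composing $F$ with a strict symmetric monoidal $\phi\colon C\to C'$ post-composes $(g,h)$ with $\phi$, exactly as in Theorem~\ref{thm:univ}.

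The main obstacle is the object monoid. In Theorem~\ref{thm:univ}, the objects form the free \emph{commutative} monoid, and the statement requires $\Ob(C)$ to be strictly commutative. Here the statement allows an arbitrary strict SMC $C$, so the argument only works if the objects of $\Lk_4(P)$ form the free (noncommutative) monoid on grounded graphs. In that reading, finite disjoint unions are formal ordered lists, and reordering is witnessed by the symmetries $\sigma$ rather than by object identity. I would adopt this reading explicitly and note that the "strict symmetric monoidal congruence" of Definition~\ref{def:L4} imposes no relations on objects. If instead disjoint union is taken to be strictly commutative, the hypothesis of Remark~\ref{rem:strictcomm} must be added, namely that $\Ob(C)$ is strictly commutative. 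The proof then goes through verbatim, with $\bar g$ the extension to the free commutative monoid.

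A second point needs a careful check. Disjoint union of graphs is only associative and unital up to canonical isomorphism. To make the tensor strict, objects must be taken as formal words rather than actual unions in $\LGraph$. DPO derivability in $\LGraph$ is only an interpretation of the rule symbols and is not part of the presentation, so it creates no extra relations that $F$ would have to respect.
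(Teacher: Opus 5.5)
Your proof is correct and takes the same route as the paper. Injectivity holds because every morphism is a congruence class of composites and tensors of rule symbols and symmetries, and strictness forces the images of the symmetries; surjectivity holds because $C$ satisfies the strict symmetric monoidal relations that generate the congruence; naturality follows by post-composition. Your explicit choice of formal words in grounded graphs as the free, noncommutative object monoid is what the paper's proof tacitly assumes, since it lists only strict associativity and unit as object-level relations, and it is the reading under which the statement holds for an arbitrary strict SMC $C$.
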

 
\begin{proof}
For uniqueness, every morphism of $\Lk_4(P)$ is, by Definition~\ref{def:L4},
a formal composite and tensor of generating rules and structural symmetries,
taken modulo the strict symmetric monoidal congruence.
A strict symmetric monoidal functor preserves composition, tensor, and the symmetries,
so $F$ is determined on every morphism once its values on the generators are fixed:
the images of the symmetries are forced by strictness,
and the image of a composite or tensor is the corresponding composite or tensor of images.
Two functors agreeing on the generators therefore agree everywhere,
giving injectivity of the restriction.
 
For existence,
given a generator assignment of the stated form,
define $F$ on generators by it and extend to composites and tensors by the functor and
monoidality equations.
This is well defined on congruence classes because $C$,
being a strict SMC,
satisfies exactly the relations generating the congruence:
the interchange law,
the naturality and coherence of the symmetry,
and strict associativity and unit.
Hence congruent formal expressions receive equal images,
and $F$ descends to $\Lk_4(P)$.
By construction $F$ is strict symmetric monoidal and restricts to the given assignment,
giving surjectivity.
Naturality in $C$ is immediate,
since post-composing $F$ with a strict symmetric monoidal $\phi$ post-composes the generator values with $\phi$.
\end{proof}

\paragraph{The forgetful functor $U_4$.}
The mechanistic level is tied back to the rest of the tower by a forgetful functor,
exactly as the decorated levels are
--- the difference being that $U_4$ goes between two \emph{different} free categories rather than dropping a functor.

\begin{proposition}[The forgetful functor $U_4$]
\label{prop:U4}
Stripping all bond-level structure defines a strict symmetric monoidal functor
$U_4\colon\Lk_4(P)\to\Lk_0(P)$ that on objects
sends a grounded graph
$(G,\mathrm{spec})$ to the multiset of species of its components counted by $\mathrm{spec}$,
and on a generating rule $p$ returns the labelled generator $\gamma(p)\in\Rx$,
regarded as a morphism of $\Lk_0(P)$.
The lower decorations lift along it, with $\FH\circ U_4$, $\FS\circ U_4$, and $\FP\circ U_4$
the energetic, equilibrium, and kinetic data of $\Lk_4(P)$.
Assembling these gives the \emph{intermediate forgetful functor}
\[
U_4^{\Lk_3}\colon\Lk_4(P)\to\Lk_3(P),\qquad
U_4^{\Lk_3}(d)=\bigl(U_4(d),\,\FH U_4(d),\,\FS U_4(d),\,\FP U_4(d)\bigr),
\]
whose underlying strict symmetric monoidal functor is $U_4$ and which satisfies
$U_4=U_1\circ U_2\circ U_3\circ U_4^{\Lk_3}$.
Thus $\Lk_4(P)$ retains all lower information and adds only the bond-level structure carried by its morphisms
\cite[\S6.6, Def.~6.30, Prop.~6.31]{Han2026}.
\end{proposition}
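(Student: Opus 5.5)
The plan is to construct $U_4$ by the universal property of $\Lk_4(P)$ (Proposition~\ref{prop:L4free}) with $C=\Lk_0(P)$. This target is a strict SMC whose object monoid $\Mon$ is strictly commutative. First I will specify the object assignment. A grounded graph $(G,\mathrm{spec})$ goes to $\sum_{c\in\pi_0(G)}\mathrm{spec}(c)\in\Mon$. Since objects of $\Lk_4(P)$ are finite disjoint unions and $\pi_0$ turns disjoint union into disjoint union of component sets, this assignment sends $\otimes$ to $+$ and the empty graph to $0$. It is therefore a monoid homomorphism on objects, as strictness requires.

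Second, I will specify the morphism assignment on generators. A chemical DPO rule $p$ is typed $L\to R$, and by the grounding requirement in its definition the boundary species of $L$ and $R$ are $s(\gamma(p))$ and $t(\gamma(p))$. Hence the object assignment sends $L\mapsto s(\gamma(p))$ and $R\mapsto t(\gamma(p))$. So $\gamma(p)$, viewed as a generating morphism of $\Lk_0(P)$, has exactly the source and target demanded by the generator assignment of Proposition~\ref{prop:L4free}. That proposition then yields a unique strict symmetric monoidal $U_4$; in particular, structural symmetries go to the symmetries $\sigma_{u,v}$ of $\Lk_0(P)$.

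\textbf{Main obstacle.} The delicate point is this typing check. The object assignment must see exactly the species recorded by $\mathrm{spec}$ on $L$ and $R$, including cases where the rule merges or splits components. I would rely on full instantiation: the clause ``boundary species of $L$ and $R$ are the complexes $s(\gamma(p)),t(\gamma(p))$'' is stated as part of the rule data. Morphisms of $\LGraphP$ impose no compatibility with $\mathrm{spec}$, so nothing else needs checking.

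Third, I will handle the lifted decorations. Each of $\FH\circ U_4$, $\FS\circ U_4$, $\FP\circ U_4$ is a composite of strict symmetric monoidal functors, hence strict symmetric monoidal out of $\Lk_4(P)$. Tupling them with $U_4$ gives $U_4^{\Lk_3}(d)$, an object/morphism of $\Lk_3(P)=(\Lk_0(P),\FH,\FS,\FP)$ whose decoration is the restriction of the fixed $\FH,\FS,\FP$ along $U_4$. Finally, each $U_k$ for $k\le3$ only discards a decoration and acts as the identity on the underlying base data. So $U_1U_2U_3U_4^{\Lk_3}$ returns the first component, which is $U_4$; this gives the stated factorisation, and the retention of lower information follows from it directly.
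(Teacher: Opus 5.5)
Your proposal is correct and follows the paper's proof essentially step for step. You build $U_4$ from the universal property of Proposition~\ref{prop:L4free} with target $\Lk_0(P)$ and check the typing of each rule via the grounding condition that fixes the boundary species of $L,R$ to be $s(\gamma(p)),t(\gamma(p))$. You then obtain $U_4^{\Lk_3}$ and the factorisation $U_4=U_1\circ U_2\circ U_3\circ U_4^{\Lk_3}$ by composing with the strict decorations, noting that each $U_k$ only discards the top decoration. The differences are only of emphasis: you check explicitly that the object assignment is a monoid homomorphism, while the paper adds that the label $\gamma(p)$ is needed because $\Lk_0(P)$ is not thin.
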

 
\begin{proof}
The object and generator assignments above form a generator assignment in the sense of Proposition~\ref{prop:L4free},
with target $\Lk_0(P)$: each grounded graph is sent to a species multiset,
an object of $\Lk_0(P)$, and each generating rule $p$ to the morphism $\gamma(p)$.
This is type-correct
because grounding fixes the boundary species of the reactant and product graphs $L,R$ of $p$
to be the complexes $s(\gamma(p))$ and $t(\gamma(p))$,
so $\gamma(p)$ is indeed a morphism between the images of $L$ and $R$.
The label is needed here:
$\Lk_0(P)$ is not thin,
so a rule cannot be sent to ``the morphism with matching source and target'' (there may be several),
and $\gamma(p)$ resolves the choice \cite[Def.~6.27, Rem.~6.28]{Han2026}.
By the universal property of Proposition~\ref{prop:L4free}
this assignment extends uniquely to a strict symmetric monoidal functor $U_4\colon\Lk_4(P)\to\Lk_0(P)$,
which is therefore well defined.
 
Composing $U_4$ with the decoration functors of the lower tower gives functors
$\FH\circ U_4$, $\FS\circ U_4$, $\FP\circ U_4$ out of $\Lk_4(P)$,
the lifted energetic, equilibrium, and kinetic data.
Pairing $U_4$ with these defines $U_4^{\Lk_3}$ as displayed;
its underlying strict symmetric monoidal functor is $U_4$ by construction.
Finally, each $U_k$ drops the topmost decoration at its level,
so $U_3$ applied to $U_4^{\Lk_3}$ discards $\FP U_4$ and returns the $\Lk_2$-datum,
and chaining $U_1\circ U_2\circ U_3$ removes the remaining decorations, leaving $U_4$;
this is the factorisation $U_4=U_1\circ U_2\circ U_3\circ U_4^{\Lk_3}$ \cite[Def.~6.30, Prop.~6.31]{Han2026}.
\end{proof}

This places $\Lk_4$ alongside the rest of the tower as a refinement of the level below (Section~\ref{sec:forcing}): 
the question is what the effective $\Lk_3$ description cannot record,
and the answer is carried by the morphisms of $\Lk_4(P)$ rather than by a surviving-symmetry computation.

\paragraph{What $\Lk_4$ adds to $\Lk_3$: the structural separation.}
The same chemistry admits two Petri-net descriptions at different resolution.
The \emph{elementary} network $P_{\mathrm{elem}}$ lists every elementary step and
treats each short-lived intermediate as a species of its own.
The \emph{effective} network $P_{\mathrm{eff}}$ is
obtained from it by a quasi-steady-state (QSS) reduction that eliminates those intermediates,
so that a sub-mechanism collapses to a single reaction carrying an aggregated rate constant \cite[\S5.8]{Han2026}; 
this is the network the rest of the development works with,
the one whose reactions are the effective channels of $\Lk_0$--$\Lk_3$.
The reduction is a functor
$$
Q\colon \Lk_3(P_{\mathrm{elem}})\longrightarrow \Lk_3(P_{\mathrm{eff}})
$$
between the kinetic levels of the two descriptions.
 
The two mechanistic routes both live in $\Lk_4(P_{\mathrm{elem}})$,
where intermediates are present:
a concerted route is a \emph{single} generating DPO rule,
a stepwise route through the intermediate a genuine \emph{two-step} composite $p_{\mathrm{elim}}\circ p_{\mathrm{add}}$.
In the elementary description these are already different at $\Lk_3$:
the stepwise route has the intermediate as a species and a two-step kinetics.
They become indistinguishable only after $Q$.
Where the QSS reduction is valid,
the stepwise route's image under $Q$ is a single reaction sharing every $\Lk_3$ datum with the concerted route,
and no bulk-kinetic observable distinguishes the two;
where it is not valid, $Q$ does not apply,
the stepwise route keeps its two-step kinetics,
and $\Lk_3$ separates the routes on its own.
The conflation is therefore contingent on the coarse-graining,
not categorical \cite[\S6.5]{Han2026};
what $\Lk_4$ contributes is a structural record of the mechanism that survives the reduction,
available whether or not the effective kinetics happens to distinguish the routes.
The distinction it supplies lives upstream of $Q$,
in the derivation length in $\Lk_4(P_{\mathrm{elem}})$.
 
\begin{proposition}[Mechanism refines the effective profile]
\label{prop:Dmech}
Mechanism is not a function of the effective profile $(s,t,\FH,\FS,\FP)$: a single
effective channel can be realised by DPO derivations of different length, a
distinction no decoration factoring through that profile can record.

Precisely, let $r\in\Rx(P_{\mathrm{eff}})$ be an effective reaction with profile
$(s,t,\FH,\FS,\FP)$, realised by two chemically distinct routes --- a concerted
route, given by a single chemical DPO rule $p_{\mathrm c}$ in
$\Lk_4(P_{\mathrm{elem}})$, and a stepwise route, given by a proper composite
$p_{\mathrm{elim}}\circ p_{\mathrm{add}}$ through an intermediate species --- both
with effective image $r$ under $Q\circ U_4^{\Lk_3}$. Then $p_{\mathrm c}$ and
$p_{\mathrm{elim}}\circ p_{\mathrm{add}}$ are distinct morphisms of
$\Lk_4(P_{\mathrm{elem}})$, one a generating rule and the other a proper composite,
with the same image under $Q\circ U_4^{\Lk_3}$. Thus $\Lk_4$ does not merely
decorate the effective channel; it replaces effective reactions by mechanistic
factorisations.
\end{proposition}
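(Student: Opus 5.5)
The plan has two parts: the two morphisms are distinct in $\Lk_4(P_{\mathrm{elem}})$, and they have the same image under $Q\circ U_4^{\Lk_3}$. The slogan that mechanism is not a function of the effective profile then follows formally. Any decoration that factors through $(s,t,\FH,\FS,\FP)$ on $\Lk_3(P_{\mathrm{eff}})$, precomposed with $Q\circ U_4^{\Lk_3}$, takes equal values on the two routes. It therefore cannot record whether a realisation is a single rule or a proper composite.

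\emph{Equality of images.} By Proposition~\ref{prop:U4}, $U_4^{\Lk_3}$ sends $p_{\mathrm c}$ to $\gamma(p_{\mathrm c})$ decorated by $(\FH,\FS,\FP)$. By strict monoidality it sends the composite $p_{\mathrm{elim}}\circ p_{\mathrm{add}}$ to $\gamma(p_{\mathrm{elim}})\circ\gamma(p_{\mathrm{add}})$ in $\Lk_3(P_{\mathrm{elem}})$. The hypothesis of the proposition is exactly that both have effective image $r$ under $Q$. I would take this as given, together with the QSS construction of $Q$ that collapses the two-step path through the intermediate to the single effective generator $r$ with aggregated rate, as cited from \cite[\S5.8]{Han2026}. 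No computation is needed beyond unpacking the definitions.

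\emph{Distinctness.} This is the real content, and the step I expect to be the main obstacle. The two morphisms share source and target. What must be ruled out is that the strict symmetric monoidal congruence identifies a generator with a proper composite. I would build a length invariant with Proposition~\ref{prop:L4free}. Let $C=\mathrm{B}\NN$ be the one-object strict SMC with endomorphisms $(\NN,+)$, composition and tensor both $+$, and symmetry $0$. Assign every generating rule the value $1$. The universal property gives a strict symmetric monoidal functor $\ell\colon\Lk_4(P_{\mathrm{elem}})\to\mathrm{B}\NN$ that counts generating rules and gives value $0$ to identities and symmetries. Then $\ell(p_{\mathrm c})=1$ and $\ell(p_{\mathrm{elim}}\circ p_{\mathrm{add}})=2$. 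Since $\ell$ is well defined on congruence classes, the two morphisms are distinct.

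The subtlety is to make sure $p_{\mathrm{add}}$ and $p_{\mathrm{elim}}$ are genuine generating rules and not identities. This holds because the stepwise route passes through an intermediate species distinct from the reactants and products, so neither step is an identity. The step also needs the one-object target $\mathrm{B}\NN$ to have a strictly commutative object monoid, which it trivially does.
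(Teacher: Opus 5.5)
Your proposal is correct and follows the paper's proof. Distinctness comes from generator-length in the free category $\Lk_4(P_{\mathrm{elem}})$, equality of images comes from the hypothesis together with the QSS collapse under $Q$, and the fact that no profile-factoring decoration can record the difference follows formally. The only difference is that you justify the paper's bare assertion (that the symmetric monoidal congruence relates only expressions of equal generator-length) with an explicit strict symmetric monoidal functor $\ell\colon\Lk_4(P_{\mathrm{elem}})\to\mathrm{B}\NN$ obtained from Proposition~\ref{prop:L4free}. That is a cleaner justification of the same step, and it makes your worry about $p_{\mathrm{add}},p_{\mathrm{elim}}$ being identities unnecessary, since $\ell$ assigns $1$ to every generating rule by definition.
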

 
\begin{proof}
The two routes are distinct already in $\Lk_4(P_{\mathrm{elem}})$.
The concerted route is the single generator $p_{\mathrm c}$;
the stepwise route is the composite $p_{\mathrm{elim}}\circ p_{\mathrm{add}}$,
which by freeness (Proposition~\ref{prop:L4free}) is not a generating rule,
since the symmetric monoidal congruence relates only formal expressions of equal generator-length
and the symmetries it introduces contribute no generators;
a two-generator composite is therefore never identified with a single generator.
Their effective images agree.
Under $U_4^{\Lk_3}\colon\Lk_4(P_{\mathrm{elem}})\to\Lk_3(P_{\mathrm{elem}})$ the
generator maps to a one-step elementary reaction,
on which the coarse-graining $Q$ acts trivially, returning $r$;
the composite maps to the two-step path through the intermediate,
which $Q$ collapses to the same $r$ with its aggregated profile \cite[\S5.8]{Han2026}.
Hence both have image $r$ under $Q\circ U_4^{\Lk_3}$,
while remaining distinct morphisms of $\Lk_4(P_{\mathrm{elem}})$ by the length argument above.
Any decoration factoring through the effective profile $(s,t,\FH,\FS,\FP)$ identifies the two,
since their profiles agree by hypothesis;
what separates them is the derivation length, which that profile does not see.
The chemical instance discharging this proposition is treated in the application section
\cite[Prop.~6.40]{Han2026}.
\end{proof}

Like the decorated levels of Section~\ref{sec:forcing},
this records a distinction the lower level cannot express; unlike them,
it does so by changing the category rather than adding a functor.
The extension is structural, not numerical:
$U_4$ replaces the objects and morphisms of the level below,
so the new datum is carried by the derivations themselves,
and the separation is detected by inequality of those derivations rather than by a symmetry that fails to survive. 
This sharpens Remark~\ref{rem:generators},
where additive generator bookkeeping was seen to forget sequential structure: the structure it forgets is,
in this instance,
exactly the two-step factorisation $p_{\mathrm{elim}}\circ p_{\mathrm{add}}$ that $\Lk_4$ records and the coarse-graining $Q$ discards.
 
The forcing form of the lower levels does reappear,
but only contingently.
On the effective network, where $Q$ has already identified the two routes, 
their swap is an automorphism of $\Lk_3(P_{\mathrm{eff}})$ that does not lift,
a non-trivial $\coker(\varphi_4)$ in the sense of Section~\ref{sec:forcing}.
That reading is available only after $Q$ and only in its regime,
whereas the structural statement needs neither;
the structural statement is therefore the primary one.
A concrete instance is worked in the application section,
where it also shows why the distinction matters for network-level questions.

\paragraph{A half-step: stereochemistry.}
Between mechanism and geometry sits one further refinement,
needed in the sequel but slight enough to record here rather than in a section of its own.
A DPO rule fixes which bonds break and form but not the three-dimensional arrangement of the atoms around a reacting centre,
so two reactions realising the same rule can still differ in stereochemical outcome
--- retention versus inversion at a centre,
formation of one enantiomer versus its mirror image.
The distinction is invisible at $\Lk_4$,
where the rule is the same;
it is the same kind of gap the tower has met before,
and it is closed by adding the missing datum.
 
The added datum is a chirality label.
Each molecular graph is augmented with a sign at every stereocentre,
a map $\sigma\colon\mathrm{Stereo}(G)\to\{\pm1\}$ recording its handedness,
and the chirality-symmetry group $\Gstar=\Aut(G)\ltimes\mathbb{Z}_2^{k}$ acts on the labelled graphs,
its $\mathbb{Z}_2^{k}$ factor flipping the signs at the $k$ stereocentres \cite[\S7]{Han2026}.
The level $\Lk_{4.5}(P)$ is built from $\Gstar$-\emph{equivariant} DPO rules:
rules required to act compatibly with this action,
so that a rule now carries how its reacting centres transform under chirality operations.
Two reactions that share a bare $\Lk_4$ rule but produce opposite handedness
--- a mechanism and its mirror image ---
differ in their $\sigma$-signs and so become distinct equivariant rules at $\Lk_{4.5}$.
 
The forgetful functor $U_{4.5}\colon\Lk_{4.5}(P)\to\Lk_4(P)$ drops the chirality label and the equivariance.
Its forcing is again a non-trivial $\coker(\varphi_{4.5})$:
the swap of two enantiomeric mechanisms is an automorphism of $\Lk_4(P)$, where the rule is the same,
that does not lift to $\Lk_{4.5}(P)$,
where the opposite $\sigma$-signs make the two no longer the same equivariant rule.
This is an enrichment of the morphisms by a symmetry,
a third type of extension after decoration and base change \cite[Rem.~7.24]{Han2026},
and it is all of $\Lk_{4.5}$ the present development uses:
the geometric and electronic-structure levels of Sections~\ref{sec:L5}--\ref{sec:L6} are built over $\Lk_{4.5}(P)$, 
and their forcing pairs are reactions already identical at $\Lk_{4.5}$.
\subsubsection{$\Lk_5$: geometry, when the datum becomes a function on a manifold}
\label{sec:L5}

The levels to this point have been combinatorial.
Every datum is a finite assignment to generators:
a real number, a Markov generator, a labelled DPO rule.
The next two levels leave that world,
and they record the physical structure internal to a single molecular species rather than the composition of a network,
and the datum that each adjoins is no longer finite.

At $\Lk_5$ it is a function on a continuous configuration space,
and at $\Lk_6$ (Section~\ref{sec:L6}) a topological invariant of a bundle over that space.
The forcing apparatus above still applies, but in a new mode.
The decorated levels were forced by \emph{separation}:
a decoration broke a base symmetry that conflated two reactions.
The geometric and electronic levels are forced by \emph{object refinement}:
a single lower object splits into distinct upper ones,
and the new datum is what the lower level could record as a free number but not \emph{derive}.
Their detailed constructions are given in \cite{Han2026};
here, only the data needed for the present argument are recorded,
since this is where the genuinely physical content of chemistry first enters the formal description.

\paragraph{What $\Lk_{4.5}$ cannot do: explain an isotope rate difference.}
Consider the $\mathrm{S_N2}$ substitution run with two isotopologues,
\[
r_{\mathrm H}\colon \ch{CH3Br} + \ch{OH-}\to\ch{CH3OH}+\ch{Br-},\qquad
r_{\mathrm D}\colon \ch{CD3Br} + \ch{OH-}\to\ch{CD3OH}+\ch{Br-}.
\]
The two reactions run at measurably different rates \cite[\S7.7]{Han2026}.
The respective rate constants $k_{\mathrm H}$ and $k_{\mathrm D}$ are
the $\Lk_3$ data $\FP(r_{\mathrm H})$ and $\FP(r_{\mathrm D})$ (respectively),
and they are not equal.
The lower tower records the two values but does not explain them,
and nothing below $\Lk_5$ says why they differ.

Up to $\Lk_{4.5}$, the two substrates are the same object.
The vertex label in the grounded category $\mathbf{LGraph}_P$ records the \emph{element},
i.e. the atomic number.
Hydrogen and deuterium are the same element.
The replacement of every H by D therefore leaves the labelled graph the same,
the equivariant DPO rule,
and the chirality lift unchanged,
and neither carbon is a stereocentre.
The swap $r_{\mathrm H}\leftrightarrow r_{\mathrm D}$ is an automorphism of $\Lk_{4.5}(P)$.

One might object that the mass could simply be added as a vertex label,
and the difference declared a new stoichiometric datum.
A discrete tag of this kind is, in fact, available:
the label algebra of \cite[Def.~6.2]{Han2026} carries an $\mathrm{iso}$ component,
and the mechanistic level uses it to track atoms,
for instance to follow an $^{18}$O label through a substitution \cite[Rem.~6.3]{Han2026}.
But a tag cannot do the work required here 
for a reason that is physical and notational.
The chemistry below the metric is governed by electronic structure.
The electronic Hamiltonian, and thus the potential energy surface it determines,
depends on the nuclear \emph{charge} and position, but not on the nuclear \emph{mass} \cite{BornOppenheimer1927}.
Therefore, two isotopologues share one and the same potential energy surface;
they differ only in nuclear kinetic energy \cite[Ch.~2]{Wolfsberg2009}.
Mass enters chemistry through that kinetic energy alone,
and it enters as a continuous quantity, not as a discrete type.
A label can record \emph{that} two isotopologues differ,
but it cannot supply the structure from which the difference follows.
That structure is the mass-weighted metric on the space of nuclear positions,
and it is the datum $\Lk_5$ adjoins \cite[Rem.~8.2]{Han2026}.

This also shows why $\Lk_5$ is the \emph{minimal} extension of $\Lk_{4.5}$ that accounts for the difference.
Mass cannot be carried by any datum the lower levels admit.
A decoration on $\Lk_0(P)$ is a number attached to a reaction; it can store the rate, but storing is not deriving, and a stored number explains nothing.
A vertex label is a discrete type; it cannot carry a continuous mass, and if it is added it sits inert at the graph level, where it changes neither the rule nor any lower datum.
Mass acquires force only when it weights a kinetic energy, and a kinetic energy is a quadratic form on velocities, that is a metric.
A metric in turn requires a space of nuclear geometries to live on; indeed the mass-weighted metric descends to that space only when mass is constant on each graph-automorphism orbit \cite[Rem.~8.2]{Han2026}, so mass and geometry must be introduced together.
The configuration space with its mass-weighted metric is thus the least structure that lets an isotope difference follow from anything, and that is exactly $\Lk_5$.

The forcing here is therefore of a new kind.
At the decorated levels a decoration broke a symmetry that conflated two reactions.
At $\Lk_5$ the rate is already recorded below, as a free number; what is new is that the geometry \emph{constrains} that number rather than separating a pair.
The mechanism is the change in the carbon--hydrogen out-of-plane bending vibrations between the reactant and the transition state.
That change shifts the zero-point energy by an amount that depends on nuclear mass, and the shift differs between H and D \cite[\S8.1]{Wolfsberg2009}.
Computing it requires two data that exist nowhere below $\Lk_5$: the bending force constants, which are entries of a mass-weighted Hessian, and the transition-state geometry at which they are evaluated \cite[\S8.1]{Han2026}.
Both presuppose three-dimensional nuclear positions, a curvature of the energy landscape, and a nuclear mass.

\paragraph{The geometric decoration.} The level works throughout within the
Born--Oppenheimer approximation \cite[\S8.2]{Han2026}, in which electronic and nuclear motion decouple at leading order in $(m_e/M)^{1/2}$ and the electrons relax to their ground state at each fixed nuclear geometry.
It adjoins to each molecular species a \emph{Morse triple} $(\Ce(G),V,g)$:
\begin{enumerate}[label=(\alph*),leftmargin=2.2em]
\item the \emph{configuration orbifold} $\Ce(G):=\RR^{3n}/(\SE(3)\times\Autmu(G))$,
the space of nuclear geometries modulo rigid motions and mass-preserving graph automorphisms
(an orbifold, not a manifold: symmetric configurations have non-trivial stabilisers) \cite[Def.~8.1]{Han2026};
\item the \emph{potential energy surface} $V\colon\Ce(G)\to\RR$, 
the Born--Oppenheimer ground-state electronic energy plus nuclear repulsion as a function of geometry,
whose minima are conformers, index-1 saddles transition states,
and steepest-descent paths the intrinsic reaction coordinates \cite[Def.~8.14]{Han2026};
\item the \emph{mass-weighted Riemannian metric} $g$ induced by the nuclear kinetic energy:
in Cartesian nuclear coordinates $g_{i\alpha,j\beta}=m_i\delta_{ij}\delta_{\alpha\beta}$,
and after passing to mass-weighted coordinates it becomes Euclidean,
so that normal modes, zero-point energies, and the reaction-coordinate geometry acquire their meaning.
\end{enumerate}
These are packaged as a functor $\FV\colon\Lk_{4.5}(P)\to\OrbMorse$ into the category of Morse triples.
Its morphisms are gradient-flow cobordisms, the intrinsic reaction coordinates through an index-1 saddle, composed by concatenation of paths,
with identities the constant flows;
associativity and unitality are part of the construction \cite[Def.~8.24, Def.~8.29]{Han2026}.

\begin{definition}[Geometric level {\cite[Def.~8.35]{Han2026}}]
\label{def:L5}
$\Lk_5(P):=(\Lk_{4.5}(P),\FV)$, with $\FV\colon\Lk_{4.5}(P)\to\OrbMorse$ the geometric functor above and $U_5$ forgetting $\FV$.
\end{definition}

Two structural facts mark this as a new \emph{kind} of extension.
First,
and most relevant categorically, $\FV$ is the first \emph{lax} monoidal functor in the tower:
the lower decorators $\FH,\FS,\FP$ were strict, but the potential surface of two interacting fragments is not the sum of their separate surfaces,
so $\FV(G\otimes H)$ and $\FV(G)\otimes\FV(H)$ are related only by a laxator $\phi_{G,H}\colon\FV(G)\otimes\FV(H)\to\FV(G\otimes H)$,
not an isomorphism, subject to the usual associativity and unit coherence \cite[Def.~8.29, Rem.~8.31]{Han2026}.
Second,
the datum is infinite-dimensional:
specifying a reaction at $\Lk_5$ means giving the shape of $V$ along the reaction path,
not one or two real numbers per generator \cite[\S8.1]{Han2026}.
Where $\Lk_0$--$\Lk_3$ adjoined a strict decoration, $\Lk_3\to\Lk_4$ changed the morphisms,
and $\Lk_4\to\Lk_{4.5}$ enriched by a symmetry, $\Lk_{4.5}\to\Lk_5$ is a \emph{geometric decoration}:
the fourth type of extension the tower exhibits \cite[Rem.~7.24]{Han2026}.

\paragraph{Coherence: kinetics as a derived quantity.}
The geometric level does not merely sit above the kinetic one; it constrains it.
Transition-state theory computes a rate constant from the Morse triple:
from $\FV(r)$ one extracts the barrier height $V^{\ddagger}(r)$ and the harmonic frequencies at the reactant minimum and the index-1 saddle,
and assembles them into the transition-state rate $k_{\mathrm{TST}}\bigl(\FV(r),T\bigr)$,
a partition-function ratio times an Arrhenius factor \cite[Def.~8.38]{Han2026}.
With a transmission coefficient $\kappa(r,T)$ for recrossing and tunnelling,
itself a functional of $(\FV,g)$, the coherence condition is
\[
\FP(r)\;=\;\kappa(r,T)\,k_{\mathrm{TST}}\bigl(\FV(r),T\bigr).
\]
This is not asserted to hold exactly:
it holds up to anharmonic and deep-tunnelling corrections for elementary reactions in the Born--Oppenheimer regime,
whereas the naive form $\kappa\equiv1$ with harmonic,
ideal-gas $k_{\mathrm{TST}}$ essentially never does \cite[Def.~8.38, Prop.~8.40]{Han2026}.
The $\Lk_1$ enthalpy is recovered likewise from the potential surface and Hessian \cite[Prop.~8.43]{Han2026}.
Thus $\FV$ does not replace the lower functors but constrains their values:
$\FP$ and $\FH$, free numerical decorations below,
become quantities the geometry determines up to these stated corrections.

\paragraph{The forcing.} The construction is carried out in \cite{Han2026};
the present paper records its statement.
As in Section~\ref{sec:forcing},
the forgetful functor $U_5\colon\Lk_5(P)\to\Lk_{4.5}(P)$ induces by restriction
$\varphi_5\colon\Aut(\Lk_5(P))\to\Aut(\Lk_{4.5}(P))$,
and the exact sequence
\[
1\to\ker\varphi_5\to\Aut(\Lk_5(P))\xrightarrow{\ \varphi_5\ }\Aut(\Lk_{4.5}(P))
\to\coker(\varphi_5)\to 1
\]
has non-trivial cokernel \cite[Prop.~8.36]{Han2026}.
The non-triviality is an object refinement rather than a separation of a conflated pair.
At leading Born--Oppenheimer order the two isotopologues share the same potential surface $V$ and differ only in the mass-weighted metric, $g_{\mathrm H}\neq g_{\mathrm D}$, for the reason given above.
They are the same object of $\Lk_{4.5}(P)$,
the graph label being isotope-blind,
so the swap $r_{\mathrm H}\leftrightarrow r_{\mathrm D}$ is an automorphism there;
but their Morse triples $(\Ce(G),V,g_{\mathrm H})$ and $(\Ce(G),V,g_{\mathrm D})$ are distinct objects of $\OrbMorse$,
so the single $\Lk_{4.5}$-object splits into two at $\Lk_5$ and the swap is no longer an automorphism.
Its class is the recorded non-trivial element of $\coker(\varphi_5)$ \cite[Prop.~8.36, Prop.~8.40]{Han2026}.
This is the explanatory forcing of the introduction made precise.
The mass datum that splits the two metrics is exactly what the lower tower could record but not derive: the rate difference between the isotopologues.
\subsubsection{$\Lk_6$: electronic structure, when the datum becomes a topological invariant}
\label{sec:L6}

\paragraph{What $\Lk_5$ cannot see, and why.}
The whole of $\Lk_5$ rests on one assumption:
that the electronic ground state is separated from the first excited state by a spectral gap throughout the accessible region \cite[\S8.3]{Han2026}.
The gap is what makes the chemistry of $\Lk_5$ well defined.
Where the gap is positive,
perturbation theory gives a smooth choice of ground-state electronic wavefunction at each nuclear geometry \cite{Kato1966},
so the ground-state surface $V$ is a smooth function on $\Ce(G)$ and the Morse triple $(\Ce(G),V,g)$ carries everything.

However, it fails at a \emph{conical intersection} (CI),
where the two surfaces touch and the gap closes \cite{LonguetHiggins}.
The failure is not a loss of precision but a change in topology,
and it is invisible to $V$.
Away from the intersection, the eigenvalue $V$ stays smooth,
so its values near a CI look like values anywhere else;
what a CI changes is the global twisting of the electronic ground state over the surrounding configurations,
a fact no value of $V$ records \cite[\S8.7]{Han2026}.
This is the gap that $\Lk_6$ fills.

Two electronic states can sit over the \emph{same} Morse triple $(\Ce(G),V,g)$ and agree on every $\Lk_5$ datum,
the same minima, the same barrier, the same transition-state-theory rate,
yet differ in whether a CI is enclosed.
The datum that separates them is not a number or a function but a topological invariant of the electronic states over the punctured configuration space.

\paragraph{The witnesses.}
Three molecular systems make the pattern concrete.
They are witnesses of the phenomenon,
not a literal pair of reactions swapped by an automorphism of one $\Lk_5(P)$.
The canonical one is the sodium trimer $\ch{Na3}$.
Its Jahn--Teller ground state has a symmetry-required conical intersection at the equilateral $D_{3h}$ geometry, the Mexican-hat intersection of the $E\otimes e$ problem \cite{mayer1996rovibronic,meiswinkel1991pseudo}.
A loop in the pseudorotation coordinate encircles the seam once.
The surface and metric are common to two descriptions of the nuclear motion, one with a single-valued wavefunction and one carrying the sign change of a loop around the seam.
The first predicts integer pseudorotation quanta, the second half-integer quanta, and only the second matches the observed spectrum \cite[\S9.1]{Han2026}.
The hydrogen-exchange reaction $\ch{H + H2}$ is a second.
Its surface is smooth and its computed rate agrees with experiment, so at $\Lk_5$ it looks complete; yet a CI sits at the $D_{3h}$ geometry of the $\ch{H3}$ system, off the reaction path but inside the region the nuclear wavefunction explores, and the sign change of a loop around it appears as an interference shift in the measured cross-section of the isotopic variant $\ch{H + HD}$ \cite{Yuan2020}.
The $\mathrm{S_N2}$ substitution of the previous level is the negative case: it is CI-free, and $\Lk_5$ captures it completely.
In each system the discriminator is the parity of seam encirclement, an obstruction in $\ZZ/2$ that no correction to $V$ can supply.

\paragraph{Why the new datum is a bundle.}
The right object to carry this content is not a function but a bundle, and the reason is classical.
Berry showed that a quantum state carried adiabatically around a loop in parameter space returns with a geometric phase \cite{Berry}; Simon identified that phase as the holonomy of a connection on a line bundle of eigenstates, so that the geometric phase is a statement about bundle topology rather than about any one state \cite{Simon1983}.
In a molecule the parameter space is the nuclear configuration space, and the connection is the one Mead and Truhlar introduced as a vector potential in the nuclear Schr\"odinger equation, the molecular Aharonov--Bohm effect \cite{MeadTruhlar1979}.
The rigorous adiabatic theory that controls the errors of this decomposition is due to Panati, Spohn and Teufel \cite{PanatiSpohnTeufel2003}.
The bundle is thus not a new postulate but the object this established machinery already lives on.
$\Lk_5$ keeps only its scalar shadow, the eigenvalue $V$; $\Lk_6$ records the bundle itself.

\paragraph{The electronic decoration.}
Over an adiabatic region $\Omega\subseteq\Ce(G)$ on which a rank-$N$ cluster of electronic states is spectrally isolated, $\Lk_6$ adjoins the isolated adiabatic Hilbert bundle $\Hel$ with its Berry connection, the codimension-2 CI seam $\Xseam\subset\Omega$, and the \emph{real Berry-sign class} \cite[Def.~9.1, Def.~9.16]{Han2026}
\[
\eta_B \;=\; w_1(\Lzero^{\RR}) \;\in\; H^1(\Omega\setminus\Xseam;\,\ZZ/2),
\]
the first Stiefel--Whitney class of the real ground-state eigenline, which evaluates non-trivially on a meridian loop linking the seam \cite{ahn2019stiefel}.
The codimension fixes which invariant is informative.
For a real-symmetric Hamiltonian, the regime of thermal molecular chemistry, a degeneracy imposes two real conditions, so the seam has codimension~2, a loop around it is a circle, and the invariant is the $\ZZ/2$ class $w_1$ \cite{LonguetHiggins}.
When time-reversal symmetry is broken the seam has codimension~3, a small sphere replaces the loop, and the Chern class $c_1(\Lzero)\in H^2$ takes its place, the Berry-monopole regime \cite[\S8.7]{Han2026}.
These data form a category $\HilbBundR$ of electronic bundles with sign class, whose morphisms are geometric channels carrying a chosen electronic propagator: Berry parallel transport on adiabatic channels, a multi-state propagator on non-adiabatic ones \cite[Def.~9.16]{Han2026}.

\begin{definition}[Electronic structure level {\cite[Def.~9.17]{Han2026}}]
\label{def:L6}
An object of $\Lk_6(P)$ is an $\Lk_5(P)$-object equipped with an electronic lift
$(\Hel,A,\Xseam,\eta_B)$
satisfying the electronic-structure-origin and $\Lk_5$-consistency conditions;
the forgetful functor $U_6\colon\Lk_6(P)\to\Lk_5(P)$ discards the lift,
retaining only the scalar Morse triple.
\end{definition}

This extension differs from every earlier one in a way worth stating plainly.
At the lower levels a single functor out of the fixed base produced the level freely; here there is no canonical functor $\Lk_5(P)\to\HilbBundR$.
The active rank, the adiabatic region, the connection, the seam, and the propagator on each channel are electronic-structure data the scalar shadow does not determine \cite[\S9.4.3]{Han2026}.
The lift functor $F_6\colon\Lk_6(P)\to\HilbBundR$ thus records a \emph{chosen} lift and does not factor through $U_6$: it cannot be reconstructed from $V$.
In practice this is the statement that locating a seam and evaluating $\eta_B$ is a multi-state computation, needing at least the first two adiabatic surfaces and a method sensitive to their degeneracy such as CASSCF or MRCI \cite{domcke2012role}, whereas any single-reference ground-state method returns only the $\Lk_5$ datum \cite[\S9.4]{Han2026}.

\paragraph{The forcing.}
As in Section~\ref{sec:forcing}, the forgetful functor induces by restriction $\varphi_6\colon\Aut(\Lk_6(P))\to\Aut(\Lk_5(P))$, and the exact sequence
\[
1\to\ker\varphi_6\to\Aut(\Lk_6(P))\xrightarrow{\ \varphi_6\ }\Aut(\Lk_5(P))
\to\coker(\varphi_6)\to 1
\]
has non-trivial cokernel \cite[Eq.~(24),\S8.7]{Han2026}.
As at $\Lk_5$, this is an object refinement, not the separation of a conflated pair.
A single $\Lk_5$-object, the Morse triple $(\Ce(G),V,g)$, admits electronic lifts of distinct bundle topology, indexed by $\eta_B\in H^1(\Omega\setminus\Xseam;\ZZ/2)$, and the standing $\Lk_5$ assumption retains only the CI-free region on which $\eta_B$ vanishes.
Since $\eta_B$ is not a datum of $\Lk_5$, it cannot be read off $(\Ce(G),V,g)$, so the one $\Lk_5$-object splits into topologically distinct $\Lk_6$-objects; the obstruction to a common preimage is the non-trivial element of $\coker(\varphi_6)$ \cite[\S8.7]{Han2026}.
The shape of the argument is that of Section~\ref{sec:forcing}; what has changed across the whole tower is the datum that now fails to be preserved.
At $\Lk_0$ it was a label on a reaction.
Here it is a cohomology class of an electronic bundle: $\Lk_5$ saw the ground state as a function, and $\Lk_6$ sees it as a section of a bundle with non-trivial holonomy, the deepest step the tower takes \cite[\S9.4]{Han2026}.

\subsection{Outlook: full quantum structure}
\label{sec:roadmap}

The electronic-structure level is not the top of the tower.
There is an ordinary chemical distinction it cannot represent.
Take the simplest molecule, $\ce{H2}$.
In its electronic and vibrational ground state.
Ordinary hydrogen is in fact a mixture of two species,
\emph{ortho}- and \emph{para}-hydrogen,
that differ in their rotational spectra and low-temperature heat capacities and that can be separated and stored. 
Their interconversion is so slow in the isolated molecule that para-enriched hydrogen is a stockroom reagent.
That metastability is the resource behind parahydrogen-induced polarisation
\cite{BowersWeitekamp1986, BowersWeitekamp1987}
and signal amplification by reversible exchange \cite{Adams2009},
in which the stored spin order enhances magnetic-resonance signals by orders of magnitude.

The two species, and the rule keeping them apart, come from a symmetry constraint.
The total wavefunction factorises into electronic, vibrational, rotational,
and nuclear-spin parts, and the two protons are identical fermions,
so the product must be antisymmetric under their exchange.
In the ground state the electronic and vibrational factors are symmetric,
so the constraint falls on the remaining pair.
The rotational level of parity $(-1)^J$ must combine with a nuclear-spin state of opposite parity.
Therefore, the Even-$J$ rotation pairs only with the antisymmetric spin singlet (para-hydrogen)
and the odd-$J$ rotation only with the symmetric triplet (ortho-hydrogen) \cite{BonhoefferHarteck1929}.
Exchanging the two species would require changing the nuclear-spin state,
which no spatial interaction does, and that is why they persist as distinct substances.
This is a selection rule forced by an exchange symmetry,
the same shape of argument the tower has run at every level below.

Naming $\Lk_7$ is what lets the rest of the tower be read honestly.
It makes precise which approximation each lower level rests on,
since $\Lk_5$ and $\Lk_6$ assume the Born--Oppenheimer separation and $\Lk_7$ is the level at
which that separation must be derived rather than assumed, and it locates,
by the same forcing question used throughout,
the distinctions lying strictly above electronic structure:
isotopic identity, nuclear-spin statistics, and the individuation of a molecule under identical-particle exchange. 
That last step is not completed here,
and the obstacle is not specific to this work:
recovering a definite molecular structure from an all-particle quantum description is a long-standing open problem,
since the exact bound states of the Coulomb Hamiltonian inherit its full symmetry and so carry no fixed nuclear shape \cite{Woolley1978},
and while structure has been proposed to emerge as a superselection sector in the mass-ratio limit $\varepsilon \to 0$ \cite{Amann1991},
no general theorem identifies an arbitrary chemical graph with such a sector.
Accordingly, $\Lk_7$ is set out in \cite[\S10]{Han2026} as conjectures rather than theorems;
what the present paper claims is the diagnostic and its uniform development across the lower tower,
proved for the base in Section~\ref{sec:forcing} and reused through mechanism,
together with the geometric and electronic lifts whose detailed constructions it imports from \cite{Han2026}.
\section{Application: placing physics in the tower}
\label{sec:application}

The theory section constructed an apparatus; this section uses it. The use is of
two kinds. First, the tower is a \emph{placement procedure}: given a physical
quantity attached to a reaction network --- a heat, a rate, a mechanism, a
barrier, a Berry phase --- it says at which level the quantity lives, what
categorical object it becomes there, and what the tower then guarantees or forbids
about it (Section~\ref{sec:recipe}). Second, the procedure is run end to end on a
single reaction that carries genuine content at every level: the electrocyclic
interconversion of cyclobutene and 1,3-butadiene, the textbook
Woodward--Hoffmann system (Section~\ref{sec:cyclobutene}). At the levels the paper
constructs in full it does real work --- the Deficiency Zero Theorem becomes a
statement about a single forgetful fibre (Section~\ref{sec:dzt}), and the
concerted/stepwise distinction becomes a statement about derivation length --- and
at $\Lk_5$ and $\Lk_6$ it runs the same machinery on geometric and electronic
data, with the constructions imported from \cite{Han2026} as in the theory
section.

\subsection{The placement procedure}
\label{sec:recipe}

The procedure is the following. Confronted with a physical datum $Q$ attached to
the reactions of a network $P$, ask in turn what kind of quantity $Q$ is; the
answer fixes a level, a functor, and a categorical object, and the tower then
supplies a guarantee. The steps are uniform: each asks whether $Q$ is a
\emph{new} kind of distinction relative to the data already placed, and if so
records it as the decoration or lift that the corresponding level adjoins.

\begin{enumerate}[label=(\arabic*),leftmargin=2.4em]
\item \textbf{Is $Q$ a count, a conservation law, or a deficiency?} Then $Q$ is
$\Lk_0$ data: an invariant of the stoichiometric category $\Lk_0(P)$, computable
from the presentation alone (Section~\ref{sec:L0}). The tower guarantees that no
choice of energetic, equilibrium, or kinetic data can change it, since all of
those are decorations \emph{over} $\Lk_0(P)$. Conservation laws are object
potentials; deficiency is the integer $\delta=n-\ell-\rho$.

\item \textbf{Is $Q$ a heat, additive over composition?} Then $Q$ is the
$\Lk_1$ functor $\FH\colon\Lk_0(P)\to\BR$, a labelling of reactions by reals with
composites sent to sums (Section~\ref{sec:dict}). The tower distinguishes two
strengths: plain additivity (Hess bookkeeping) holds for every such $Q$ by
functoriality, while \emph{path-independence} holds only when $Q$ is a coboundary
--- induced by an object potential, i.e.\ a state function. Test which, and the
tower tells you whether your $Q$ is a genuine state function or merely additive.

\item \textbf{Is $Q$ an entropy or a free energy, with a reversal involution?}
Then $Q$ is $\Lk_2$ data: the functor $\FS$, the temperature family $\FG^T$, and
--- when $P$ is reversible --- the $\dagger$-structure recording reversal. Reversal
symmetry ($\FG^T(r^\dagger)=-\FG^T(r)$) and the Wegscheider cycle conditions are
\emph{two} guarantees, not one: the first from $\dagger$-compatibility, the second
again from the coboundary condition. The tower keeps them separate.

\item \textbf{Is $Q$ a rate?} Then $Q$ is the $\Lk_3$ kinetic decoration
$\FP\colon\Lk_0(P)\to\Bg$, a per-reaction channel generator with additive
bookkeeping (Section~\ref{sec:dict}) --- this placement holds for any rate,
whatever the network. What the tower can then add is conditional on the
\emph{base}, not on $Q$: \emph{if} the underlying $\Lk_0(P)$ happens to have
$\delta=0$ and be weakly reversible, the deficiency-zero diagnostic of
Section~\ref{sec:dzt} applies, and \emph{every} value of the rate datum gives the
same qualitative steady-state structure. The placement of $Q$ at $\Lk_3$ and the
deficiency guarantee on the base are two separate facts; conflating them is itself
one of the level-mismatch errors Section~\ref{sec:dzt} catches.

\item \textbf{Does $Q$ distinguish two reactions with the same rate?} Then $Q$ is
not a decoration of $\Lk_0(P)$ at all: it is $\Lk_4$ structural data, recorded by
the double-pushout derivation that realises the reaction (Section~\ref{sec:L4}).
The tower's guarantee is that mechanism is not a function of the lower profile:
the same effective channel may be realised by derivations of different length, and
the worked instance of Section~\ref{sec:cyclobutene} exhibits a pair that
agrees on every datum through $\Lk_3$ and differs only in derivation structure.

\item \textbf{Is $Q$ geometric --- a barrier, an isotope effect, a transition-state
shape?} Then $Q$ is $\Lk_5$ data: the geometric functor $\FV$ into $\OrbMorse$,
carrying a Morse triple $(\Ce(G),V,g)$ (Section~\ref{sec:L5}). The placement is
the same operation as below $\Lk_4$ --- locate $Q$ as the datum a forgetful map
discards --- but the construction of $\FV$ is imported from \cite{Han2026}. The
secondary kinetic isotope effect of Section~\ref{sec:cyclobutene} runs it.

\item \textbf{Is $Q$ topological --- a photochemical branching, a sign holonomy, a
half-integer quantum number?} Then $Q$ is $\Lk_6$ data: the electronic lift $F_6$
into $\HilbBundR$, carrying a $\ZZ/2$ Berry class
$\eta_B=w_1(\Lzero^{\RR})$ over the configuration orbifold (Section~\ref{sec:L6}),
again with the construction imported from \cite{Han2026}. The thermal-versus-
photochemical inversion of Section~\ref{sec:cyclobutene} runs it.
\end{enumerate}

There is a seam in the procedure, and it is worth stating plainly rather than
hiding. Steps (1)--(5) place a quantity using constructions carried out in full in
the theory section: the reader can execute them with this paper alone. Steps
(6)--(7) place a quantity using the same categorical operation --- it is the datum
that the forgetful map to the level below cannot see --- but the target categories
$\OrbMorse$ and $\HilbBundR$, and the proof that the relevant functors are
well-defined, are imported from \cite{Han2026}. The procedure is uniform; what
changes at the seam is not the operation but where its justification is to be
found. The two worked instances at $\Lk_5,\Lk_6$ below are included precisely to
show that the operation does run end to end on real chemistry.

\subsection{Deficiency Zero as a statement about one fibre}
\label{sec:dzt}

The Deficiency Zero Theorem (DZT) is the central result of classical chemical
reaction network theory \cite{HornJackson,Feinberg}. Read through the tower it
acquires a precise structural shape, and that shape resolves two standard misuses
at a stroke. The reading is due to \cite[\S5.6]{Han2026}; it is recalled here
because it is the cleanest demonstration of the placement procedure doing work.

The theorem's hypotheses and its conclusion live at different levels. Weak
reversibility and the deficiency $\delta=0$ are properties of the Petri net $P$
alone --- the complex count $n$, the linkage-class count $\ell$, the stoichiometric
rank $\rho$, and the weak-reversibility condition on the reaction graph are all
invariants of $\Lk_0(P)$, computable without reference to any rate constant. The
conclusion --- existence, uniqueness, and local asymptotic stability of a positive
steady state in each stoichiometric class --- has no meaning at $\Lk_0$ at all:
there is no dynamics without the kinetic functor $\FP$. The new content is therefore
entirely on the conclusion side, at $\Lk_3$.

\begin{proposition}[DZT as fibre-rigidity {\cite[\S5.6]{Han2026}}]
\label{prop:dzt}
Let $U_3\colon\Lk_3(P)\to\Lk_0(P)$ be the forgetful map of
Section~\ref{sec:dict}, stripping the kinetic decoration. Over a fixed base ---
a fixed stoichiometry $\Lk_0(P)$ --- its fibre is the space of all positive
mass-action rate-constant assignments,
\[
U_3^{-1}(\Lk_0(P))\;\cong\;\mathrm{Map}(\Rx,\RR_{>0}),
\]
a copy of $\RR_{>0}$ for each reaction. If $\Lk_0(P)$ satisfies $\delta=0$ and weak
reversibility, then the steady-state conclusion of the DZT holds at \emph{every}
point of this fibre: the assignment $k\mapsto c^{*}(k)$ sending a rate datum to its
unique positive steady state per stoichiometric class is a well-defined map on the
entire fibre. The DZT is thus a statement that the $U_3$-fibre over a
deficiency-zero weakly-reversible base is \emph{rigid}: its qualitative
steady-state structure is constant across the fibre.
\end{proposition}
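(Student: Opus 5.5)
The plan has two parts: identify the fibre, then import the classical Deficiency Zero Theorem and read its conclusion fibrewise.

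\emph{The fibre.} Fix the base $\Lk_0(P)$. Strictly, the forgetful map in the statement is the composite $U_1\circ U_2\circ U_3$. Restricting to the kinetic coordinate, a point of the fibre is a kinetic functor $\FP\colon\Lk_0(P)\to\Bg$ of mass-action form. By Theorem~\ref{thm:univ} applied with $C=\Bg$ (a one-object target, so strict commutativity is automatic), such an $\FP$ is exactly an assignment $r\mapsto\Omega_r$ on generators. By Definition~\ref{def:L3}, $\Omega_r$ is determined by $s(r)$, $n_r=t(r)-s(r)$ (both fixed by the base) and the single constant $k_r>0$. So the map $k\mapsto\FP_k$ from $\mathrm{Map}(\Rx,\RR_{>0})$ onto the mass-action fibre is surjective.

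For injectivity, I evaluate $\Omega_r$ on a test function at the state $x=s(r)$, where $\lambda_r(s(r))=k_r$. This recovers $k_r$ provided $n_r\neq0$. The degenerate case $n_r=0$ gives $\Omega_r=0$; there I would read the fibre as rate data $(k_r)$ labelling the decoration, as Definition~\ref{def:L3} does, rather than as the quotient $\gen$. I will flag this convention explicitly. The thermodynamic coordinates $(\FH,\FS)$ ride along and play no role, so they can be fixed or projected away.

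\emph{Rigidity.} Next I pass from $\FP_k$ to the deterministic rate equation~\eqref{eq:rre}, $\dot c=N\,v_k(c)$ with $v_k(c)=(k_r c^{s(r)})_r$. This passage is justified by the Kurtz limit already recorded in Section~\ref{sec:L3}. The essential observation is that $N$, the complexes, the linkage classes, $\rho$ and weak reversibility are all read off $P$ alone, so $\delta=0$ is a hypothesis on the base that holds uniformly over the fibre.

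I then invoke the classical Deficiency Zero Theorem~\cite{HornJackson,Feinberg} as a black box. For a weakly reversible network of deficiency zero and \emph{any} $k\in\RR_{>0}^{\Rx}$, each positive stoichiometric compatibility class $(c_0+\im N)\cap\RR_{>0}^{|\Sp|}$ contains exactly one positive equilibrium. That equilibrium is complex balanced and locally asymptotically stable, with the Horn--Jackson pseudo-Helmholtz function as a Lyapunov function. Because these hypotheses never refer to $k$, the conclusion holds at every point of the fibre. Hence $k\mapsto c^*(k)$, taken per stoichiometric class, is a well-defined total map on $\mathrm{Map}(\Rx,\RR_{>0})$. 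The qualitative structure (existence, uniqueness, stability) is therefore constant over the fibre, which is the asserted rigidity. Optionally, I would add that the complex-balanced equilibria depend continuously, indeed real-analytically, on $k$ by the implicit function theorem, but the statement does not require this.

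\emph{Main obstacle.} I expect the real difficulty to be the bookkeeping in the first part, not the second, which is entirely imported. Two points need care. The first is making ``the fibre of $U_3$'' precise when $U_3$ forgets a functor and $\gen$ identifies equal operators; this is resolved by parametrising the fibre by rate constants, as above. The second is being explicit that the dynamical conclusion is read on the deterministic shadow~\eqref{eq:rre}. If the stochastic version is wanted, I would cite Anderson--Craciun--Kurtz~\cite{AndersonCraciunKurtz} for product-form Poisson stationary distributions on every irreducible class, again for every $k$.
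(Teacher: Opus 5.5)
Your proposal is correct and follows the paper's route. The paper also gets the proposition by noting that weak reversibility and $\delta=0$ are $\Lk_0$ invariants independent of the rate datum, so the classical Deficiency Zero Theorem's clause ``for every choice of rate constants'' is exactly the statement that the conclusion holds at every point of the $U_3$-fibre. Your extra bookkeeping does not change that argument, but it tightens points the paper asserts without proof: reading $U_3$ as the composite down to $\Lk_0$, parametrising the fibre by $k$ via Theorem~\ref{thm:univ}, and flagging the degenerate $n_r=0$ coordinate where $\Omega_r=0$.
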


The phrase ``for every choice of rate constants'' in the classical statement is
exactly this fibre-rigidity: the conclusion is invariant under any reparametrisation
within $U_3^{-1}(\Lk_0(P))$. Equivalently, in the automorphism sequence at $\Lk_3$,
no element of $\coker(\varphi_3)$ --- the rate-constant swaps that preserve the
$\Lk_2$ data but not the $\Lk_3$ data --- can alter the steady-state count, since
every fibre point already gives exactly one steady state per class
\cite[Rem.~5.25]{Han2026}.

\paragraph{Two misuses become level-mismatch errors.} The structural reading is
not merely tidy; it catches two errors that recur in the applied literature, and
catches them as the \emph{same} kind of mistake --- placing a quantity at the wrong
level.

The first is dropping the weak-reversibility hypothesis --- reading ``deficiency
zero'' as if $\delta=0$ alone delivered the unique stable steady state, and
applying the conclusion to a $\delta=0$ network that is not weakly reversible. The
two hypotheses sit at the same level ($\Lk_0$) but are distinct invariants, and the
rigidity of Proposition~\ref{prop:dzt} is asserted only for bases satisfying
\emph{both}. The placement procedure makes the omission visible at step~(1): the
guarantee is attached to the pair (deficiency, weak reversibility) as $\Lk_0$ data,
and a base meeting only one of them is simply not in the class to which the theorem
applies. Stating $\delta=0$ as though it were the whole hypothesis exports a
guarantee proved for one $\Lk_0$ property to a base lacking the other.

The second is generalising the conclusion past mass action --- applying the DZT's
uniqueness verdict to a network with a non-mass-action rate law. The procedure
refuses this at step (4): the fibre $U_3^{-1}(\Lk_0(P))$ is the space of
\emph{mass-action} rate assignments, and the rigidity of
Proposition~\ref{prop:dzt} is a statement about that fibre only. A non-mass-action
rate law is not a point of this fibre; it is a different kinetic decoration
altogether, over which the theorem says nothing. The error is again a
level-mismatch: a guarantee proved for one fibre, exported to a datum that does
not lie in it.

The stochastic counterpart --- the Anderson--Craciun--Kurtz product-form stationary
distribution for deficiency-zero networks \cite{AndersonCraciunKurtz} --- is the
same diagnosis read over distributions rather than points: the same $\Lk_0$
hypotheses, the same single kinetic hypothesis, the same rigidity across the fibre,
now of the stationary distribution of the chemical master equation rather than of
the deterministic fixed point \cite[\S5.6.2]{Han2026}.

\subsection{One reaction through the whole tower: cyclobutene and butadiene}
\label{sec:cyclobutene}

The placement procedure is best seen run once, end to end, on a single reaction
that carries genuine content at every level. The electrocyclic interconversion of
cyclobutene and 1,3-butadiene is the natural choice: it is the textbook
Woodward--Hoffmann system, recognisable to every chemist, and --- unusually --- it
is non-degenerate at all of $\Lk_0$ through $\Lk_6$. Most systems with clean
electronic-topology content are trivial as networks, and most networks with
interesting structure have no chemically central conical intersection; this one
has both. The discussion below walks it up the tower, applying at each step the
placement of Section~\ref{sec:recipe}. Levels $\Lk_0$--$\Lk_4$ are carried out with
the apparatus of the theory section; levels $\Lk_5$--$\Lk_6$ apply the imported
constructions of \cite{Han2026}, as the procedure's seam requires.

\paragraph{$\Lk_0$: the network, and why its species are species.} Take the
reversible thermal core as the Petri net $P$: three species --- cyclobutene,
cisoid butadiene, and $s$-$trans$-butadiene --- with two reversible reactions, the
ring opening cyclobutene $\rightleftharpoons$ cisoid-butadiene and the
conformational interconversion cisoid $\rightleftharpoons$ $s$-$trans$. A chemist
will immediately object that the two butadiene conformers are not distinct
\emph{molecules}: they interconvert by torsion about the central C2--C3 bond over a
barrier of only ${\sim}4\,\mathrm{kcal\,mol^{-1}}$ \cite{Anet1979}, far below the
ring-opening barrier. The objection is exactly the point, and the tower's answer is
a timescale separation made precise. At the ${\sim}420\,\mathrm{K}$ of thermal ring
opening, conformational interconversion runs at ${\sim}10^{9}\text{--}10^{10}\,
\mathrm{s^{-1}}$ while ring opening runs at ${\sim}10^{-3}\,\mathrm{s^{-1}}$ ---
twelve to thirteen orders of magnitude slower. The conformers are therefore a fast
sub-system in permanent quasi-equilibrium relative to the slow reaction that
consumes them, and treating them as two states joined by a fast reversible edge is
the standard pre-equilibrium modelling of chemical kinetics, not a sleight of hand.
(For completeness: the genuine cisoid minimum is the non-planar $s$-$gauche$
conformer near a $34^\circ$ dihedral, with planar $s$-$cis$ a low saddle between
its two enantiomers \cite{Baraban2018}; ``cisoid'' below means this minimum, and
the ring-opening motion delivers butadiene first into this cisoid region because
the four-membered ring geometry maps onto it.)

Granting the two conformers as species, the network is combinatorially explicit.
Each species is its own complex, so the complex count is $n=3$; the reaction graph
is connected, so the linkage-class count is $\ell=1$; the two reaction vectors are
independent, so the stoichiometric rank is $\rho=2$. The deficiency is therefore
\[
\delta=n-\ell-\rho=3-1-2=0 .
\]
This is $\Lk_0$ data in the sense of step~(1): an invariant of $\Lk_0(P)$, fixed
before any rate, enthalpy, or geometry is named, and itself dependent on the
species-counting just defended --- a different coarse-graining (lumping the
conformers) gives a different presentation, which is why the choice had to be made
honestly. The network is weakly reversible (every reaction is reversible and the
graph is strongly connected), so it is a deficiency-zero weakly-reversible network,
and the diagnostic of Section~\ref{sec:dzt} will apply to it directly. The
photochemical branch to bicyclobutane is deliberately excluded from $P$: it is an
effectively irreversible, excited-state-only edge whose inclusion would break weak
reversibility, and it belongs to the $\Lk_6$ regime below rather than to the
ground-state network.

\paragraph{$\Lk_1$--$\Lk_2$: energetics and equilibrium.} The ring opening is
exothermic. From the gas-phase heats of formation --- cyclobutene
$+37.5\,\mathrm{kcal\,mol^{-1}}$ \cite{WibergFenoglio}, 1,3-butadiene
$+26.3\,\mathrm{kcal\,mol^{-1}}$ \cite{ProsenRossini} --- the reaction enthalpy is
\[
\FH \;=\; 26.3-37.5 \;\approx\; -11\ \mathrm{kcal\,mol^{-1}},
\]
a value independently corroborated computationally \cite{NavaCarissan}. (The
figure $28.5\,\mathrm{kcal\,mol^{-1}}$ quoted in some secondary literature is not
the calorimetric enthalpy difference and is not used here.) The chemistry behind
this number is worth stating, because it is not the obvious one. Cyclobutene
carries about $30\,\mathrm{kcal\,mol^{-1}}$ of ring strain \cite{BachDmitrenko},
and ring opening relieves all of it; one might expect a reaction exothermic by that
much. It is not, because opening also \emph{destroys} bonding: the strong
four-membered-ring $\sigma$ framework and the localised C=C $\pi$ bond are traded
for the delocalised but individually weaker $\pi$ system of the conjugated diene.
The ${\sim}30\,\mathrm{kcal\,mol^{-1}}$ of strain relief is largely cancelled by
this loss of $\sigma$/$\pi$ bond strength, leaving the modest
${\sim}11\,\mathrm{kcal\,mol^{-1}}$ net. Placed by step~(2), $\FH$ is the $\Lk_1$
functor value on the ring-opening generator; the residual exothermicity, together
with the entropy gain on opening a ring, makes opening favourable and the reverse
thermal closure unobserved. At $\Lk_2$, the reversal involution $\dagger$ exchanging
each reaction with its reverse is present by construction (the core is reversible),
and the cisoid/$s$-$trans$ equilibrium --- $s$-$trans$ favoured by about
$2.9\,\mathrm{kcal\,mol^{-1}}$ \cite{MuiGrunwald} across the low rotational barrier
above --- is the equilibrium datum on the conformational generator. Reversal
symmetry and the cycle conditions are the two separate $\Lk_2$ guarantees of
step~(3); here the network carries no non-trivial cycle, so only the
$\dagger$-content is in play.

\paragraph{$\Lk_3$: kinetics and the deficiency diagnostic.} The thermal ring
opening of the parent is a clean unimolecular reaction with Arrhenius parameters
$E_a\approx 32.5\,\mathrm{kcal\,mol^{-1}}$, $\log A\approx 13.1$
\cite{CooperWalters}. This is the $\Lk_3$ decoration $\FP$ of step~(4): a rate
constant on each generator, a point of the fibre $U_3^{-1}(\Lk_0(P))$. Because the
underlying $\Lk_0(P)$ has $\delta=0$ and is weakly reversible,
Proposition~\ref{prop:dzt} applies without further work: for \emph{every} choice of
rate constants in this fibre --- not merely the measured ones --- the mass-action
dynamics relaxes to a unique positive equilibrium in each stoichiometric class.
The qualitative kinetic conclusion is thus fixed by the $\Lk_0$ data alone; the
measured $E_a$ locates the system within the rigid fibre but cannot change the
verdict. This is the deficiency diagnostic of Section~\ref{sec:dzt} discharged on a
concrete network.

\paragraph{$\Lk_4$: mechanism, as derivation length.} The thermal ring opening is
concerted, conrotatory, and stereospecific above $99.99\%$ \cite{BraumanArchie}.
The orbital reason is the textbook one and worth stating, because it is what makes
the mechanism a structural rather than a numerical fact. As the C3--C4 $\sigma$
bond breaks, its two lobes must rotate to become the terminal $p$-orbitals of the
diene $\pi$ system; rotating them in the same sense (conrotatory) threads the
four-electron transition state into a M\"obius array --- one phase inversion around
the cycle --- which for $4n$ electrons is aromatic and hence low in energy, whereas
the opposite (disrotatory) sense gives a H\"uckel array that for four electrons is
antiaromatic and high. Equivalently, under the $C_2$ axis preserved by conrotation
every filled cyclobutene orbital correlates with a filled butadiene orbital
($\sigma\to\psi_2$, $\pi\to\psi_1$), so no occupied level is forced uphill, while
the mirror plane preserved by disrotation forces an occupied-to-virtual crossing
and a symmetry-imposed barrier. This is why the stereochemistry is essentially
perfect and electronic in origin.

Step~(5) places the concerted/stepwise question at $\Lk_4$. The concerted route is a
\emph{single} chemical DPO rule $p_{\mathrm{conc}}$ realising the ring opening in
one elementary derivation. A genuinely stepwise route --- bond homolysis to a
discrete diradical, then a second step --- would be a \emph{composite}
$p_2\circ p_1$ of two rules through an intermediate species. By the freeness of
$\Lk_4(P)$ (Proposition~\ref{prop:L4free}) the rule-multiset of a morphism is an
invariant of the symmetric monoidal theory, so $\{p_{\mathrm{conc}}\}$ and
$\{p_1,p_2\}$ are distinct morphisms with the same source and target whenever both
realisations occur: exactly the refinement of Proposition~\ref{prop:Dmech}. The two
share their entire profile through $\Lk_3$ --- same complexes, same effective rate
law after coarse-graining --- and are separated only by derivation length, which is
$\Lk_4$ data and no decoration of $\Lk_0(P)$.

Two cautions keep this honest, and both are instructive for the placement. First,
\emph{concerted-versus-stepwise} (how many elementary rules, an $\Lk_4$ question)
and \emph{conrotatory-versus-disrotatory} (the relative sense of the two terminal
rotations within a single concerted rule, a stereochemical refinement at
$\Lk_{4.5}$) are independent axes; the parent thermal reaction is concerted
\emph{and} conrotatory, and the tower assigns the two distinctions to different
levels rather than conflating them. Second, the parent reaction is not in fact
stepwise: recent analysis finds that the formally ``forbidden'' disrotatory saddle
is usually a second-order saddle point rather than a true transition state, and the
relevant diradical is a transition state for internal rotation of the
\emph{already-opened} product, not a ring-opening intermediate \cite{Carpenter2025}.
So the concerted/stepwise forcing pair is realised cleanly only in systems where a
true stepwise channel exists; for the parent, $\Lk_4$ records that the single
concerted rule is the only realisation, which is itself the $\Lk_4$ datum. The
torquoselectivity of $3$-substituted cyclobutenes --- $\pi$-donors rotate outward to
avoid a filled--filled clash with the high-lying breaking-$\sigma$ orbital,
$\pi$-acceptors rotate inward to accept density from it, with energetic preferences
reaching $19\,\mathrm{kcal\,mol^{-1}}$ \cite{DolbierKoroniakHoukSheu} --- is the
$\Lk_{4.5}$ refinement of \emph{which} conrotatory rule is realised, and is
electronic, not steric, in origin.

\paragraph{$\Lk_5$: the secondary kinetic isotope effect, geometrically.} Isotopic
substitution is invisible to the bond graph: the molecular-graph label records
element, not mass, so cyclobutene and its $3,4$-dideuterio isotopologues are the
same object through $\Lk_{4.5}$. Yet they ring-open at measurably different rates,
and --- the sharper point --- the two diastereotopic C--H positions at each
methylene, the one rotating \emph{inward} and the one rotating \emph{outward} in
the conrotatory motion, carry \emph{distinct} secondary deuterium effects:
$k_{\mathrm H}/k_{\mathrm D}\approx 1.22$ and $1.21$ at
$139.5\,^\circ\mathrm{C}$ for the $cis$- and $trans$-dideuterio isomers (about
$1.10$ per D), with $3$-monodeuteriocyclobutene giving $Z$- over
$E$-deuteriobutadiene in a ratio near $1.10$ \cite{BaldwinKIE}. The physical origin
is the chemistry that makes this an $\Lk_5$ datum and not a bond-graph one: as the
ring opens, the C3 and C4 carbons rehybridise from $sp^3$ toward $sp^2$, the
out-of-plane C--H bending force constants soften, the zero-point energy difference
between C--H and C--D shrinks at the transition state relative to the reactant, and
a normal secondary effect ($k_{\mathrm H}/k_{\mathrm D}>1$) results. Because the
inward- and outward-rotating positions are diastereotopic --- they rehybridise to
different extents in the $C_2$ transition state, computed to be larger at the outer
position \cite{KallelHouk} --- the two carry different effects. Step~(6) places
this at $\Lk_5$: the distinction is carried by the mass-weighted metric $g$ of the
Morse triple $(\Ce(G),V,g)$, which differs between the isotopologues (equal
potential surface $V$, distinct $g$), evaluated at the conrotatory transition state
--- a $C_2$-symmetric index-1 saddle of $V$, not the higher $C_s$ disrotatory
second-order saddle. Applying the imported functor
$\FV\colon\Lk_{4.5}(P)\to\OrbMorse$ \cite[Def.~8.35]{Han2026}, the isotopologues
are a single object of $\Lk_{4.5}(P)$ --- the graph label being isotope-blind ---
that splits into distinct objects $(\Ce(G),V,g_{\mathrm H})\neq(\Ce(G),V,g_{\mathrm
D})$ of $\OrbMorse$ at $\Lk_5$, equal potential surface $V$ and distinct
mass-weighted metric: an object refinement registering as a non-trivial class of
$\coker(\varphi_5)$ \cite[Prop.~8.36]{Han2026}. As step~(6) warns, the operation is
the familiar one --- the datum a forgetful map discards --- but the well-definedness
of $\FV$ and of $\OrbMorse$ is imported. What $\Lk_3$ could only record as two rate
numbers, $\Lk_5$ constrains as geometry: the effect follows from the mass-weighted
Hessian at the saddle, not from a free parameter.

\paragraph{$\Lk_6$: thermal versus photochemical, as seam encirclement.} The same
bond-graph transformation follows opposite orbital-symmetry rules in two regimes:
thermal ring opening is conrotatory, and the idealised Woodward--Hoffmann rule for
the excited state is disrotatory (real cyclobutene photochemistry is wavelength-
dependent and gives con/dis mixtures \cite{Leigh1996}, so the rule is the clean
limit, not the whole story). What underlies the inversion is electronic-structure
topology that the scalar geometric level cannot see. The two electronic states are
two sheets of a double cone joined along a \emph{conical-intersection seam}
$\Xseam$ --- a branch locus in the configuration orbifold where the ground and
first excited surfaces meet. Away from the seam the ground-state surface $V$ is
perfectly smooth, and $V$ alone cannot record whether a reaction path stays clear
of the seam or wraps around it. That is precisely the distinction between the two
regimes: the thermal reaction proceeds entirely on the lower sheet and \emph{does
not encircle} the seam, whereas the ultrafast photochemical reaction is promoted to
the upper sheet and funnels back through the seam, traversing geometries the
thermal path never visits. The datum that separates them is the parity of seam
encirclement of the reaction loop --- the $\ZZ/2$ Berry sign class
$\eta_B=w_1(\Lzero^{\RR})\in H^1(\Ce(G)\setminus\Xseam;\ZZ/2)$ \cite[\S9.2]{Han2026}
--- recording the sign change a real electronic eigenstate acquires when carried
around the seam, the Longuet--Higgins/Herzberg geometric phase and molecular
Aharonov--Bohm effect \cite{LonguetHiggins,Berry}.

Step~(7) places this at $\Lk_6$ as two electronic lifts distinguished by that
class: the lift with $\eta_B=0$, the seam-free regime that $\Lk_5$ fully captures
(the thermal reaction), and the lift with $\eta_B\neq 0$, whose accessible nuclear
loop encircles the seam (the photochemical reaction). Applying the imported lift
functor $F_6\colon\Lk_6(P)\to\HilbBundR$ \cite[Def.~9.17]{Han2026}, the two share
the same scalar shadow under $\FV$ --- the same smooth $V$ on the complement of the
seam --- so a single $\Lk_5$ object refines into two distinct objects of
$\HilbBundR$ at $\Lk_6$: an object refinement registering as a non-trivial class of
$\coker(\varphi_6)$ \cite[\S8.7]{Han2026}. In the tower's reading the thermal and
photochemical regimes are not one nuclear path carrying two lifts --- they traverse
different geometries --- but two paths against a \emph{shared} branched electronic
structure, separated by how they sit relative to its branch point. The
Woodward--Hoffmann inversion is consistent with, and located by, this seam
topology: the same phase-change argument that places the conical intersection
inside the photochemical loop \cite{LonguetHiggins} reproduces the
thermal-conrotatory/photochemical-disrotatory complementarity. (The phase-change
rule \emph{locates} the intersection and is consistent with the selection rule; the
present paper claims only the placement of the thermal/photochemical distinction as
the $\Lk_6$ datum, not a derivation of the Woodward--Hoffmann rule from the Berry
phase, and no cyclobutene-specific Berry-phase computation is asserted
\cite{Han2026}.)

\paragraph{The tower, on one molecule.} Read upward, the same reaction has been
six different objects: a deficiency-zero network at $\Lk_0$, an exothermic
reversible pair at $\Lk_1$--$\Lk_2$, a rigid kinetic fibre at $\Lk_3$, a pair of
DPO derivations of different length at $\Lk_4$, a pair of Morse triples with
distinct mass-weighted metrics at $\Lk_5$, and a pair of electronic lifts with
distinct seam-encirclement parity at $\Lk_6$. At each step the placement was the
same operation --- locate the datum a forgetful map discards --- and what changed
was only the kind of datum and, above $\Lk_4$, where its construction is to be
found. The electrocyclic reaction every chemist knows is, read through the tower, a
single object stratified into six.
\section{Conclusion}
\label{sec:conclusion}

A reaction network admits a single graded description.
Each level adds one kind of chemical content over the level below, and one question locates it:
\emph{what can the level express that the one beneath it cannot?}
The answer takes three forms.

For the decorated levels, it is a broken symmetry,
the automorphisms of a level that fail to descend,
forming a pointed-set cokernel that measures the new content.
For reaction mechanisms, it is an inequality of derivations,
the underlying category changing rather than acquiring a decoration.
For the geometric and electronic levels it is an object refinement,
a single lower object splitting into distinct upper ones.
The forms differ, but the question is one.
What it locates changes as one climbs,
from a coboundary obstruction in energetics to a $\ZZ/2$ Berry class in electronic structure,
while the operation that locates it does not.

The paper carries this out in full for the levels through the mechanistic level, $\Lk_4$.
The base automorphism group is computed,
the forcing class for each decorated level is identified as a coset in its cokernel,
and mechanism is placed as a base change whose content is the derivation structure no rate datum records.
Above $\Lk_4$, the same question is run against the more elaborate target categories of \cite{Han2026}.
The geometric and electronic levels are stated here as higher lifts,
their forcing read as an object refinement,
with the detailed constructions imported rather than reproved.

The placement procedure of Section~\ref{sec:application} turns the apparatus into a method.
It assigns a physical quantity to its level and reads off what the tower then guarantees.
Its first use recasts the Deficiency Zero Theorem as the rigidity of a single forgetful fibre and exposes two standard misuses as errors of level.
Its second carries one electrocyclic reaction up the tower,
recovering at each step the textbook fact that lives there.

One level is left open above the electronic-structure level.
The full quantum description,
in which the Born–Oppenheimer separation the upper levels assume is derived rather than posited,
would sit above it.
This paper does not construct that level;
the monograph \cite{Han2026} sets it out as conjectures rather than theorems. 
Whether the forcing diagnostic extends to it,
in the object-refinement form the upper levels take,
is the natural next question.

The same ideas are further explored in two ways developed elsewhere \cite{Han2026}.
The tower here deepens the description of a fixed network.
A companion construction lets the network itself vary,
which is what comparing whole families of models requires,
including the architectures used in machine learning.
Because each level is built from explicit functors rather than an informal analogy,
the whole tower can be made computational.
The placement of a quantity then becomes something that a program can carry out.

\bibliographystyle{unsrt}
\bibliography{refs}

@misc{Han2026,
  author       = {Han, Kyunghoon},
  title        = {Categorification of Chemical Reactions: A Bottom-Up Tower
                  from Stoichiometry to Quantum Structure},
  year         = {2026},
  eprint       = {2605.14974},
  archivePrefix= {arXiv},
  primaryClass = {math.CT},
  note         = {arXiv:2605.14974},
  url          = {https://arxiv.org/abs/2605.14974}
}

@article{Akitsu2023,
  title={Category Theory in Chemistry},
  author={Akitsu, Takashiro},
  journal={Compounds},
  volume={3},
  number={2},
  pages={334--335},
  year={2023},
  publisher={MDPI}
}

@article{Lewis1925,
  title={A new principle of equilibrium},
  author={Lewis, Gilbert N},
  journal={Proceedings of the National Academy of Sciences},
  volume={11},
  number={3},
  pages={179--183},
  year={1925}
}

@article{Wegscheider1901,
  title={{\"U}ber simultane Gleichgewichte und die Beziehungen zwischen Thermodynamik und Reaktionskinetik homogener Systeme},
  author={Wegscheider, Rud},
  journal={Zeitschrift f{\"u}r physikalische Chemie},
  volume={39},
  number={1},
  pages={257--303},
  year={1902},
  publisher={De Gruyter Oldenbourg}
}

@article{WoodwardHoffmann1965,
  title={Stereochemistry of electrocyclic reactions},
  author={Woodward, Robert B and Hoffmann, Roald},
  journal={Journal of the American Chemical Society},
  volume={87},
  number={2},
  pages={395--397},
  year={1965},
  publisher={ACS Publications}
}

@article{WoodwardHoffmann1969,
  title={The conservation of orbital symmetry},
  author={Woodward, Robert Burns and Hoffmann, Roald},
  journal={Angewandte Chemie International Edition in English},
  volume={8},
  number={11},
  pages={781--853},
  year={1969},
  publisher={Wiley Online Library}
}

@article{LonguetHiggins1975,
  title={The intersection of potential energy surfaces in polyatomic molecules},
  author={Longuet-Higgins, Hugh Christopher},
  journal={Proceedings of the Royal Society of London. A. Mathematical and Physical Sciences},
  volume={344},
  number={1637},
  pages={147--156},
  year={1975},
  publisher={The Royal Society London}
}

@article{CraciunJinYu2020,
  title={An efficient characterization of complex-balanced, detailed-balanced, and weakly reversible systems},
  author={Craciun, Gheorghe and Jin, Jiaxin and Yu, Polly Y},
  journal={SIAM Journal on Applied Mathematics},
  volume={80},
  number={1},
  pages={183--205},
  year={2020},
  publisher={SIAM}
}

@article{DaggettYangLiuMuechler2024,
  title={Toward a Topological Classification of Nonadiabaticity in Chemical Reactions},
  author={Daggett, Christopher and Yang, Kaijie and Liu, Chao-Xing and Muechler, Lukas},
  journal={Chemistry of Materials},
  volume={36},
  number={8},
  pages={3479--3489},
  year={2024},
  publisher={ACS Publications}
}

@article{Yuan2020,
  title={Observation of the geometric phase effect in the {H} + {HD} $\rightarrow$ {H$_2$} + {D} reaction below the conical intersection},
  author={Yuan, Daofu and Huang, Yin and Chen, Wentao and Zhao, Hailin and Yu, Shengrui and Luo, Chang and Tan, Yuxin and Wang, Siwen and Wang, Xingan and Sun, Zhigang and others},
  journal={Nature communications},
  volume={11},
  number={1},
  pages={3640},
  year={2020},
  publisher={Nature Publishing Group UK London}
}

@article{Hess1840,
  author  = {Hess, Germain Henri},
  title   = {Recherches thermo-chimiques},
  journal = {Bulletin scientifique publi{\'e} par l'Acad{\'e}mie imp{\'e}riale des sciences de Saint-P{\'e}tersbourg},
  volume  = {8},
  pages   = {257--272},
  year    = {1840}
}

@article{Gibbs1875,
  title={On the equilibrium of heterogeneous substances},
  author={Gibbs, Josiah Willard},
  journal={American journal of science},
  volume={3},
  number={96},
  pages={441--458},
  year={1878},
  publisher={American Journal of Science}
}

@phdthesis{Fong2016Thesis,
  archiveprefix = {arXiv},
  author = {Brendan Fong},
  eprint = {1609.05382},
  school = {University of Oxford},
  title = {The Algebra of Open and Interconnected Systems},
  year = {2016}
}

@article{baez2019structured,
  author = {Baez, John C. and Courser, Kenny},
  journal = {Theory and Applications of Categories},
  pages = {1771--1822},
  publisher = {Mount Allison University},
  title = {Structured cospans},
  volume = {35},
  number = {48},
  year = {2020}
}

@article{BaezMaster2020,
  title={Open petri nets},
  author={Baez, John C and Master, Jade},
  journal={Mathematical Structures in Computer Science},
  volume={30},
  number={3},
  pages={314--341},
  year={2020},
  publisher={Cambridge University Press}
}

@article{MeseguerMontanari,
  title={Petri nets are monoids},
  author={Meseguer, Jos{\'e} and Montanari, Ugo},
  journal={Information and computation},
  volume={88},
  number={2},
  pages={105--155},
  year={1990},
  publisher={Elsevier}
}

@article{BaezPollard,
  title={A compositional framework for reaction networks},
  author={Baez, John C and Pollard, Blake S},
  journal={Reviews in Mathematical Physics},
  volume={29},
  number={09},
  pages={1750028},
  year={2017},
  publisher={World Scientific}
}

@article{LackSobocinski,
  title={Adhesive and quasiadhesive categories},
  author={Lack, Stephen and Soboci{\'n}ski, Pawe{\l}},
  journal={RAIRO-Theoretical Informatics and Applications},
  volume={39},
  number={3},
  pages={511--545},
  year={2005},
  publisher={EDP Sciences}
}

@article{HornJackson,
  title={General mass action kinetics},
  author={Horn, Fritz and Jackson, Roy},
  journal={Archive for rational mechanics and analysis},
  volume={47},
  number={2},
  pages={81--116},
  year={1972},
  publisher={Springer}
}

@article{Feinberg,
  title={Chemical reaction network structure and the stability of complex isothermal reactors—I. The deficiency zero and deficiency one theorems},
  author={Feinberg, Martin},
  journal={Chemical engineering science},
  volume={42},
  number={10},
  pages={2229--2268},
  year={1987},
  publisher={Elsevier}
}

@article{AndersonCraciunKurtz,
  title={Product-form stationary distributions for deficiency zero chemical reaction networks},
  author={Anderson, David F and Craciun, Gheorghe and Kurtz, Thomas G},
  journal={Bulletin of mathematical biology},
  volume={72},
  number={8},
  pages={1947--1970},
  year={2010},
  publisher={Springer}
}

@article{Berry,
  title={Quantal phase factors accompanying adiabatic changes},
  author={Berry, Michael Victor},
  journal={Proceedings of the Royal Society of London. A. Mathematical and Physical Sciences},
  volume={392},
  number={1802},
  pages={45--57},
  year={1984},
  publisher={The Royal Society London}
}

@article{LonguetHiggins,
  title={The intersection of potential energy surfaces in polyatomic molecules},
  author={Longuet-Higgins, Hugh Christopher},
  journal={Proceedings of the Royal Society of London. A. Mathematical and Physical Sciences},
  volume={344},
  number={1637},
  pages={147--156},
  year={1975},
  publisher={The Royal Society London}
}

@article{WibergFenoglio,
  title     = {Heats of formation of {C$_4$H$_6$} hydrocarbons},
  author={Wiberg, Kenneth B and Fenoglio, Richard A},
  journal={Journal of the American Chemical Society},
  volume={90},
  number={13},
  pages={3395--3397},
  year={1968},
  publisher={ACS Publications}
}

@article{ProsenRossini,
  title={Heats of formation and combustion of 1, 3-butadiene and styrene},
  author={Prosen, Edward J and Rossini, Frederick D},
  journal={Journal of Research of the National Bureau of Standards},
  volume={34},
  number={1},
  pages={59},
  year={1945},
  publisher={National Institute of Standards and Technology (NIST)}
}

@article{NavaCarissan,
  title={On the ring-opening of substituted cyclobutene to benzocyclobutene: analysis of $\pi$ delocalization, hyperconjugation, and ring strain},
  author={Nava, Paola and Carissan, Yannick},
  journal={Physical Chemistry Chemical Physics},
  volume={16},
  number={30},
  pages={16196--16203},
  year={2014},
  publisher={Royal Society of Chemistry}
}

@article{BachDmitrenko,
  title={Strain energy of small ring hydrocarbons. Influence of C- H bond dissociation energies},
  author={Bach, Robert D and Dmitrenko, Olga},
  journal={Journal of the American Chemical Society},
  volume={126},
  number={13},
  pages={4444--4452},
  year={2004},
  publisher={ACS Publications}
}

@article{MuiGrunwald,
  title={Thermodynamics of conformational change in 1, 3-butadiene studied by high-temperature ultraviolet absorption spectroscopy},
  author={Mui, Philip W and Grunwald, Ernest},
  journal={Journal of the American Chemical Society},
  volume={104},
  number={24},
  pages={6562--6566},
  year={1982},
  publisher={ACS Publications}
}

@article{Anet1979,
  title={Conformational analysis of 1, 3-butadiene},
  author={Lipnick, Robert L and Garbisch, Edgar W},
  journal={Journal of the American Chemical Society},
  volume={95},
  number={19},
  pages={6370--6375},
  year={1973},
  publisher={ACS Publications}
}

@article{Baraban2018,
author = {Baraban, Joshua H. and Martin-Drumel, Marie-Aline and Changala, P. Bryan and Eibenberger, Sandra and Nava, Matthew and Patterson, David and Stanton, John F. and Ellison, G. Barney and McCarthy, Michael C.},
title = {The Molecular Structure of gauche-1,3-Butadiene: Experimental Establishment of Non-planarity},
journal = {Angewandte Chemie International Edition},
volume = {57},
number = {7},
pages = {1821-1825},
year = {2018}
}

@article{CooperWalters,
  title={The thermal isomerization of cyclobutene1, 2},
  author={Cooper, Walter and Walters, WD},
  journal={Journal of the American Chemical Society},
  volume={80},
  number={16},
  pages={4220--4224},
  year={1958},
  publisher={ACS Publications}
}

@article{Kurtz1970,
  title={Solutions of ordinary differential equations as limits of pure jump Markov processes},
  author={Kurtz, Thomas G},
  journal={Journal of applied Probability},
  volume={7},
  number={1},
  pages={49--58},
  year={1970},
  publisher={Cambridge University Press}
}

@article{Kurtz1972,
  title={The relationship between stochastic and deterministic models for chemical reactions},
  author={Kurtz, Thomas G},
  journal={The Journal of Chemical Physics},
  volume={57},
  number={7},
  pages={2976--2978},
  year={1972},
  publisher={American Institute of Physics}
}

@article{BraumanArchie,
  title={Energies of alternate electrocyclic pathways. Pyrolysis of cis-3, 4-dimethylcyclobutene},
  author={Brauman, John I and Archie Jr, William C},
  journal={Journal of the American Chemical Society},
  volume={94},
  number={12},
  pages={4262--4265},
  year={1972},
  publisher={ACS Publications}
}

@article{DolbierKoroniakHoukSheu,
  title={Electronic control of stereoselectivities of electrocyclic reactions of cyclobutenes: a triumph of theory in the prediction of organic reactions},
  author={Dolbier, William R and Koroniak, Henryk and Houk, KN and Sheu, Chimin},
  journal={Accounts of chemical research},
  volume={29},
  number={10},
  pages={471--477},
  year={1996},
  publisher={ACS Publications}
}

@inproceedings{Ehrig1973,
  title={Graph-grammars: An algebraic approach},
  author={Ehrig, Hartmut and Pfender, Michael and Schneider, Hans J{\"u}rgen},
  booktitle={14th Annual symposium on switching and automata theory (swat 1973)},
  pages={167--180},
  year={1973},
  organization={IEEE}
}

@article{Carpenter2025,
  title={An argument for abandoning the “allowed” and “forbidden” classification of electrocyclic reactions},
  author={Carpenter, Barry K},
  journal={Chemical Science},
  volume={16},
  number={10},
  pages={4264--4278},
  year={2025},
  publisher={Royal Society of Chemistry}
}

@article{BaldwinKIE,
  title={Diastereotopically distinct secondary deuterium kinetic isotope effects on the thermal isomerization of cyclobutene to butadiene},
  author={Baldwin, John E and Reddy, V Prakash and Schaad, Lawrence J and Hess, B Andes},
  journal={Journal of the American Chemical Society},
  volume={110},
  number={25},
  pages={8555--8556},
  year={1988},
  publisher={ACS Publications}
}

@article{KallelHouk,
  author    = {Storer, Joey W. and Raimondi, Laura and Houk, K. N.},
  title     = {Theoretical secondary kinetic isotope effects and the interpretation of transition state geometries. 2. The Diels-Alder reaction transition state geometry},
  journal   = {Journal of the American Chemical Society},
  volume    = {116},
  number    = {21},
  pages     = {9675--9683},
  year      = {1994},
  publisher = {American Chemical Society}
}

@article{Leigh1996,
author = {Leigh, William J. and Postigo, J. Alberto and Zheng, K.C.},
title = {Cyclobutene photochemistry. Adiabatic photochemical ring opening of alkylcyclobutenes},
journal = {Canadian Journal of Chemistry},
volume = {74},
number = {6},
pages = {951-964},
year = {1996}
}

@book{MacLane1998,
  author    = {Mac Lane, Saunders},
  title     = {Categories for the Working Mathematician},
  series    = {Graduate Texts in Mathematics},
  volume    = {5},
  publisher = {Springer},
  address   = {New York},
  year      = {1978}
}

@book{Leinster2014,
  author    = {Leinster, Tom},
  title     = {Basic Category Theory},
  series    = {Cambridge Studies in Advanced Mathematics},
  volume    = {143},
  publisher = {Cambridge University Press},
  year      = {2014}
}

@article{mayer1996rovibronic,
  title={Rovibronic coupling in the {Na}$_3$ {B} system},
  author={Mayer, M. and Cederbaum, L.S. and K{\"o}ppel, H},
  journal={The Journal of chemical physics},
  volume={104},
  number={22},
  pages={8932--8942},
  year={1996},
  publisher={American Institute of Physics}
}

@article{meiswinkel1991pseudo,
  title={A pseudo-Jahn-Teller treatment of the {B} system of {Na}$_3$},
  author={Meiswinkel, R and K{\"o}ppel, H},
  journal={Zeitschrift f{\"u}r Physik D Atoms, Molecules and Clusters},
  volume={19},
  number={4},
  pages={63--66},
  year={1991},
  publisher={Springer}
}

@article{ahn2019stiefel,
  title={Stiefel--Whitney classes and topological phases in band theory},
  author={Ahn, Junyeong and Park, Sungjoon and Kim, Dongwook and Kim, Youngkuk and Yang, Bohm-Jung},
  journal={Chinese Physics B},
  volume={28},
  number={11},
  pages={117101},
  year={2019}
}

@article{domcke2012role,
  title={Role of conical intersections in molecular spectroscopy and photoinduced chemical dynamics},
  author={Domcke, Wolfgang and Yarkony, David R},
  journal={Annual review of physical chemistry},
  volume={63},
  number={1},
  pages={325--352},
  year={2012},
  publisher={Annual Reviews}
}

@article{BornOppenheimer1927,
  author  = {Born, Max and Oppenheimer, Robert},
  title   = {Zur {Q}uantentheorie der {M}olekeln},
  journal = {Annalen der Physik},
  volume  = {389},
  number  = {20},
  pages   = {457--484},
  year    = {1927}
}

@book{Wolfsberg2009,
  author    = {Wolfsberg, Max and Van Hook, W. Alexander and Paneth, Piotr},
  title     = {Isotope Effects in the Chemical, Geological, and Bio Sciences},
  publisher = {Springer},
  year      = {2009}
}

@book{Kato1966,
  author    = {Kato, Tosio},
  title     = {Perturbation Theory for Linear Operators},
  series    = {Grundlehren der mathematischen Wissenschaften},
  volume    = {132},
  publisher = {Springer},
  year      = {1966}
}

@article{MeadTruhlar1979,
    author = {Mead, C. Alden and Truhlar, Donald G.},
    title = {On the determination of Born–Oppenheimer nuclear motion wave functions including complications due to conical intersections and identical nuclei},
    journal = {The Journal of Chemical Physics},
    volume = {70},
    number = {5},
    pages = {2284-2296},
    year = {1979},
    month = {03}
}

@article{Simon1983,
  title = {Holonomy, the Quantum Adiabatic Theorem, and Berry's Phase},
  author = {Simon, Barry},
  journal = {Phys. Rev. Lett.},
  volume = {51},
  issue = {24},
  pages = {2167--2170},
  numpages = {0},
  year = {1983},
  month = {Dec},
  publisher = {American Physical Society}
}

@article{PanatiSpohnTeufel2003,
  author  = {Panati, Gianluca and Spohn, Herbert and Teufel, Stefan},
  title   = {Space-adiabatic perturbation theory},
  journal = {Advances in Theoretical and Mathematical Physics},
  volume  = {7},
  number  = {1},
  pages   = {145--204},
  year    = {2003}
}

@article{BonhoefferHarteck1929,
  author  = {Bonhoeffer, K. F. and Harteck, P.},
  title   = {{\"U}ber Para- und Orthowasserstoff},
  journal = {Zeitschrift f{\"u}r Physikalische Chemie},
  volume  = {5},
  number  = {1},
  year    = {1929}
}

@article{BowersWeitekamp1986,
  author  = {Bowers, C. Russel and Weitekamp, Daniel P.},
  title   = {Transformation of Symmetrization Order to Nuclear-Spin
             Magnetization by Chemical Reaction and Nuclear Magnetic Resonance},
  journal = {Physical Review Letters},
  volume  = {57},
  number  = {21},
  pages   = {2645},
  year    = {1986}
  }

@article{BowersWeitekamp1987,
  author  = {Bowers, C. Russel and Weitekamp, Daniel P.},
  title   = {Parahydrogen and Synthesis Allow Dramatically Enhanced
             Nuclear Alignment},
  journal = {Journal of the American Chemical Society},
  volume  = {109},
  number  = {18},
  pages   = {5541--5542},
  year    = {1987}
}

@article{Adams2009,
  author  = {Adams, R. W. and Aguilar, J. A. and Atkinson, K. D. and
             Cowley, M. J. and Elliott, P. I. P. and Duckett, S. B. and
             Green, G. G. R. and Khazal, I. G. and L{\'o}pez-Serrano, J. and
             Williamson, D. C.},
  title   = {Reversible Interactions with Para-Hydrogen Enhance NMR Sensitivity
             by Polarization Transfer},
  journal = {Science},
  volume  = {323},
  number  = {5922},
  pages   = {1708--1711},
  year    = {2009}
}

@article{Woolley1978,
  author  = {Woolley, R. G.},
  title   = {Must a Molecule Have a Shape?},
  journal = {Journal of the American Chemical Society},
  volume  = {100},
  number  = {4},
  pages   = {1073--1078},
  year    = {1978}
}

@article{Amann1991,
  author  = {Amann, A.},
  title   = {Chirality: A Superselection Rule Generated by the Molecular
             Environment?},
  journal = {Journal of Mathematical Chemistry},
  volume  = {6},
  number  = {1},
  pages   = {1--15},
  year    = {1991}
}

\end{document}